\def\llncs{0}
\def\fullpage{1}
\def\anonymous{0}
\def\draft{1}
\def\submission{0}

\ifnum\submission=1
\def\llncs{1}
\def\draft{0}
\def\anonymous{1}
\def\fullpage{0}
\fi

\ifnum\llncs=1
    \documentclass{llncs}
    \ifnum\fullpage=1
    \usepackage{fullpage}
    \fi
\else
    \documentclass[letterpaper]{article}
    \ifnum\fullpage=1
    \usepackage{fullpage}
    \fi
    \usepackage{microtype}
\fi

\usepackage{verbatim}
\usepackage{authblk}
\usepackage{graphicx} 
\IfFileExists{physics.sty}{%
  \usepackage{physics}%
}{%

  \providecommand{\norm}[1]{\left\lVert ##1\right\rVert}

}
\usepackage{xcolor} 
\usepackage{amsmath}
\allowdisplaybreaks[4] 
\usepackage{amssymb}
\usepackage{mathtools}

 \usepackage{amsthm}
 \usepackage{thmtools}
\usepackage{enumitem} 
\IfFileExists{dsfont.sty}{%
  \usepackage{dsfont}%
}{%
}
\usepackage[colorlinks=true,linkcolor=magenta,citecolor=blue,pagebackref=true,hypertexnames=false,pdftex,pdfpagelabels,bookmarks,hyperindex,hyperfigures]{hyperref}
\usepackage{dashbox}
\usepackage{fancybox,framed}
\usepackage[skins]{tcolorbox}
\usepackage{subcaption}
\IfFileExists{breakcites.sty}{\usepackage{breakcites}}{}
\usepackage{comment} 
\let\subparagraph\paragraph
\usepackage{titlesec}
\titleformat*{\paragraph}{\normalsize\bfseries}

\usepackage[capitalise,nameinlink]{cleveref}

\ifnum\draft=1
	\newcommand{\minki}[1]{\textcolor{blue}{$\langle\langle$Minki: #1$\rangle\rangle$}}
	\newcommand{\yixin}[1]{\textcolor{red}{$\langle\langle$Yixin: #1$\rangle\rangle$}}
\else
	\newcommand{\minki}[1]{}
	\newcommand{\yixin}[1]{}
\fi

\ifnum\llncs=0
	\newtheorem{theorem}{Theorem}[section]
	\newtheorem{lemma}[theorem]{Lemma}
	\newtheorem{corollary}[theorem]{Corollary}
	
	\newtheorem{definition}[theorem]{Definition}

	\newtheorem{claim}[theorem]{Claim}

	\theoremstyle{remark}
	\newtheorem{remark}[theorem]{Remark}

\else
	\spnewtheorem{algorithm}{Algorithm}{\bfseries}{\rmfamily}

	\spnewtheorem{claim}{Claim}{\bfseries}{\itshape}
\fi
\crefname{appendix}{Appendix}{Appendices}
\Crefname{appendix}{Appendix}{Appendices}

\newcommand{\cA}{\mathcal{A}}

\newcommand{\N}{\mathbb{N}}

\newcommand{\cN}{\mathcal{N}}

\newcommand{\eps}{\varepsilon}

\renewcommand{\epsilon}{\varepsilon}

\newcommand{\R}{\mathbb{R}}
\newcommand{\Z}{\mathbb{Z}}
\newcommand{\F}{\mathbb{F}}
\newcommand{\E}{\mathbb{E}}
\newcommand{\Prob}{\mathbb{P}}

\newcommand{\cG}{\mathcal{G}}

\newcommand{\cK}{\mathcal{K}}
\newcommand{\cL}{\mathcal{L}}
\newcommand{\cR}{\mathcal{R}}

\providecommand{\ip}[2]{}
\renewcommand{\ip}[2]{\left\langle #1,#2\right\rangle}
\newcommand{\opnorm}[1]{\left\lVert #1\right\rVert_{\mathrm{op}}}
\newcommand{\Frob}[1]{\left\lVert #1\right\rVert_{\mathrm{F}}}
\newcommand{\dist}{\operatorname{dist}}
\newcommand{\poly}{\operatorname{poly}}
\newcommand{\polylog}{\operatorname{polylog}}
\newcommand{\negl}{\operatorname{negl}}
\newcommand{\ind}{\mathbf{1}}
\newcommand{\Id}{I_n}

\newcommand{\SVP}{\mathsf{SVP}}
\newcommand{\CVP}{\mathsf{CVP}}
\newcommand{\BDD}{\mathsf{BDD}}
\newcommand{\DGS}{\mathsf{DGS}}

\DeclareMathOperator{\GL}{GL}

\ifnum\draft=1
	
\else
	
\fi

\ifnum\llncs=1
  \spnewtheorem{convention}[theorem]{Convention}{\bfseries}{\itshape}
\else
  
\fi

\crefname{algorithm}{Algorithm}{Algorithms}
\Crefname{algorithm}{Algorithm}{Algorithms}
\newenvironment{algorithmblock}[1]{%
  \refstepcounter{algorithm}%
  \par\addvspace{0.8em}%
  \noindent\begin{minipage}{\linewidth}%
  \hrule\smallskip
  \textbf{Algorithm \thealgorithm: #1}\par\smallskip
  \begin{enumerate}[
    leftmargin=1.62em,
    label=\arabic*.,
    itemsep=0.13em,
    topsep=0.15em
  ]
}{%
  \end{enumerate}\smallskip\hrule
  \end{minipage}\par\addvspace{0.8em}%
}

\title{Solving the Shortest Vector Problem in $2^{0.6039n}$ Time \\via Mid-point Hessian}
\hypersetup{
  pdftitle={Solving the Shortest Vector Problem in time 2^{0.6039n} via Mid-point Hessian},
  pdfauthor={Minki Hhan}
}
\date{}

\ifnum\llncs=1
    \ifnum\anonymous=1
        \author{}
        \institute{}
    \else
        \author{
        Minki Hhan\inst{1} 
        }
        \institute{
        KAIST, Daejeon, Korea
        \\\email{minkihhan@kaist.ac.kr}
        }
    \fi
\else
    \author{Minki Hhan}
    \affil{{\small KAIST, Daejeon, Korea}
    \authorcr{\small minkihhan@kaist.ac.kr}}
    
\fi

\begin{document}

\maketitle

\begin{abstract}
We present randomized algorithms for the shortest vector problem (SVP). For the $n$-dimensional lattice $\mathcal L$, our algorithms solve SVP in time $2^{0.6039n+o(n)}$ classically and $2^{0.5411n+o(n)}$ quantumly and space $2^{0.5n+o(n)}$, improving the previous best algorithm running in $2^{n+o(n)}$ time and space of Aggarwal, Dadush, Regev, and Stephens-Davidowitz [STOC'15].

Our algorithms heavily use the property of the Hessian of the periodic Gaussian function at the half shortest vector: For a shortest vector $v \in \mathcal L$, the Hessian at $v/2$ has the eigenvector close to $v$, which can be used to recover $v$ using the (preprocessing) bounded distance decoding algorithm. Given the periodicity modulo $\mathcal L$, the candidate midpoints are indexed by the parity classes in $\mathcal L/2\mathcal L$. Our algorithm searches for the class of a shortest vector by estimating the corresponding Hessians using discrete Gaussian samples. 

We optimize the algorithm using random sublattice cosets and various sampling technique, achieving the final complexity. The optimization techniques may be of independent interest.

\end{abstract}

\vfill
\paragraph{AI use disclosure.} Most results of this paper are discovered with the assistance of ChatGPT 5.5 Pro and ChatGPT 5.6 Sol Ultra. While the author suggested several directions and optimizations, all the technical details are discovered by AI, and verified by the author. The manuscript is written by the author based on the initial draft generated by AI. The author takes full responsibility for the content. All errors in the manuscript are likely due to the author's human errors.
\clearpage

    \newpage
    \setcounter{tocdepth}{2}
    \tableofcontents
    \newpage

\section{Introduction}\label{sec:introduction}

An $n$-dimensional lattice
$\cL=\cL(b_1,\ldots,b_n)$ is the set of all integer combinations of
linearly independent vectors $b_1,\ldots,b_k\in\R^n$. For the sake of simplicity, we only consider the full rank lattice satisfies $k=n$.
The (search) shortest vector problem ($\SVP$) asks to find a nonzero vector of
minimum norm, given a basis of $\cL$. 

There are two faces of lattice problems including $\SVP$. 
Approximate variants of $\SVP$ have long served as algorithmic tools in computational number theory, integer programming, and cryptanalysis \cite{LLL82,STOC:Kannan83,Len83,FOCS:Shamir82,LO85}. On the other hand, 
the exact problem and its small approximation factor version are known to be hard \cite{STOC:Ajtai98,STOC:BCGR23,STOC:AggSte18,JACM:Khot05,STOC:HavReg07}.
Given the close relation between the lattice problems and many recent cryptographic primitives \cite{JACM:Regev09,STOC:BLPRS13,FOCS:MicReg04,C:Regev06,STOC:AjtDwo97,STOC:GenPeiVai08},
lattice problems have become one of the most promising foundations for post-quantum cryptography.

$\SVP$ remains a fundamental benchmark for our understanding of the complexity of
lattice problems. Researchers have discovered various worst-case algorithms \cite{C:HanSte07,STOC:MicVou10,STOC:AjtKumSiv01,STOC:Kannan83,NguVid08,EPRINT:PujSte09,SODA:MicVou10,CCL18} and 
heuristic or average-case algorithms \cite{NguVid08,EC:PouShe26,SODA:BDGL16}. 
To date, however, the best provable
worst-case running time was $2^{n+o(n)}$, with the same space
complexity~\cite{STOC:ADRS15} using a new efficient discrete Gaussian sampling algorithm.
Later, \cite{SICOMP:ACKS25} gave several time-space tradeoffs and quantum speedup,
including $2^{1.669n+o(n)}$ time and $2^{n/2+o(n)}$ space and $2^{0.950n+o(n)}$ and $2^{0.835n+o(n)}$ quantum time without and with quantum random access memory (QRAM), respectively.

We give a new class of worst-case algorithms for
$\SVP$, significantly improving the previous classical and quantum algorithms \cite{STOC:ADRS15,SICOMP:ACKS25}. Our main results are summarized as follows.

\begin{theorem}\label{thm:intro-main}
There are randomized classical and quantum algorithms
that solve Search-$\SVP$ with probability at least $2/3$ in expected
time
$  2^{0.603867n+o(n)}$ classically and $2^{0.541051n+o(n)}$ quantumly
and space
$  2^{0.5n+o(n)},$
up to factors polynomial in the input length. The quantum algorithm requires a QRAM of size $2^{0.36036n+o(n)}$.
\end{theorem}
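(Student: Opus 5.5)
The proof will assemble three ingredients: a Hessian identity for the periodic Gaussian at half-lattice points, a discrete Gaussian sampler that gives unbiased estimators of the relevant quantities, and a search over parity classes that is sped up by sublattice cosets. Concretely, set $f(x)=\sum_{y\in\cL}\rho_s(x-y)$. By the Poisson summation formula,
\[
f(x)\propto\sum_{w\in\cL^*}\rho_{1/s}(w)\cos(2\pi\ip{w}{x}),
\]
so $\nabla^2 f(x)$ is a weighted sum of $-4\pi^2\,ww^{\top}\cos(2\pi\ip{w}{x})\rho_{1/s}(w)$ over $w\in\cL^*$.

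The key step is to evaluate this at $x=v/2$, where $v$ is a shortest vector, and pick the width $s$ at the smoothing-type threshold relative to $\lambda_1$. Working in the primal picture, the terms $\rho_s(v/2-y)$ are dominated by $y\in\{0,v\}$ together with the other vectors in the same coset $v+2\cL$. For a shortest vector $v$, the only vectors of $v+2\cL$ of length $\lambda_1$ are $\pm v$. Since $y=0$ and $y=v$ are the two points nearest to $v/2$, the Hessian at $v/2$ has a dominant contribution of the form $c\,vv^{\top}$ plus a perturbation that I will bound in operator norm. A Davis--Kahan argument then shows that the top eigenvector is within small angle of $v/\|v\|$. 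A preprocessing $\BDD$ oracle, built once in $2^{n/2+o(n)}$ space, then rounds a suitable multiple of that eigenvector to $v$ exactly.

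The algorithm enumerates candidate classes $c\in\cL/2\cL$ (equivalently, midpoints $c/2$). For each class it estimates $\nabla^2 f(c/2)$ from $N$ discrete Gaussian samples $X_i\sim D_{\cL,s}$, using the empirical sums $\sum_i\cos(\ldots)X_iX_i^{\top}$. Matrix Bernstein bounds the estimation error. Each candidate eigenvector is then passed to $\BDD$, and the shortest nonzero output is kept.

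A naive search costs $2^n\cdot N$. The main work, and the step I expect to be the obstacle, is the optimized accounting that brings this down to $2^{0.6039n}$. My plan is to restrict to a random sublattice $\cL'\subset\cL$ of index $2^{\alpha n}$ that contains $v$ with probability $2^{-o(n)}$ after repetition. I will also reuse a single batch of about $2^{n/2}$ samples, generated by the \cite{STOC:ADRS15}-style sampler, across all cosets. Because the Hessian estimator is linear in indicator-weighted outer products, all class estimates can be computed by a Walsh--Hadamard-type transform over $\F_2^k$.

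Balancing these gives an optimization over $\alpha$, $s$, and the sample count, and I expect the optimum to land at the constants $0.603867$ classically and $0.541051$ quantumly. For the quantum bound, I would replace the classical search over classes with Grover search or amplitude estimation, storing the samples in a QRAM of size $2^{0.36036n}$. The delicate part is proving that the off-target terms in the Hessian stay below the signal when the sample count is only about $2^{n/2}$. I would handle this with Banaszczyk-type tail bounds on $\rho_s(\cL\setminus\{0,v\})$ together with a lattice-point counting bound on the shell $\{y:\|v/2-y\|\le r\}$. Finally, Markov's inequality turns the expected running time into success probability $2/3$.
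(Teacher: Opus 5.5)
\textbf{There is a genuine gap: your proposal has no working mechanism to get below $2^n$.} Your ingredients are the paper's: the Poisson form of the Hessian at $Bu/2$, rank-one dominance in the parity class of $v$, top eigenvector plus preprocessing $\BDD$, and Walsh--Hadamard batching over one reused batch of at most $2^{n/2}$ samples. By themselves they give only $2^{n+o(n)}$, which is \cref{thm:full-scan}. Also, the samples must come from $D_{\cL^*,1/s}$, not $D_{\cL,s}$.

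The step you propose for beating $2^n$ fails:
\begin{itemize}[nosep]
\item A full-rank sublattice $\cL'$ still has $|\cL'/2\cL'|=2^n$ parity classes, so restricting to it saves nothing by itself.
\item If you take $2\cL\subseteq\cL'$ so that only $2^{(1-\alpha)n}$ classes of $\cL/2\cL$ remain, then $v\in\cL'$ with probability only about $2^{-\alpha n}$. Constant success then needs $2^{\alpha n}$ repetitions, and the total is again $2^n$.
\end{itemize}

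The missing idea is to put the sublattice on the dual side. The paper keeps only samples in a coset $\Lambda_j=\{X\in\cL^*:J_P(X)=j\}$ of an index-$2^h$ sublattice, with $h=\lfloor\chi n\rfloor$, $\ell=n-h$, and uniformly random $P\in\GL_n(\F_2)$, $j\in\F_2^h$. Using $2^h\ind_{\{y=j\}}=\sum_\alpha(-1)^{\alpha\cdot(y+j)}$, each of the $2^\ell$ resulting matrices equals, up to a multiple of $\Id$, $\zeta_{t,j}^{-1}\sum_\alpha(-1)^{\alpha\cdot j}\cA_{t,d}(P^{-1}(\alpha,\theta))$ (\cref{lem:affine-coset-hessian}). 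Consequences:
\begin{itemize}[nosep]
\item The class of $v$ always appears in exactly one of these matrices, with no probability loss. It may carry sign $-1$, which is why both extreme eigenvectors are tried.
\item Averaging over $j$ kills the cross terms. Averaging over $P$ then bounds the off-target Frobenius mass by about $2^{h-n}\sum_u\Frob{\cA_{t,d}(u)}^2$ (\cref{lem:hessian-square-mass}).
\item So off-target classes are controlled by random signs and a random subspace, not by the sample count. The sample count only governs the estimation error at scale $\mu_{t,d}/n$, which needs about $2^{2tn}$ samples.
\end{itemize}

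\textbf{Even the dual coset alone does not reach $0.603867$.} Rejection sampling into $\Lambda_j$ costs $2^{(\chi+2t)n}$ with $t>t_0$. Balancing against the $2^{(1-\chi)n}$ classes gives only $2^{(1/2+t_0)n}=2^{0.73147n}$. Two further ideas are absent from your plan:
\begin{itemize}[nosep]
\item \emph{Importance sampling at a wider width.} Sample directly on $\Lambda_j$ at width $\xi_R$. This is legitimate because $\xi_R>\sqrt2\,\eta_{1/2}(\Lambda_0)$ for random $P$ once $\chi<g_2(R)$ (\cref{lem:refined-source-coset-smoothing,lem:parity-source-mass}). Then reweight by $w(X)=\rho_{\xi_r}(X)/\rho_{\xi_R}(X)$, at a variance cost $2^{\iota(r,R)n}$. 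Because only second moments are controlled, the paper takes an entrywise median of independent means.
\item \emph{A refined shell count.} Your count of lattice points around $v/2$ must become a Kabatiansky--Levenshtein count on $v+2\cL$, whose points other than $\pm v$ have norm $\ge\sqrt3\lambda$ and pairwise distance $\ge2\lambda$. This lets the target width go below $t_0$, down to $r\ge 9/50$ (\cref{lem:parity-rank-one}).
\end{itemize}
Optimizing $\max\{2r+\iota(r,R),1-\chi\}$ then gives $0.603867$.

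\textbf{Space and the quantum bound are also not covered.} The classical algorithm uses $2^{(2r+\iota)n}\approx 2^{0.604n}$ samples, more than your $2^{n/2}$ budget. Keeping space at $2^{n/2}$ requires Bernoulli sparsification with $\Pr[Z=1\mid X]\propto w(X)$ (capped at $1$), which stores only $2^{2rn+o(n)}$ samples. It also requires processing the classes in blocks of $2^{\lfloor n/2\rfloor}$.

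For the quantum bound, Grover over individual classes does not give $0.541051$, since each class costs $2^{2rn}$ to evaluate coherently. The paper instead runs quantum minimum finding over $\theta''\in\F_2^{\ell-b}$. Each oracle call performs a Walsh--Hadamard block of size $2^b$ with $b=\lfloor 2rn\rfloor$. The parameters $(r,R,\chi)$ are re-optimized so that $\max\{2r+\iota(r,R),(1-\chi+2r)/2\}=0.541051$. The QRAM holds the $2^{2rn}=2^{0.36036n}$ stored samples.
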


\begin{remark}[Concurrent works]
    During the preparation of this manuscript, OpenAI announced an improvement of the sphere packing bound \cite{OpenAI2026TenAdvances}. Plugging that parameter, the time complexity of our algorithm is reduced to $2^{0.60221n + o(n)}$ classically and $2^{0.53925n+o(n)}$ quantumly, according to ChatGPT 5.6 Sol's calculation. We did not present these complexities because the result of OpenAI and the numerical calculation for those parameters are not verified (by the author, at least). We also noticed the concurrent works \cite{cryptoeprint:2026/1587,Kim26} right before publicizing the manuscript, which obtains $2^{(0.5+\frac{\beta^2}{4e\ln 2})n+o(n)}\approx 2^{0.7134n+o(n)}$ time and $2^{0.5n+o(n)}$ space algorithm for $\beta = 2^{0.4014\ldots}$, matching \cref{thm:direct-row}, with a completely different approach.
\end{remark}

We discuss some direct applications of \cref{thm:intro-main}. Obviously, our algorithm solves the approximate or unique $\SVP$ as well as its gap version in the same time and space complexity. Our algorithm beats the previous best time records 
\cite{RSA:WeiLiuWan15,EC:AggLiSte21,cryptoeprint:LWXZ11} $2^{0.802n+o(n)}$ time and
$2^{0.401n+o(n)}$ space for the approximate $\SVP$ with a constant factor, although they use less space.

This improvement can be readily used in other applications. For example, the best-known polynomial factor approximate $\SVP$ \cite[Theorems 5.3]{EC:AggLiSte21} uses one call to a constant-factor $\SVP$ oracle in a small dimension. Originally the proof uses the oracle from \cite{cryptoeprint:LWXZ11} with an exponent $0.802n$. Replacing this to ours, we improve the time complexity for solving $\widetilde O (n^c)$-$\SVP$ from $2^{\frac{n}{2c+1.24}}$ to $2^{\frac{n}{2c+1.65}}$.

Another direct application is $\BDD$ or exact $\CVP$ with the distance promise. Using Kannan's embedding, $\CVP$ for $(t,\cL)$ with the promise $\dist(t,\cL) < \sqrt3 \lambda_1(\cL)/2$ can be reduced to $\SVP$ with one additional dimension (see e.g., \cite{IPL:LiuWanXuZhe14}). Thus $\BDD$ and the $\CVP$ with the above promise can be solved in the same time and space using our algorithm. Previously, \cite{CCC:DRS14,STOC:ADRS15} give $2^{n/2+o(n)}$ time algorithm for $\alpha<0.422$.

Lastly, using the dimension-preserving reduction from centered $\DGS$ to $\SVP$
\cite{SODA:StephensDavidowitz16}, we can sample one (or polynomially many) discrete Gaussian sample with an arbitrary parameter in the same time. This partly answers the open question in
\cite{STOC:ADRS15} for the centered discrete Gaussian sampling below smoothing, albeit with a worse time complexity than $2^{n/2+o(n)}$. A concurrent work \cite{Kim26} gives a $2^{n/2+o(n)}$ time discrete Gaussian sampling algorithm for arbitrary parameters.

\subsection{Technical overview}
Let $\cL$ be a $n$-dimensional full-rank lattice with basis $B$ and $s>0$ be a width parameter. 
Its dual lattice $\cL^*=\{y: \ip{y}{x} \in \Z \text{ for every }x \in \cL\}$ has the basis $B^{-T}.$
Define 
$\rho_s(x):=\exp\!\left(-\pi\frac{\norm{x}^2}{s^2}\right)$ and let $\rho_s(A):=\sum_{x\in A}\rho_s(x).$ 
We define the (centered) discrete Gaussian distribution
$D_{\cL,s}(x)
  :=\frac{\rho_s(x)}{\rho_s(\cL)}.$
The periodic Gaussian function \cite{JACM:AhaReg05} $F_s : \R^n \to \R$ is defined by
\[
  F_s(z)
  :=
  \frac{\rho_s(\cL+z)}
       {\rho_s(\cL)}
       =
  \E_{X\sim D_{\cL^*,1/s}}
  \left[
  e^{2\pi i \langle X,z\rangle}
  \right]
\]
where the last equality is due to the Poisson summation formula. It is clear that $F_s$ is $\cL$-periodic, i.e., $F_s(z)=F_s(z+v)$ for $v\in \cL$.

\paragraph{At the midpoint, the Hessian points the way.}
Our algorithm starts with studying its Hessian \cite{RegSte17,CCC:DRS14}, which also has two representations
\begin{align}\label{eqn: intro_Hessian_eigen}
        \nabla^2F_s(z)+\frac{2\pi}{s^2}F_s(z)\Id
  =
  \frac{4\pi^2}{s^4\rho_s(\cL)}
  \sum_{y\in\cL}
  (y-z)(y-z)^T
  e^{-\pi\norm{y-z}^2/s^2}
\end{align}
by differentiating the first representation, and by differentiating the second representation
\begin{align}\label{eqn: intro_Hessian_approx}
    \nabla^2F_s(z)
  =
  -4\pi^2
  \E_{X\sim D_{\cL^*,1/s}}
  \left[
    XX^T e^{2\pi i\ip{X}{z}}
  \right].
\end{align}
From these two representations, we observe the following two properties. 
First, let $v$ be a shortest vector with $\norm{v}=\lambda$ and evaluate $\nabla^2 F_s(v/2)$ in \cref{eqn: intro_Hessian_eigen}, we can see that
\[
\nabla^2F_s(v/2) + aI = \frac{4\pi^2}{s^4 \rho_s(\cL)} \left(\frac{v v^T}{2}e^{-\pi \lambda^2/4s^2 } + \sum_{y \in \cL\setminus \{0,v\} } (y-v/2)(y-v/2)^T e^{-\pi \norm{y-v/2}^2/s^2}\right)
\]
where $a:=2\pi F_s(v/2)/s^2$ which is unimportant in the spectral analysis. Noting that $\norm{y-v/2}$ for $y\neq 0,v$ is much larger than $\norm{v/2}=\lambda/2$, we can expect that $\nabla^2F_s(v/2) + aI$ is close to a multiple of $vv^T$. For appropriately chosen $s$, this intuition can be formalized by the following statement for the normalized eigenvector $\tilde v$ of $\nabla^2F_s(v/2)$ corresponding to the largest eigenvalue:
\begin{center}
    $\tilde v$ is \emph{inverse polynomially close} to $v/\norm{v}$. 
\end{center}
Then, given $\tilde v$, we guess $\lambda$ (which only takes polynomial iterations) and solve the bounded distance decoding ($\BDD$) problem for the target vector $\lambda \tilde v$. This $\BDD$ step can be relatively quick, thanks to the preprocessing BDD algorithm in \cite{SICOMP:ACKS25}, which solves each BDD instance with inverse polynomial distances in time $2^{o(n)}$ after $2^{0.5n+o(n)}$ time preprocessing.

\paragraph{One Hessian is enough if we know where to look.}
This gives our very first result (\cref{thm:direct-hessian-svp}) with time complexity $2^{1.463n+o(n)}$ and space complexity $2^{0.5n+o(n)}$.
Suppose that for a shortest vector $v$, we know a representative $w$ of the parity class $v+2\cL$.
We \emph{compute} the eigenvector $\tilde w$ of $\nabla^2F_s(w/2)$, and solve the BDD problem for it using (preprocessing) BDD. But, how to compute $\nabla^2F_s(w/2)$?

Here the second representation \cref{eqn: intro_Hessian_approx} comes in. We use the algorithm in \cite{STOC:ADRS15} that samples $N=2^{n/2}$ elements, say $X_1,...,X_{N}$, from the discrete Gaussian distribution in time $2^{n/2+o(n)}$ on $\cL^*$. Then, given $w/2$, we use the estimator
\[
\widehat{\nabla^2 F_s}(w/2) = 
  \frac{-4\pi^2}{N}
  \sum_{i=1}^N X_i X_i^T e^{2\pi i\ip{X_i}{w/2}}.
\]
Concentration inequalities and results from lattice literature show that about $2^{2t_0n+o(n)}$ samples for $t_0 = 2^{0.802}/4e \ln 2 = 0.2314$ suffice to accurately estimate the actual $\nabla^2 F_s(w/2)$ as well as its eigenvectors. This leads to the time complexity $|\cL/2\cL| \cdot 2^{2t_0n} =2^{1.463n +o(n)}$ by enumerating all the parity class in $\cL/2\cL$. The $2^{n/2+o(n)}$ space is required in the discrete Gaussian sampling.

\paragraph{Walsh-Hadamard looks everywhere at once.}
The time can be reduced further. Write $w=Bu$.
Since $B^TX\in\Z^n$,
\[
  e^{2\pi i\ip{X}{Bu/2}}
  =
  e^{\pi i u^TB^TX}
  =
  (-1)^{u^T(B^TX\bmod2)}.
\]
With some work, the fast Walsh-Hadamard transform can be applied to compute $\widehat{\nabla^2 F_s}(Bu/2)$ simultaneously for different $u$'s. Carefully choosing the coordinates leads to $2^{n+o(n)}$ time algorithm in \cref{thm:full-scan} while preserving $2^{0.5n+o(n)}$ space.

\paragraph{A random coset narrows the view.}
The bottleneck $2^n$ of the above strategy is the size of the search space $\cL/2\cL$.
Let $X\in \cL^*$.
Choose a random invertible linear map $P\in\GL_n(\F_2)$ and divide
\[
  P(B^TX\bmod 2)\in\F_2^n
\]
into $h$ fixed coordinates and $\ell=n-h$ remaining coordinates.
We then choose a random value $j\in\F_2^h$ for the first $h$ coordinates,
and use the discrete Gaussian samples $X$ that agree to this value. These samples belong to an affine coset of an index-$2^h$ sublattice of $\cL^*$. It turns out that the discrete Gaussian samples (for an appropriate width) are almost uniformly distributed over different cosets, thus this collective sampling adds $2^h$ multiplicative factor in the time complexity in sampling. That is, the estimation can be anyway done.

At first sight, it may seem that choosing one random coset could lose the spectral property of Hessian.
This is not the case. Conditioning the first $h$ coordinates of the samples on the dual lattice side corresponds to \emph{summing} over the corresponding $h$ coordinates of $u$ as shown in \cref{eqn: cGdual}. 
Thus, the shortest-vector contribution still appears in one of the $2^\ell$ coset Hessians!
Given this intuition, we can indeed apply almost the same strategy to find a shortest vector.
The overall time complexity becomes
\[
2^{2t_0n+h+o(n)} + 2^{\ell +o(n)}
\]
where the first term is to sample $2^{2t_0n}$ discrete Gaussian samples in the target coset (as we get roughly one coset Gaussian sample out of $2^h$ lattice Gaussian samples), and $2^\ell$ is the number of outputs to be checked. Balancing these costs gives $2^{0.73147n+o(n)}$ algorithm as shown in \cref{thm:direct-row}.

\paragraph{May the fourth weigh what matters.}
The main loss in the above algorithm is due to the inefficient rejection step in the discrete Gaussian sampling over the sublattice coset. The last algorithm uses the following observation: For two probabilistic distributions $D$ and $E$, where we know how to sample only from $E$ and know their probabilistic density functions $p$ and $q$, then we have
\[
\E_{X \sim D}[f(X)] = \sum_X p(X)f(X) = \sum_X q(X)\cdot \frac{p(X)}{q(X)}f(X) = \E_{X \sim E}\left[\frac{p(X)}{q(X)} f(X)\right]
\]
so that we can estimate the statistics regarding $D$ only using samples from $E$. We use this idea.

We first show that for a slightly wider width $s_R$ than original, we can sample discrete Gaussian directly on the target sublattice coset $\Lambda$.
Of course, samples from the discrete Gaussian with a wider width $s_R$ do not have exactly the distribution required for Hessian estimation. We therefore use the above idea and assign each sample $X$ the importance weight 
for the narrower target width $s_r<s_R$
\[
  w(X)
  :=
  \frac{\rho_{s_r}(X)}
       {\rho_{s_R}(X)} = \frac{\Pr_{D_{\Lambda,s_r}}[X]\rho_{s_r}(\Lambda)}{\Pr_{D_{\Lambda,s_R}}[X]\rho_{s_R}(\Lambda)} \propto \frac{\Pr_{D_{\Lambda,s_r}}[X]}{\Pr_{D_{\Lambda,s_R}}[X]}, 
\]
which is easily computable.
Similarly to the above, we have
\[
  \E_{X\sim D_{\Lambda,s_r}}[f(X)]
  =
  \frac{
    \E_{X\sim D_{\Lambda,s_R}}
    \left[
      w(X)f(X)
    \right]
  }{
    \E_{X\sim D_{\Lambda,s_R}}
    \left[
      w(X)
    \right]
  }
\]
for a function $f$.
Thus the wider Gaussian with width $R$ is used only to generate the samples, while the weights recover the Hessian corresponding to the narrower target Gaussian with width $r$.

The loss caused by importance sampling is
$  \iota=
  \frac12
  \log_2
  \left(
    \frac{R^2}{r(2R-r)}
  \right)$ in the exponent of the variance (\cref{lem:importance-weight-bounds}), forcing us to sample $2^{\iota n}$ more samples.
This turns $2^{2t_0n+h}$ time in the previous algorithm to $2^{(2r+\iota)n}$, and balancing the parameters gives $2^{0.60387n+o(n)}$ time complexity in \cref{thm:importance-full-space}.\footnote{Note that since we do not even sample from the discrete Gaussian with width $r$, we can set $r$ smaller than the limit of the original discrete Gaussian sampler \cite{STOC:ADRS15}. We actually choose so, and extend several lemmas for $r<t_0$.}

\paragraph{Sparsity makes space for quantum.}
A careful reader may notice that the above algorithm requires $2^{(2r +\iota)n}$ Gaussian samples, thus its space complexity is also $2^{0.60387n+o(n)}$. We observe that the value $w(X)$ heavily depends on the norm of $X$, and in particular for most of $X$ it is very close to $0$. We can sparsify the Gaussian samples by employing a binary variable $Z$ such that, conditioned on each sample $X$,
choose
$Z\in\{0,1\}$ such that
$\Pr[Z=1\mid X]=\pi(X)$ and
$\Pr[Z=0\mid X]=1-\pi(X)$ for some $\pi(X) \propto w(X)$.
We use the following:
For every function $f$,
\begin{align*}
  \E_{X,Z}\left[
    \frac{Zf(X)}{\pi(X)}
  \right]
  =
  \E_X\left[
    \E\left[
      \left.
      \frac{Zf(X)}{\pi(X)}
      \right|X
    \right]
  \right]
  =
  \E_X[f(X)]
\end{align*}
thus we can compute the statistics with a smaller number of $X$'s for which $Z=1$, which is about $\E[\pi(X)]\cdot 2^{(2r+\iota)n}$. This gives a $2^{0.5n+o(n)}$ space version of the above algorithm, showing \cref{thm:importance-small-space}.
Note that we anyway need to sample $2^{(2r+\iota)n}$ Gaussian samples, thus the time complexity remains unchanged.

This reduced space actually allows us to use the quantum minimum finding algorithm \cite{DH96}. 
The number of effective samples can be reduced to $2^{2rn}$ while ensuring the accuracy of the estimation at best. 
In this case, we choose $b\approx 2rn$ and write
$\F_2^\ell=\F_2^b\times\F_2^{\ell-b}$. For each fixed value of the
last $\ell-b$ coordinates, the fast Walsh-Hadamard transform computes
the corresponding $2^b=2^{2rn+o(n)}$ coset Hessians at once.
Classically this is repeated $2^{\ell-b}$ times, while quantum minimum
finding reduces the number of repetitions to $2^{(\ell-b)/2}$, which
is $2^{\iota n+o(n)}$ for the optimized parameters.
Balancing the parameter gives $2^{0.5411n+o(n)}$ quantum time and $2^{0.5n+o(n)}$ space in \cref{thm:quantum-svp}. The QRAM only stores the BDD preprocessing data and $2^{2rn}$ samples, which is $2^{0.3604n+o(n)}$ in the balanced parameters.

\paragraph{Acknowledgment.} The author is grateful to Yixin Shen for helpful discussions about an early version of this work, and to Jiseung Kim for kindly sharing a
preliminary draft of his ongoing work \cite{Kim26}.
\section{Preliminaries}\label{sec:preliminaries}

The sets of integers, positive integers, and real numbers are denoted by $\Z$, $\N$, and $\R$. 
$\F_2$ denotes the field of two elements $\{0,1\}$.
We write $[m]=\{1,\ldots,m\}$ for $m\in \N$.
The Euclidean, operator, and Frobenius norms are denoted by $\norm{\cdot}$,
$\opnorm{\cdot}$, and $\Frob{\cdot}$, respectively.
The closed Euclidean ball of radius $R$ is $B_2^n(R) =\{x \in \R^n : \norm{x}\le R\}$. 
For an event $E$, its indicator function is $\ind_E$.
All logarithms are natural unless a base is specified.
We write $\dist(t,S):=\inf_{x\in S}\norm{t-x}$ for a set $S$ and a vector $t$.

The dimension $n$ tends to infinity when discussing the asymptotic behavior.
Unless specified otherwise, every $o(1)$ term is uniform over the lattices and the fixed parameters in the stated ranges.
A negligible function, denoted by $\negl(n)$, is smaller than $n^{-c}$ for every $c>0$ for all sufficiently large $n$.
We write $\langle B\rangle$ for the bit length of the input basis $B$ and assume that it is polynomial in $n$.

\subsection{Lattices and decoding}\label{sec:lattices}
We only consider the full-rank lattice defined as follows.\footnote{It is well-known that the full-rank case implies to the general case.}
Let $B=(b_1,...,b_n) \in \R^{n\times n}$ be a matrix with independent columns. A full-rank lattice generated by $B$ is a set $\cL=\cL(B):=B\Z^n$, i.e., integer linear combinations of column vectors of $B$. We define its dual lattice by
\[
  \cL^*
  :=\{y\in{\rm span}(\cL):\ip{y}{x}\in\Z\text{ for every }x\in\cL\}
\] 
which also satisfies $\cL^*=B^{-T}\Z^n$ when $\cL$ is full-rank.

A lattice $\cL'$ is called a sublattice of $\cL$ if $\cL' \subseteq \cL$. If $\cL'$ is a full-rank sublattice, $\cL/\cL'$ is the finite abelian group of cosets $x+\cL'$.
In particular, it holds that
\[
  \cL/2\cL\cong\F_2^n,
  \qquad
  \cL^*/2\cL^*\cong\F_2^n.
\]
For cosets $a+2\cL$ and $x+2\cL^*$, we can write $a=Bu \in \cL$ and $x=B^{-T}z \in \cL^*$
for $u,z \in \F_2^n$. We call $u$ is the parity class.
The following is well-defined
\[
  \chi_a(x):=(-1)^{\ip{x}{a}} =(-1)^{z^Tu}
\]
and independent of the representatives. In the remainder of the paper, we use arbitrary representative of the coset which does not change any results.

The first minimum $\lambda=\lambda_1(\cL)$ is $\lambda_1(\cL):=\min_{x\in\cL\setminus\{0\}}\norm{x}.$
For $R\ge0$, define the counting function for the lattice points inside a ball
$N_{\cL}(R)
  :=\left|(\cL\setminus\{0\})\cap B_2^n(R)\right|.$
We use the following bound \cite[Lemma~3]{EPRINT:PujSte09}, based on the spherical-code bound of \cite{KL78}.

\begin{lemma}\label{thm:lattice-points}
Uniformly for full-rank $n$-dimensional lattices $\cL$ and $x\ge1$, it holds for $\beta=2^{0.4014\ldots}$
\[
  N_{\cL}(x\lambda_1(\cL))
  \le \beta^{n+o(n)}x^n.
\]
\end{lemma}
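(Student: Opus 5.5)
This bound is the standard packing argument of \cite{EPRINT:PujSte09}, built on the Kabatiansky--Levenshtein spherical-code bound \cite{KL78}; I would reproduce that argument. Write $\lambda=\lambda_1(\cL)$ and split the nonzero lattice points in $B_2^n(x\lambda)$ into two parts: the points of norm at most $\gamma\lambda$, for a constant $\gamma>1$ to be chosen, and the points in thin spherical shells between $\gamma\lambda$ and $x\lambda$.

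\textbf{Points at large norm (simple packing).} The balls of radius $\lambda/2$ centered at lattice points are pairwise disjoint. The lattice points in $B_2^n(R)$ therefore give disjoint balls inside $B_2^n(R+\lambda/2)$, so their number is at most $\bigl((R+\lambda/2)/(\lambda/2)\bigr)^n=(2R/\lambda+1)^n$. For $R=x\lambda$ this is $(2x+1)^n\le 3^n x^n$ when $x\ge1$, which is already of the form $c^n x^n$ but with the wrong constant. The point of the argument is to improve $c$ to $\beta$ by treating short and long vectors differently.

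\textbf{Points in a thin shell (spherical codes).} Consider lattice points $y,y'$ in the shell $\{t\lambda\le\norm{y}\le t(1+\epsilon)\lambda\}$, where $\epsilon=1/\poly(n)$ or a slowly vanishing $\epsilon$. Since $y-y'\in\cL\setminus\{0\}$, we have $\norm{y-y'}\ge\lambda$. Writing $\norm{y-y'}^2=\norm{y}^2+\norm{y'}^2-2\ip{y}{y'}\ge\lambda^2$ bounds the angle between $y$ and $y'$ from below by roughly $\theta(t)=2\arcsin(1/(2t))$. The normalized points thus form a spherical code with minimum angle $\theta(t)$. For $\theta\ge\theta_0\approx 62.99^\circ$, Kabatiansky--Levenshtein gives at most $2^{(0.401+o(1))n}$ points. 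This applies to shells with $t\le\gamma\approx1/(2\sin(\theta_0/2))$. Covering $[1,\gamma]$ by polynomially many shells shows $N_\cL(\gamma\lambda)\le 2^{0.401n+o(n)}$.

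\textbf{Combining and optimizing.} For $R=x\lambda\ge\gamma\lambda$, I would refine the packing bound by using the fact that the points of $\cL\cap B_2^n(R)$ are $\lambda$-separated. Rescaling, $N_\cL(x\lambda)$ is at most the number of $\lambda$-separated points in a ball of radius $x\lambda$. Pujol--Stehl\'e handle this by mapping points in a shell to the sphere and combining the spherical-code bound with a covering of the ball by shells, which yields $\beta^n x^n$ with $\log_2\beta=0.401\ldots$ uniformly for all $x\ge1$. The two regimes are glued by taking the maximum over polynomially many shells, which costs only a $2^{o(n)}$ factor. Uniformity over lattices holds because every step depends only on the separation $\lambda$, not on any other structure of $\cL$.

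\textbf{Main obstacle.} I expect the hardest step to be the uniform control as $x$ grows, that is, making sure the constant stays exactly $\beta$ rather than something like $2x+1$ for moderate $x$. My first attempt would be to show that the function $x\mapsto N_\cL(x\lambda)/x^n$ is essentially maximized near $x=\gamma$. For large $x$ I would use the volume bound $(2x+1)^n$ and note that it is dominated by $\beta^n x^n$ once $x$ exceeds a constant, since $2+1/x<\beta\cdot(\text{slack})$ fails only for small $x$. In the intermediate range I would fall back on the shell argument.
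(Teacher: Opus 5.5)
The paper gives no proof of this lemma. It imports it from \cite[Lemma~3]{EPRINT:PujSte09}, and the constant is $\log_2\beta=B_{\rm KL}(\pi/3)$, the Kabatiansky--Levenshtein exponent at angle exactly $60^\circ$. Your reconstruction does not reach this constant, and two concrete steps fail.

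\textbf{The shell step.} The angle $\theta(t)=2\arcsin(1/(2t))$ is \emph{decreasing} in $t$, with $\theta(1)=60^\circ$. So every shell at a constant radius $t>1$ has minimum angle below $60^\circ$, and the KL exponent there exceeds $0.4014$. The range $t\le 1/(2\sin(\theta_0/2))\approx0.957$ you write down is empty. Hence shells only give $N_\cL((1+o(1))\lambda)\le2^{(0.4014+o(1))n}$. Treating a shell as one spherical code centered at the origin is also lossy in principle. For large $t$ the angle is about $1/t$ and $B_{\rm KL}$ is about $\log_2t+\log_2e-1\approx\log_2t+0.443$, which is larger than $\log_2t+\log_2\beta$.

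\textbf{The large-$x$ fallback.} This step is false. Since $(2x+1)^n>2^nx^n$ and $\beta\approx1.32<2$, the packing bound is exponentially larger than $\beta^nx^n$ for \emph{every} $x$; it is never dominated by it. The ``combining'' paragraph then attributes the result to Pujol--Stehl\'e without giving an argument.

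\textbf{The missing idea: localize around arbitrary centers, not the origin.} Fix any $c\in\R^n$ and distinct $y,y'\in\cL\setminus\{c\}$ with $a=\norm{y-c}\le\lambda$ and $b=\norm{y'-c}\le\lambda$. Since $\norm{y-y'}\ge\lambda\ge\max\{a,b\}$, the law of cosines gives $\cos\angle(y-c,y'-c)\le(a^2+b^2-\lambda^2)/(2ab)\le1/2$. So by \cref{thm:functional-kl}, any ball of radius $\lambda$ contains at most $1+2^{(B_{\rm KL}(\pi/3)+o(1))n}$ lattice points.

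It then remains to cover $B_2^n(x\lambda)$ by $2^{o(n)}x^n$ balls of radius $\lambda$, uniformly in $x\ge1$. This needs a genuine covering bound: the naive averaging argument only gives about $(x+1)^n$ balls, an exponential loss for constant $x$. One clean way:
\begin{enumerate}
\item Lift $\cL\cap B_2^n(x\lambda)$ to the sphere of radius $x\lambda$ in $\R^{n+1}$ by appending $\sqrt{x^2\lambda^2-\norm{y}^2}$, as in \cref{lem:parity-separation}. This preserves $\lambda$-separation.
\item Cover that sphere by $2^{o(n)}x^n$ caps of angular radius $\arcsin(1/x)$, using a Rogers-type sphere-covering bound.
\item Each such cap lies in a ball of radius $\lambda$ centered at distance $\sqrt{x^2-1}\,\lambda$ from the origin, so the $60^\circ$ argument applies in dimension $n+1$.
\end{enumerate}
Uniformity over $\cL$ is then automatic, since only $\lambda$-separation is used.
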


\begin{definition}[Search-$\SVP$]
Given a basis of a full-rank lattice $\cL$, Search-$\SVP$ asks to find a nonzero $v\in\cL$ with
$\norm{v}=\lambda_1(\cL)$.
\end{definition}

\begin{definition}[$\BDD$]
For $0<\alpha<1/2$, $\alpha$-$\BDD$ takes a lattice
$\cL$ and a target vector $t$ promised to satisfy $\dist(t,\cL)<\alpha\lambda_1(\cL)$ as input,
and returns the unique closest lattice vector to $t$.
\end{definition}

We use the following result for preprocessing $\BDD$ from \cite{SICOMP:ACKS25}.

\begin{theorem}
\label{thm:bdd-preprocessing}
There is a randomized classical preprocessing algorithm for the $n^{-1/3}$-$\BDD$ problem such that:
\begin{itemize}[nosep]
    \item its preprocessing runs in expected time and worst-case space $2^{n/2+o(n)}$ and succeeds with a constant probability. After running, the size of advice is $2^{o(n)}$.
    \item given the successful preprocessing output, any instance of the $n^{-1/3}$-$\BDD$ problem can be deterministically solved in time $2^{o(n)}$ and polynomial additional space.
\end{itemize}
\end{theorem}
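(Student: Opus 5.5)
This theorem is quoted from \cite{SICOMP:ACKS25}, so the plan is to reconstruct their argument rather than derive something new. The idea is to trade the exponential work of BDD into a preprocessing phase that produces a short advice string: an approximation of the periodic Gaussian function $F_s$ with $s$ chosen near the smoothing regime of $\cL^*$. Concretely, we would use the discrete Gaussian sampler of \cite{STOC:ADRS15} on the dual lattice. It runs in time and space $2^{n/2+o(n)}$ and, at any width above roughly the smoothing parameter, produces $2^{n/2}$ samples $X_1,\ldots,X_N\sim D_{\cL^*,1/s}$. From these samples we form the estimator $\widehat F(z)=\frac1N\sum_j\cos(2\pi\ip{X_j}{z})$. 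The samples, or a compressed form of them, serve as the advice.

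\textbf{Step 1 (estimator accuracy).} By Hoeffding's inequality combined with a union bound over a fine net of the fundamental domain, $\widehat F$ and its gradient approximate $F_s$ and $\nabla F_s$ uniformly to within $2^{-\Omega(n)}$-scale error, with high probability. This uses only $N=\poly(n)$ samples, because the decoding region has only $2^{O(n)}$ net points.

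\textbf{Step 2 (gradient ascent decodes).} Choose $s\approx \lambda_1(\cL)\sqrt{n}/\poly$ so that, for a target $t$ with $\dist(t,\cL)\le n^{-1/3}\lambda_1(\cL)$, the nearest-point term dominates $\rho_s(\cL+t)$. Then $s^2\nabla F_s(t)/(2\pi F_s(t))$ is a good approximation of $y-t$, where $y$ is the closest lattice vector; this is the approach of \cite{CCC:DRS14}. Iterating the approximate gradient step a polynomial number of times moves $t$ to within a tiny distance of $y$, and rounding with the basis then recovers $y$ exactly. Each evaluation of the estimator costs $N\cdot\poly(n)$ time.

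\textbf{Step 3 (decoding radius and cost).} The main obstacle is reconciling the decoding radius with the width constraints of the sampler. For decoding distance $\alpha\lambda_1$ with $\alpha=n^{-1/3}$, the tail contribution of the other lattice points is controlled by \cref{thm:lattice-points}. It suffices that $e^{-\pi((1-\alpha)^2-\alpha^2)\lambda_1^2/s^2}\beta^{n}$ be small, which requires $s\lesssim\lambda_1/\sqrt{n}$ times a constant. At the same time, sampling $D_{\cL^*,1/s}$ with the sampler of \cite{STOC:ADRS15} requires $1/s$ to be at least about the smoothing parameter of $\cL^*$, which is roughly $\sqrt n/\lambda_1$ by Banaszczyk's bound. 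This tension is why only $n^{-1/3}$-BDD is claimed rather than constant $\alpha$. I would pick $s=\lambda_1 n^{-1/2}\cdot n^{c}$ with $c$ small enough that the $n^{-1/3}$ slack absorbs the $2^{o(n)}$ loss, and then check that the number of samples needed grows only to $2^{o(n)}$. That gives $2^{o(n)}$ advice size and $2^{o(n)}$ query time.

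The preprocessing cost is dominated by the single sampler call, and the constant success probability comes from that sampler. Since the advice is fixed after preprocessing, decoding is deterministic.
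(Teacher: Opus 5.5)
You have the right architecture: dual discrete Gaussian samples, the cosine estimator of $F_s$, and the \cite{CCC:DRS14} gradient iteration. That is essentially the decoder behind \cite[Theorem~56]{SICOMP:ACKS25}, which the paper simply invokes with $\eps=e^{-\sqrt n}$. The gap is in the quantitative part, which is the only thing the statement actually depends on.

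\textbf{Step~1 is false.} $N=\poly(n)$ samples cannot give $2^{-\Omega(n)}$ uniform error. Each summand $\cos(2\pi\ip{X_j}{z})$ has variance of constant order for typical $z$. After your net and union bound the error is therefore of order $\sqrt{n/N}$, and exponentially small error would need $2^{\Omega(n)}$ samples, which destroys the $2^{o(n)}$ query time. That accuracy is also not what the iteration needs. It only needs error small relative to $F_s(t)$ and $\nabla F_s(t)$ at the query. At $\dist(t,\cL)=\alpha\lambda_1$ one only has $F_s(t)\approx e^{-\pi\alpha^2\lambda_1^2/s^2}$, so the correct requirement is $N\approx n\,e^{2\pi\alpha^2\lambda_1^2/s^2}$.

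\textbf{Step~3 misidentifies the trade-off.} Your two constraints are not in tension: both are upper bounds on $s$ of order $\lambda_1/\sqrt n$.
\begin{itemize}
\item Nearest-point domination forces $s\lesssim\lambda_1/\sqrt n$.
\item Running the sampler of \cref{thm:dgs-sampling} at dual width $1/s$ also only requires $s\lesssim\lambda_1/\sqrt n$, since \cref{thm:smoothing-bound} and Banaszczyk's bound are upper bounds on $\eta(\cL^*)$.
\end{itemize}
The real tension is between domination and the sample count. With $s=\Theta(\lambda_1/\sqrt n)$ you get $N=e^{\Theta(\alpha^2 n)}$. This is $2^{o(n)}$ exactly when $\alpha=o(1)$, and it equals $e^{O(n^{1/3})}$ for $\alpha=n^{-1/3}$.

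Your choice $s=\lambda_1 n^{-1/2+c}$ is admissible only for $-1/3<c\le 0$.
\begin{itemize}
\item Any $c>0$ breaks domination. Poisson summation gives $\rho_s(\cL)\ge s^n/\det(\cL)$, which is $n^{\Omega(n)}$ for lattices with $\det(\cL)^{1/n}=\Theta(\lambda_1/\sqrt n)$.
\item Any $c\le -1/3$ makes $N$ exponential.
\end{itemize}

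\textbf{What the paper's proof does.} It packages exactly this calculation. Take dual width $\eta_\eps(\cL^*)$, i.e.\ $s=1/\eta_\eps(\cL^*)$, the value with $\rho_s(\cL\setminus\{0\})=\eps$. This gives decoding radius $\phi(\cL)=\sqrt{\ln(1/\eps)/\pi-o(1)}/(2\eta_\eps(\cL^*))$ using $m=O(n\ln(1/\eps)/\sqrt\eps)$ samples. Choosing $\eps=e^{-\sqrt n}$ makes $m=2^{O(\sqrt n)}$. The bound $\lambda_1(\cL)\eta_\eps(\cL^*)\le\sqrt{\ln((1+\eps)/\eps)/\pi}+\sqrt{n/(2\pi)}$ then gives $\phi(\cL)/\lambda_1(\cL)=\Omega(n^{-1/4})>n^{-1/3}$. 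Replacing your Steps~1 and~3 with this computation would make your outline go through.
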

\begin{proof}
    Plug $\eps=\exp(-\sqrt{n})$ in \cite[Theorem 56]{SICOMP:ACKS25} with $\alpha=1/2$, which gives $(\phi(\cL)/\lambda_1(\cL))$-BDD oracle in $m \poly(n)$ time and space after the DGS preprocessing $\phi(\cL)= \frac{\sqrt{\ln(1/\eps)/\pi -o(1)}}{2\eta_\eps(\cL^*)}$ and $m=O(n\ln(1/\eps)/\sqrt{\eps}) =2^{o(n)} $. 
    An asymptotic calculation gives $\phi(\cL)/\lambda_1(\cL)  =\Omega(n^{-1/4})$ using $\lambda_1(\cL) \eta_\eps(\cL^*) \le \sqrt{\frac{\ln((1+\eps)/\eps)}{\pi}} + \sqrt{\frac{n}{2\pi}}$. This makes the query in $m \poly(n) = 2^{o(n)}$ time and space. The preprocessing is only the discrete Gaussian samples, which can be sampled in time $2^{n/2+o(n)}$ and space $2^{n/2+o(n)}.$
\end{proof}



We consider a time-truncated variant that returns $\bot$ and aborts if the time is exceeded.
For an input that does not meet the promise of $n^{-1/3}$-$\BDD$, there is no correctness guarantee; we use an output only after exact lattice-membership verification.
Whenever an algorithm is repeated, every trial uses independent preprocessing and fresh randomness, and we return the shortest exactly verified nonzero vector over all trials.
Every $\DGS$ call is time-truncated at its stated budget.
If the call exceeds the budget or returns too few samples, the current scale is aborted.

\subsection{Discrete Gaussians}\label{sec:discrete-gaussian}

For $s>0$, $x\in\R^n$, and a countable set $A$, define
\[
  \rho_s(x):=\exp\!\left(-\pi\frac{\norm{x}^2}{s^2}\right),\qquad\rho_s(A):=\sum_{x\in A}\rho_s(x)
\]
For a full-rank lattice $\cL\subset \R^n$ and a vector $t\in\R^n$,
the discrete Gaussian on a lattice
coset $\cL+t$ is
\[
  D_{\cL+t,s}(x)
  :=\frac{\rho_s(x)}{\rho_s(\cL+t)},
  \qquad x\in\cL+t.
\]
In particular, $D_{\cL,s}$ denotes the central case $t=0.$

\begin{definition}[Smoothing parameter {\cite{FOCS:MicReg04,JACM:Regev09}}]
For $\eps\in(0,1)$, the smoothing parameter is the unique positive number
$\eta_\eps(\cL)$ satisfying
$  \rho_{1/\eta_\eps(\cL)}(\cL^*\setminus\{0\})=\eps.
$
\end{definition}

\begin{lemma}[{\cite[Lemma~6.1]{STOC:ADRS15}}]\label{thm:smoothing-bound}
Let $\beta$ be the constant in \cref{thm:lattice-points}.
For any full-rank lattice $\cL \subset \R^n$, it holds that
\[
  \eta_{1/2}(\cL^*)
  \le
  \left(
    \frac{\beta}{\sqrt{2\pi e}}+o(1)
  \right)
  \frac{\sqrt n}{\lambda_1(\cL)}.
\]
\end{lemma}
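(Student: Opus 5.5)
We prove the bound by unwinding the definition. Since $(\cL^*)^*=\cL$, we have $\eta_{1/2}(\cL^*)\le s$ as soon as $\rho_{1/s}(\cL\setminus\{0\})\le 1/2$. This is because $\rho_{1/s}(\cL\setminus\{0\})=\sum_{x\ne 0}e^{-\pi s^2\norm{x}^2}$ is strictly decreasing in $s$. So fix $\delta>0$ and set
\[
s:=\frac{c\sqrt n}{\lambda},\qquad \lambda=\lambda_1(\cL),\qquad c=(1+\delta)\frac{\beta}{\sqrt{2\pi e}} .
\]
It suffices to show $\rho_{1/s}(\cL\setminus\{0\})\le 1/2$ for all sufficiently large $n$, uniformly over lattices. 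Letting $\delta\to 0$ slowly then gives the stated $o(1)$.

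\textbf{Step 1 (layer-cake formula).} Write the Gaussian sum via the counting function $N_\cL$ of \cref{thm:lattice-points}. Since $N_\cL(r)=0$ for $r<\lambda$, integration by parts gives
\[
\rho_{1/s}(\cL\setminus\{0\})=\int_\lambda^\infty N_\cL(r)\,2\pi s^2 r\,e^{-\pi s^2 r^2}\,dr .
\]
Substituting $r=x\lambda$ and using $s^2\lambda^2=c^2n$, this becomes
\[
\int_1^\infty N_\cL(x\lambda)\,2\pi c^2 n\,x\,e^{-\pi c^2 n x^2}\,dx .
\]

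\textbf{Step 2 (apply the counting bound).} Plug in $N_\cL(x\lambda)\le\beta^{n+o(n)}x^n$ from \cref{thm:lattice-points}, valid uniformly for $x\ge1$. Then extend the integral to $[0,\infty)$ and evaluate it with the Gamma function:
\[
\rho_{1/s}(\cL\setminus\{0\})\le \beta^{n+o(n)}\,2\pi c^2 n\cdot\frac{\Gamma(n/2+1)}{2(\pi c^2 n)^{n/2+1}} .
\]

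\textbf{Step 3 (Stirling).} By Stirling, $\Gamma(n/2+1)=(n/(2e))^{n/2}\,2^{o(n)}$. The right-hand side is therefore
\[
2^{o(n)}\left(\frac{\beta^2}{2\pi e c^2}\right)^{n/2}=2^{o(n)}(1+\delta)^{-n}.
\]
This is less than $1/2$ for large $n$, because $\delta$ is fixed and the $o(n)$ term is uniform over lattices.

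Equivalently, the exponent $\ln\beta+\ln x-\pi c^2x^2$ is maximized at $x^2=1/(2\pi c^2)$ with value $\ln\beta-\tfrac12\ln(2\pi e c^2)$. This is exactly why the threshold is $c=\beta/\sqrt{2\pi e}$.

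\textbf{Main obstacle.} The only delicate point is the uniformity of the $o(n)$ term. The integral from Step 2 has an explicit closed form, so its subexponential factors are explicit. The $2^{o(n)}$ loss in \cref{thm:lattice-points} is uniform over lattices and over $x\ge1$ by assumption. So the bound $2^{o(n)}(1+\delta)^{-n}$ is uniform for each fixed $\delta$, and choosing $\delta=\delta(n)\to0$ slowly enough turns the $(1+\delta)$ factor into the $o(1)$ in the statement.
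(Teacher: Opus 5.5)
Your proof is correct and follows the standard route. The paper only cites \cite[Lemma~6.1]{STOC:ADRS15}, but the same argument underlies its own \cref{lem:gaussian-moments} and the last item of \cref{cor:working-scale}: the counting bound of \cref{thm:lattice-points} is played against Gaussian decay, and then monotonicity of the Gaussian mass in the width is applied. That last item ($\xi_t>\sqrt2\,\eta_{1/2}(\cL^*)$ for $t>t_0$) is exactly this bound in disguise. Your layer-cake integral with the exact Gamma-function evaluation is a continuous version of the paper's discrete shell sum bounded by a maximum over the shell radius. It has the minor advantage that the subexponential factors are explicit and independent of $\delta$, which makes the final choice of $\delta(n)\to0$ immediate.
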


\begin{theorem}[{\cite[Theorem~5.11]{STOC:ADRS15}}]\label{thm:dgs-sampling}
Let $\cL\subset\R^n$ be a full-rank lattice, $s>0$, and
$\kappa=\Omega(n)$. There is a classical algorithm $\DGS$ that outputs at
most $M=2^{n/2}$ lattice vectors in time
$2^{n/2+\polylog(\kappa)+o(n)}$ and space $2^{n/2+o(n)}$.
Its output distribution is $\exp(-\Omega(\kappa))$-close to the
following distribution: choose $M'\in\{0,\ldots,M\}$, draw
$X_1,\ldots,X_M$ independently from $D_{\cL,s}$ and independently of
$M'$, and output
$X_1,\ldots,X_{M'}$. If
$s>\sqrt2\eta_{1/2}(\cL)$, then $M'=M$.
\end{theorem}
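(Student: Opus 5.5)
This statement is a restatement of \cite[Theorem~5.11]{STOC:ADRS15}, so the plan is to reduce it directly to that result, checking that the parameters and the form of the output distribution match. The ingredients, in the order I would verify them, are the following.

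\textbf{Initial samples.} LLL-reduce the basis and use the GPV/Klein sampler to draw about $2^{n/2+o(n)}$ independent samples from $D_{\cL,s_0}$. Here $s_0 = 2^{\poly(n)}\cdot s$ is wide enough for Klein's sampler to be exact up to $\exp(-\Omega(\kappa))$ error.

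\textbf{Square sampler.} Apply $O(\log(s_0/s)) = \poly(n)$ rounds of the ADRS pairing step. In each round, bucket the current vectors by their coset in $\cL/2\cL$. Pair vectors within a bucket according to the ADRS ``square sampler'' rule, so that each sum $X+Y \in 2\cL$ yields $(X+Y)/2$, a sample of width $s_i/\sqrt2$. The key fact, used as a black box from \cite{STOC:ADRS15}, is that the coset counts are (close to) Poisson with means proportional to $\rho_{s_i}(\cL+c)$. Consequently the pairing rule, which pairs at most $\lfloor \text{count}^2$-type quantities$\rfloor$ with a rejection step, outputs vectors that are exactly i.i.d.\ $D_{\cL,s_i/\sqrt2}$, conditioned on their number. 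This is what gives the output form ``a random $M'$ independent of i.i.d.\ samples''.

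\textbf{Loss across rounds.} Each round loses only a constant fraction of vectors once $s_i > \sqrt2\,\eta_{1/2}(\cL)$, because the coset masses are then nearly uniform. So the count stays at $M = 2^{n/2}$, after oversampling initially by a $\poly(n)$ factor and truncating to the first $M$. This yields $M' = M$ in the smoothing regime. Below smoothing the count may shrink, which is exactly why the statement allows an arbitrary $M' \le M$.

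\textbf{Complexity and error.} Each round costs time and space $2^{n/2+o(n)}$, since bucketing is done by sorting. The $\polylog(\kappa)$ overhead and the $\exp(-\Omega(\kappa))$ statistical distance come from the initial Klein sampling and the Poisson-approximation error, accumulated by a union bound over $\poly(n)$ rounds.

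The main obstacle is the exact-distribution claim in the square sampler: conditioned on $M'$, the outputs must be genuinely i.i.d.\ from $D_{\cL,s}$ even below smoothing. This is the heart of \cite{STOC:ADRS15}, and I would not reprove it. Instead I would cite Theorem~5.11 there verbatim, checking only that their ``$\kappa$'' and output-count conventions coincide with the ones stated here, including truncation to $M = 2^{n/2}$ outputs.
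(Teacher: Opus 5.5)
Your proposal matches the paper: the statement is imported verbatim from \cite[Theorem~5.11]{STOC:ADRS15} with no independent proof, so citing that theorem and checking that its $\kappa$ and output-count conventions agree is exactly what is done. One caution about your sketch of the internals, which you should drop rather than offer as justification: pairing $2^{n/2}$ vectors inside the $2^n$ nearly equiprobable cosets of $\cL/2\cL$ gives only $O(1)$ expected collisions per round (that scheme is ADRS's $2^{n+o(n)}$ sampler), so near-uniform coset masses are exactly why it could \emph{not} keep $M'=M$ within a $2^{n/2+o(n)}$ budget.
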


\begin{lemma}
\label{lem:dgs-tail}
For every $c>0$, there is a constant $C_c>0$ such that, for every lattice
$\cL$, every $s>0$, and $X\sim D_{\cL,s}$,
\[
  \Prob[\norm{X}>C_c s\sqrt n]\le 2^{-2cn}\qquad
\text{and}\qquad
  \E\left[
     \norm{X}^2\ind_{\{\norm{X}>C_c s\sqrt n\}}
  \right]
  \le s^2n\,2^{-cn}.
\]
\end{lemma}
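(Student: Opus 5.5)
The plan is a standard Chernoff/Banaszczyk-type tail bound on the lattice Gaussian mass, followed by a Cauchy--Schwarz step for the second-moment statement. First, scale so that $s=1$; the statement is scale invariant once $\cL$ is replaced by $\cL/s$. The tool is Banaszczyk's inequality: for every $u>0$ with $u\ge 1/\sqrt{2\pi}$,
\[
  \rho\bigl(\cL\setminus B_2^n(u\sqrt n)\bigr)\le \bigl(u\sqrt{2\pi e}\,e^{-\pi u^2}\bigr)^n\rho(\cL).
\]
If we want a self-contained derivation, we compare $\rho_1$ with $\rho_{t}$ for $t>1$ termwise. Since $\rho_t(\cL)\le t^n\rho_1(\cL)$ (a standard consequence of Poisson summation), for $\norm{x}\ge u\sqrt n$ we have
\[
  \rho_1(x)=\rho_t(x)\,e^{-\pi(1-1/t^2)\norm{x}^2}\le \rho_t(x)\,e^{-\pi(1-1/t^2)u^2n}.
\]
Summing over $x$ gives mass at most $t^n e^{-\pi(1-t^{-2})u^2n}\rho_1(\cL)$, and optimizing over $t$ recovers the bound.

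For the first inequality, note that $u\sqrt{2\pi e}\,e^{-\pi u^2}\to0$ as $u\to\infty$. Hence we can choose $C_c$ large enough that this base is at most $2^{-3c}$, which gives $\Pr[\norm{X}>C_c\sqrt n]\le 2^{-3cn}\le 2^{-2cn}$. This bound is uniform in $\cL$ because Banaszczyk's bound holds for every lattice.

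For the second-moment bound, apply Cauchy--Schwarz:
\[
  \E\bigl[\norm{X}^2\ind_{\{\norm{X}>C_c\sqrt n\}}\bigr]\le \sqrt{\E\norm{X}^4}\,\sqrt{\Pr[\norm{X}>C_c\sqrt n]}.
\]
It remains to bound $\E\norm{X}^4\le O(n^2)$ uniformly in $\cL$. For this, write the expectation as the layer-cake integral $\E\norm{X}^4=\int_0^\infty \Pr[\norm{X}^4>r]\,dr$. Bound the tail by $1$ for $\norm{X}\le \sqrt n$, and for $u\ge 1$ use Banaszczyk with $u$ ranging over $[1,\infty)$. The tail then decays like $(u\sqrt{2\pi e}e^{-\pi u^2})^n\le e^{-\Omega(u^2 n)}$ once $u$ exceeds a fixed constant, which gives $\E\norm{X}^4\le K n^2$ for an absolute constant $K$. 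Combining the pieces yields the bound $\sqrt K\, n\, 2^{-3cn/2}\le n\,2^{-cn}$ for large $n$. For small $n$, we enlarge $C_c$ further so that the tail probability is smaller still; this absorbs the constant $\sqrt K$. Finally, rescaling gives the factor $s^2$.

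The main obstacle is making the fourth-moment or tail integration uniform over all lattices, including very dense ones or ones whose mass concentrates near $0$. This works because Banaszczyk's inequality is stated relative to $\rho(\cL)$ and needs no smoothing assumption. Alternatively, one can avoid fourth moments entirely: bound the truncated second moment directly by summing over the dyadic shells $\{2^k C_c\sqrt n<\norm{X}\le 2^{k+1}C_c\sqrt n\}$. Each shell contributes at most $4^{k+1}C_c^2 n\cdot 2^{-3c n 4^{k}}$ after choosing $C_c$ large, and these terms form a geometric series dominated by $n\,2^{-cn}$.
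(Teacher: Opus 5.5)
Your proof is correct and follows essentially the paper's route. The first inequality comes from Banaszczyk's tail bound, which the paper simply cites as \cite[Lemma~2.4]{STOC:ADRS15} instead of rederiving it from $\rho_t(\cL)\le t^n\rho_1(\cL)$. For the truncated second moment, the paper integrates the tail bound directly via $\E[Y^2\ind_{\{Y>u\}}]=u^2\Prob[Y>u]+\int_u^\infty 2z\,\Prob[Y>z]\,dz$. That avoids your Cauchy--Schwarz/fourth-moment detour, and with it the need for the stronger base $2^{-3c}$; your dyadic-shell variant is just a discretization of the same integral. Your trick of enlarging $C_c$ (with $K$ independent of $C_c$) to cover small $n$ is slightly more careful than the paper, which only obtains the final inequality for sufficiently large $n$.
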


\begin{proof}
The discrete-Gaussian tail inequality {\cite[Lemma~2.4]{STOC:ADRS15}}
gives, for all sufficiently large constants $t$,
\[
\Prob[\norm{X}>ts\sqrt n]
  \le
  \left(\sqrt{2\pi e} \cdot t e^{-\pi t^2}\right)^n.
\]
Choose $t=C_c$ sufficiently large (which only depends on $c$) so that 
\[
\sqrt{2\pi e}\cdot t e^{-\pi t^2} \le e^{\frac{-\pi t^2}{2}} \le 2^{-2c}
\]
holds, which proves the first part. For the second part, the tail integration formula for $Y=\norm{X}$ gives
\[
  \E[Y^2\ind_{\{Y>u\}}]
  =u^2\Prob[Y>u]+
    \int_u^\infty 2z\Prob[Y>z] dz.
\]
Taking $u=C_c s\sqrt n$ and $z=ts \sqrt n$, we have
\begin{align*}
  \E[Y^2\ind_{\{Y>C_c s\sqrt{n}\}}]
  &\le C_c^2 s^2 n 2^{-2cn}+
    s^2 n\int_{C_c}^\infty 2t \cdot  \Prob[\norm{X}>ts\sqrt n] dt\\
    &\le  C_c^2 s^2 n 2^{-2cn}+
    s^2 n\int_{C_c}^\infty 2t \cdot \exp \left(\frac{-n\pi t^2 }{2}\right)dt
    \\
    &= C_c^2 s^2 n 2^{-2cn}+
    s^2 n \cdot \frac{2e^{\frac{-\pi nC_c^2}2 }}{\pi n}\\
    &\le s^2 n \left(C_c^2 + \frac{2}{\pi n}\right)2^{-2cn}
\end{align*}
where the last term is smaller than $s^2 n 2^{-cn}$ for sufficiently large $n$.
\end{proof}

\subsection{Useful inequalities}
We use the following standard probability bounds.

\begin{lemma}[Markov's inequality]\label{lem:Markov}
     For a random variable $Y\ge0$ and $a>0$, $\Prob[Y\ge a]\le{\E[Y]}/a.$
\end{lemma}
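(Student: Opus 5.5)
The plan is the standard first-moment argument, carried out directly from the definition of expectation.

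\begin{enumerate}
\item Fix $a>0$ and compare $Y$ pointwise with a scaled indicator: since $Y\ge 0$, we have $Y\ge a\ind_{\{Y\ge a\}}$ everywhere. Where $Y\ge a$, the right-hand side equals $a\le Y$; elsewhere it equals $0\le Y$.
\item Take expectations. Monotonicity and linearity of expectation give $\E[Y]\ge a\,\E[\ind_{\{Y\ge a\}}]=a\Prob[Y\ge a]$.
\item Divide by $a>0$ to obtain $\Prob[Y\ge a]\le \E[Y]/a$.
\end{enumerate}

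If $\E[Y]=\infty$, the bound holds trivially. For discrete random variables, such as functions of discrete Gaussian samples, the same argument can be written as a sum over outcomes: dropping the terms with $Y<a$ and lower-bounding each remaining term by $a$ gives the inequality.

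None of these steps is a real obstacle. The only point needing care is the nonnegativity hypothesis, which is exactly what makes the pointwise comparison in the first step valid.
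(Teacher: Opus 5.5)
Your proof is correct and complete: the pointwise bound $Y\ge a\ind_{\{Y\ge a\}}$, followed by taking expectations and dividing by $a>0$, is the standard argument. The paper gives no proof of its own here; it lists Markov's inequality among standard probability bounds, so there is no different route to compare against.
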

\begin{lemma}[Chebyshev's inequality]\label{lem:Cheb}
    If $Y$ has finite variance and $a>0$, 
    $\Prob[|Y-\E[Y]|\ge a]
    \le
    {\operatorname{Var}(Y)}/{a^2}.$
\end{lemma}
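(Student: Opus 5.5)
Chebyshev's inequality follows by applying \cref{lem:Markov} to a squared deviation, so the plan is a one-line reduction. Set $\mu:=\E[Y]$, which is finite because $Y$ has finite variance. Define the nonnegative random variable $W:=(Y-\mu)^2$. Its expectation is exactly $\operatorname{Var}(Y)$, which is finite by assumption.

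The key observation is that for $a>0$ the events $\{|Y-\mu|\ge a\}$ and $\{W\ge a^2\}$ coincide. This holds because $t\mapsto t^2$ is strictly increasing on $[0,\infty)$ and $|Y-\mu|\ge 0$. Since $a^2>0$, \cref{lem:Markov} applies to $W$ and gives
\[
  \Prob[|Y-\mu|\ge a]=\Prob[W\ge a^2]\le \frac{\E[W]}{a^2}=\frac{\operatorname{Var}(Y)}{a^2}.
\]

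There is no real obstacle here. The only care needed is to confirm that Markov's hypotheses hold: $W\ge 0$, the threshold $a^2$ is positive, and $\E[W]$ is finite. Finiteness of $\E[W]$ is not even strictly necessary, since the bound is trivially true when the right-hand side is infinite.
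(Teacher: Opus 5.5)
Your proof is correct: applying \cref{lem:Markov} to $(Y-\E[Y])^2$ with threshold $a^2$ is the standard argument. The paper gives no proof of its own and simply lists this among standard probability bounds, so your argument supplies exactly what the paper takes for granted.
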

\begin{lemma}[Hoeffding's inequality]\label{lem:Hoef}
    If $Y_1,\ldots,Y_N$ are independent and
  $a_i\le Y_i\le b_i$ and $\delta>0$,
  $\Prob\left[
      \left|
        \frac1N\sum_{i=1}^N(Y_i-\E[Y_i])
      \right|
      \ge\delta
    \right]
    \le
    2\exp\left(
      -\frac{2N^2\delta^2}
      {\sum_{i=1}^N(b_i-a_i)^2}
    \right).$
\end{lemma}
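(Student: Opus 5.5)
This is the classical Hoeffding bound, and I would prove it by the Chernoff exponential-moment method followed by Hoeffding's lemma. Set $S:=\sum_{i=1}^N(Y_i-\E[Y_i])$ and $D:=\sum_{i=1}^N(b_i-a_i)^2$. The event in the statement is $|S|\ge N\delta$. By symmetry (replace $Y_i$ by $-Y_i$, which lies in $[-b_i,-a_i]$, an interval of the same length), it suffices to show $\Prob[S\ge N\delta]\le\exp(-2N^2\delta^2/D)$. The factor $2$ then comes from a union bound over the two tails.

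For the one-sided bound, fix $\theta>0$ and apply Markov's inequality (\cref{lem:Markov}) to the nonnegative variable $e^{\theta S}$. This gives
\[
\Prob[S\ge N\delta]\le e^{-\theta N\delta}\,\E[e^{\theta S}]=e^{-\theta N\delta}\prod_{i=1}^N\E\left[e^{\theta(Y_i-\E[Y_i])}\right],
\]
where the factorization uses independence. The core ingredient is Hoeffding's lemma: if $Y$ takes values in $[a,b]$, then $\E[e^{\theta(Y-\E Y)}]\le e^{\theta^2(b-a)^2/8}$. Granting it, we get $\Prob[S\ge N\delta]\le\exp(-\theta N\delta+\theta^2D/8)$. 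Choosing $\theta=4N\delta/D$ minimizes the exponent and gives exactly $\exp(-2N^2\delta^2/D)$.

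The main step is proving Hoeffding's lemma, for which I would use the standard convexity and log-MGF argument. By translation, assume $\E Y=0$, so $a\le0\le b$. Convexity of $y\mapsto e^{\theta y}$ on $[a,b]$ gives
\[
e^{\theta y}\le\frac{b-y}{b-a}e^{\theta a}+\frac{y-a}{b-a}e^{\theta b}.
\]
Taking expectations with $\E Y=0$ yields $\E e^{\theta Y}\le e^{\varphi(u)}$, where $u=\theta(b-a)$, $p=-a/(b-a)\in[0,1]$, and $\varphi(u)=-pu+\log(1-p+pe^u)$. Then $\varphi(0)=\varphi'(0)=0$. Moreover, $\varphi''(u)=q(1-q)\le1/4$, where $q=pe^u/(1-p+pe^u)\in[0,1]$. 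Taylor's theorem with remainder then gives $\varphi(u)\le u^2/8=\theta^2(b-a)^2/8$, which is the lemma.

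The degenerate case $a_i=b_i$ makes $Y_i$ constant and contributes a factor of $1$, so it causes no trouble. The case $D=0$ is vacuous, since then $S=0<N\delta$ almost surely.
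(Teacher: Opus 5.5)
Your proof is correct. The paper gives no proof of this lemma; it lists Hoeffding's inequality among standard probability bounds. Your argument is the standard textbook proof: Markov's inequality applied to $e^{\theta S}$, independence to factor the moment generating function, and Hoeffding's lemma via convexity and the bound $\varphi''=q(1-q)\le 1/4$. The optimization $\theta=4N\delta/D$ and the two-tail union bound reproduce the stated constant $2\exp\left(-2N^2\delta^2/\sum_i(b_i-a_i)^2\right)$ exactly.
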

\begin{lemma}[Chernoff bound]\label{lem:Chernoff}
If $Y_1,\ldots,Y_N$ are independent Bernoulli random
variables and $0<\delta\le1$, then
\[
  \Prob\left[
    \left|
      \sum_{i=1}^NY_i-\sum_{i=1}^N\E[Y_i]
    \right|
    \ge
    \delta\sum_{i=1}^N\E[Y_i]
  \right]
  \le
  2\exp\left(
    -\frac{\delta^2}{3}\sum_{i=1}^N\E[Y_i]
  \right).
\]
\end{lemma}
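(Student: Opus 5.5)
This is the standard multiplicative Chernoff bound. I would prove it by the exponential-moment method, treating the upper and lower tails separately and combining them with a union bound, which produces the factor $2$. Write $S=\sum_{i=1}^N Y_i$, $p_i=\E[Y_i]$ and $\mu=\sum_i p_i$. If $\mu=0$, every $Y_i$ is almost surely $0$ and the claim is trivial, so assume $\mu>0$.

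\textbf{Upper tail.} Fix $t>0$ and apply Markov's inequality (\cref{lem:Markov}) to $e^{tS}$:
\[
\Prob[S\ge(1+\delta)\mu]\le e^{-t(1+\delta)\mu}\,\E[e^{tS}].
\]
By independence, $\E[e^{tS}]=\prod_i\bigl(1+p_i(e^t-1)\bigr)$. Using $1+x\le e^x$, this is at most $\exp\bigl(\mu(e^t-1)\bigr)$. Choosing $t=\ln(1+\delta)$ gives
\[
\Prob[S\ge(1+\delta)\mu]\le\left(\frac{e^{\delta}}{(1+\delta)^{1+\delta}}\right)^{\mu}.
\]
It remains to show $(1+\delta)\ln(1+\delta)\ge\delta+\delta^2/3$ for $\delta\in(0,1]$. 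Set $g(\delta)=(1+\delta)\ln(1+\delta)-\delta-\delta^2/3$. Then $g(0)=0$, $g'(\delta)=\ln(1+\delta)-2\delta/3$ with $g'(0)=0$, and $g''(\delta)=1/(1+\delta)-2/3$. So $g''>0$ on $[0,1/2)$ and $g''<0$ on $(1/2,1]$. Hence $g'$ first increases from $0$ and then decreases to $g'(1)=\ln2-2/3>0$, so $g'\ge0$ on $[0,1]$ and therefore $g\ge0$. This yields the bound $e^{-\delta^2\mu/3}$.

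\textbf{Lower tail.} Apply Markov's inequality to $e^{-tS}$ with $t>0$. The same argument gives
\[
\Prob[S\le(1-\delta)\mu]\le e^{t(1-\delta)\mu}\exp\bigl(\mu(e^{-t}-1)\bigr).
\]
For $\delta<1$, choose $t=-\ln(1-\delta)$, which gives the bound $\bigl(e^{-\delta}/(1-\delta)^{1-\delta}\bigr)^{\mu}$. Then use $(1-\delta)\ln(1-\delta)\ge-\delta+\delta^2/2$, which holds because the Taylor expansion of the left side is $-\delta+\sum_{k\ge2}\delta^k/(k(k-1))$, a series with nonnegative terms beyond $-\delta$. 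This gives the bound $e^{-\delta^2\mu/2}\le e^{-\delta^2\mu/3}$.

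The endpoint $\delta=1$ is handled directly. There, $\Prob[S\le0]=\prod_i(1-p_i)\le e^{-\mu}\le e^{-\mu/3}$.

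\textbf{Combining.} Adding the two tail bounds gives the factor $2$ in the statement. The only non-routine point is the elementary inequality for the upper tail with constant $1/3$ on $(0,1]$, handled by the convexity/concavity split of $g$ above. Everything else follows the standard template.
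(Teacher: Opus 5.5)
Your proof is correct. It is the standard exponential-moment argument: Markov's inequality on $e^{\pm tS}$, the bound $1+x\le e^x$, the choice $t=\pm\ln(1\pm\delta)$, the elementary inequalities for $(1\pm\delta)\ln(1\pm\delta)$, the separate treatment of the endpoint $\delta=1$, and a union bound for the factor $2$. The paper gives no proof of this lemma and simply lists it among standard probability bounds, so your argument supplies the usual textbook proof behind that citation.
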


\begin{lemma}\label{lem:majority}
If $E_1,\ldots,E_m$ are independent events with
$\Prob[E_i]\le p<1/4$, then
\[
  \Prob\left[
    \sum_{i=1}^m\ind_{E_i}\ge\frac m2
  \right]
  \le
  2^m p^{m/2}.
\]
\end{lemma}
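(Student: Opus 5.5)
We bound the majority event by a union bound over the index sets that could witness it. If $\sum_{i=1}^m \ind_{E_i}\ge m/2$, then there is a subset $S\subseteq[m]$ with $|S|=\lceil m/2\rceil$ such that every $E_i$ with $i\in S$ occurs. Hence
\[
  \Prob\left[\sum_{i=1}^m\ind_{E_i}\ge\frac m2\right]
  \le \sum_{S\subseteq[m],\,|S|=\lceil m/2\rceil}\Prob\left[\bigcap_{i\in S}E_i\right].
\]

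By independence, each term equals $\prod_{i\in S}\Prob[E_i]\le p^{|S|}=p^{\lceil m/2\rceil}$. Since $0\le p<1$ and $\lceil m/2\rceil\ge m/2$, we get $p^{\lceil m/2\rceil}\le p^{m/2}$. The number of subsets $S$ is $\binom{m}{\lceil m/2\rceil}\le\sum_k\binom mk=2^m$. Multiplying gives the bound $2^m p^{m/2}$.

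The only step needing care is the rounding of $m/2$ when $m$ is odd. It is handled by taking $|S|=\lceil m/2\rceil$ and using that $p<1$, so the exponent can be lowered to $m/2$. The hypothesis $p<1/4$ is not needed for the inequality itself. It only ensures that $2^m p^{m/2}=(2\sqrt p)^m$ decays exponentially in $m$, which is presumably how the lemma is used for amplification.
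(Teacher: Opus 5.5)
Your proof is correct. The paper states this lemma without proof, listing it among its standard probability bounds, and your argument is the standard one: a union bound over the $\binom{m}{\lceil m/2\rceil}\le 2^m$ index sets of size $\lceil m/2\rceil$, independence, and $p^{\lceil m/2\rceil}\le p^{m/2}$.

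You are also right that $p<1/4$ is needed only to make $(2\sqrt p)^m$ decay. In the paper the lemma is applied with $m=n$ and $p=O(1/n)$, which gives the $2^{-\Omega(n\log n)}$ failure bound for the entrywise median.
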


\section{Recovering Shortest Vector from Hessian}\label{sec:hessian-recovery}


We assume that an approximate shortest-vector length $d$ satisfying
\begin{align}
    \lambda_1(\cL) \le d \le (1+1/n) \lambda_1(\cL)
\end{align}
is known. This can be assumed by applying LLL \cite{LLL82} and
guessing $d$ from
$\{(1+1/n)^{-j}\norm{x}\}_{j=0,\ldots,n^2}$, where $x$ is the
resulting nonzero vector. We formalize this in the final algorithm.

It will be convenient to define the following parameters:
\begin{align}\label{eqn: def_xi_t0}
  \xi_t=\xi_t(d)
  :=
  \sqrt{\frac{4nt\ln2}{\pi d^2}}, \qquad
  t_0:= \frac{\beta^2}{4e \ln 2} = 0.23147\ldots
\end{align}
These choices give the Gaussian-mass estimates in
\cref{cor:working-scale}. It will be convenient to note that for most of our interested case the following holds for most choices of $t$:
\begin{align}\label{eqn: easy_xi}
    \xi_t^2 = \Theta(n/d^2) = \Theta(n/\lambda^2).
\end{align}

\subsection{Scale and Gaussian mass bounds}\label{sec:working-scales}

\begin{lemma}\label{lem:gaussian-moments}
Let $\cL$ be an $n$-dimensional full-rank lattice and let
$\lambda=\lambda_1(\cL)$.  Let $s=\Theta(\lambda/\sqrt n)$ and let $k\ge 0$ be a constant.
Then
\[
  \sum_{x\in\cL\setminus\{0\}}
  \left(1+\frac{\norm{x}}{\lambda}\right)^k
  \rho_s(x)
  \le
  2^{o(n)}
  \left(
    \frac{\beta^2ns^2}
         {2\pi e\lambda^2}
  \right)^{n/2}.
\]
\end{lemma}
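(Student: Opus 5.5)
Normalize by writing $x$ in units of $\lambda$ and decompose $\cL\setminus\{0\}$ into radial shells. For $j\ge 1$, let
\[
A_j:=\{x\in\cL\setminus\{0\}:\ (1+(j-1)/n)\lambda\le\norm{x}<(1+j/n)\lambda\},
\]
so every nonzero lattice vector lies in some $A_j$, since $\norm{x}\ge\lambda$. Applying \cref{thm:lattice-points} with $x=1+j/n$ gives $|A_j|\le\beta^{n+o(n)}(1+j/n)^n$. On $A_j$, write $r=1+(j-1)/n$. Then $\rho_s(x)\le\exp(-\pi r^2\lambda^2/s^2)$ and $(1+\norm{x}/\lambda)^k\le(2+r)^k$. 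Also $(1+j/n)^n\le e\,r^n$ whenever $r\ge1$. Hence the sum is bounded by
\[
\beta^{n+o(n)}\sum_{j\ge1} e\,(2+r_j)^k\, r_j^n\, e^{-\pi r_j^2\lambda^2/s^2}.
\]

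Next, maximize the function $g(r)=n\ln r-\pi r^2\lambda^2/s^2$ over all real $r>0$. Its maximum is at $r_*^2=ns^2/(2\pi\lambda^2)$, where $g(r_*)=\tfrac n2\ln\!\bigl(\tfrac{ns^2}{2\pi e\lambda^2}\bigr)$. Then $\beta^n e^{g(r_*)}=\bigl(\tfrac{\beta^2 ns^2}{2\pi e\lambda^2}\bigr)^{n/2}$, which is exactly the claimed main term. This unconstrained maximum upper bounds the constrained one over $r\ge1$, so every individual term is at most $2^{o(n)}$ times the target, provided the factor $(2+r)^k$ is $2^{o(n)}$.

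The main obstacle is controlling the infinitely many shells and the polynomial prefactor $(2+r)^k$. Since $s=\Theta(\lambda/\sqrt n)$, the quantity $r_*=\Theta(1)$. I handle two ranges separately. For $r\le C$, with $C$ a large constant, there are only $O(n)$ shells and $(2+r)^k=O(1)$, so this range contributes $O(n)\cdot 2^{o(n)}$ times the target. For $r>C$, the Gaussian decay dominates: $g(r)\le g(r_*)-\pi(r^2-r_*^2)\lambda^2/(2s^2)$ once $r\ge 2r_*$, because the $n\ln r$ term grows only logarithmically while $\lambda^2/s^2=\Theta(n)$. This gives an extra factor $e^{-\Omega(n r^2)}$, which absorbs both $(2+r)^k$ and the sum over $j$ (a convergent sum over $r$ in steps of $1/n$). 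Alternatively, this tail can be handled directly with \cref{lem:dgs-tail}-style estimates. Combining the two ranges gives the stated bound, uniformly over lattices, since only \cref{thm:lattice-points} was used.
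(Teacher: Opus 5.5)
Your proof is correct and follows essentially the same shell argument as the paper: count lattice points per shell via \cref{thm:lattice-points}, then bound by the unconstrained maximum of $y^n e^{-\pi\lambda^2y^2/s^2}$, attained at $y^2=ns^2/(2\pi\lambda^2)$. The only difference is in handling the weight: the paper folds it into the exponent via $(1+\norm{x}/\lambda)^k\le 2^k(\norm{x}/\lambda)^k$ and maximizes $y^{n+k}e^{-\pi\lambda^2y^2/s^2}$, whereas you keep $(2+r)^k$ as an $O(1)$ factor on bounded shells and let the Gaussian decay absorb it, together with the infinitely many far shells that the paper leaves implicit.
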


\begin{proof}
The case $k=0$ is essentially
\cite[Lemma~4.2]{STOC:ADRS15}.  The same shell argument gives
\[
  \sum_{x\in\cL\setminus\{0\}}
  \left(1+\frac{\norm{x}}{\lambda}\right)^k
  \rho_s(x)
  \le
  2^k\beta^{n+o(n)}
  \max_{y\ge0}
  y^{n+k}e^{-\pi\lambda^2y^2/s^2}
  \le
  \beta^{n+o(n)}
  \max_{y>0}
  y^{n+k}e^{-\pi\lambda^2y^2/s^2}.
\]
Here we used $\norm{x}\ge\lambda$ and hence
\[
  \left(1+\frac{\norm{x}}{\lambda}\right)^k
  \le
  2^k\left(\frac{\norm{x}}{\lambda}\right)^k.
\]
The maximum is attained at
$  y^2=\frac{(n+k)s^2}{2\pi\lambda^2},$
and its value is
$  \left(
    \frac{(n+k)s^2}
         {2\pi e\lambda^2}
  \right)^{(n+k)/2}.
$
Since $s=\Theta(\lambda/\sqrt n)$ and $k$ is fixed, replacing
$n+k$ by $n$ changes the expression by only $2^{o(n)}$.
The fixed factor $2^k$ and the subexponential shell-summation factor are absorbed into the same term.
\end{proof}




\begin{corollary}\label{cor:working-scale}
Let $\lambda=\lambda_1(\cL)$ and
$\lambda \le d\le (1+1/n)\lambda$.
For every fixed $a>0$ and $k\ge0$,
\[
  \sum_{x\in\cL\setminus\{0\}}
  \left(1+\frac{\norm{x}}{\lambda_1(\cL)}\right)^k
  \rho_{a/\xi_t}(x)
  \le
  2^{
    \left(
      \frac12\log_2\frac{a^2t_0}{2t}
      +o(1)
    \right)n
  }.
\]
Consequently, for every fixed $a>0$ and $k>0$,
\begin{align}\label{eqn: xkrho_bound}
    \sum_{x\in\cL}\norm{x}^k\rho_{a/\xi_t}(x)
\le
\lambda^k
2^{(\frac12\log_2\frac{a^2t_0}{2t}+o(1))n}
\end{align}
In particular, the following hold:
\begin{itemize}[nosep]
    \item For a (possibly negative) constant $c$, it holds that
$\rho_{\sqrt{2^{1+c}}/\xi_t}(\cL)
  \le
  1+2^{(c/2-\frac{1}{2}\log_2\frac{t}{t_0}+o(1))n}.$
    \item If $t>t_0/2$, then
  $\rho_{1/\xi_t}(\cL)
  =
  1+2^{-\Omega(n)}.$
    \item If $t>t_0$, then
$  \xi_t>\sqrt2\,\eta_{1/2}(\cL^*),
$
and \cref{thm:dgs-sampling} applies to $\cL^*$ at parameter $\xi_t$.
\end{itemize}
\end{corollary}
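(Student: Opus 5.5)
The plan is to obtain everything from \cref{lem:gaussian-moments} with $s=a/\xi_t$, and then specialize. First I check the scale: by \cref{eqn: def_xi_t0}, $s^2=a^2/\xi_t^2=\frac{a^2\pi d^2}{4nt\ln 2}$. Since $d=\Theta(\lambda)$, this gives $s=\Theta(\lambda/\sqrt n)$, so the lemma applies. It yields the bound $2^{o(n)}\bigl(\frac{\beta^2 n s^2}{2\pi e\lambda^2}\bigr)^{n/2}$. Substituting,
\[
\frac{\beta^2 n s^2}{2\pi e\lambda^2}=\frac{\beta^2 a^2 d^2}{8 e t\ln 2\,\lambda^2}=\frac{a^2 t_0}{2t}\cdot\frac{d^2}{\lambda^2}.
\]
We have $(d/\lambda)^n\le(1+1/n)^n\le e$, which is absorbed into $2^{o(n)}$. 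This gives the first display.

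For \cref{eqn: xkrho_bound}, I use $\norm{x}^k\le\lambda^k(1+\norm{x}/\lambda)^k$. The term $x=0$ contributes $0$ because $k>0$, so the sum reduces to the previous estimate multiplied by $\lambda^k$.

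The bullets follow by specializing $a$.
\begin{itemize}[nosep]
\item Take $a^2=2^{1+c}$. Then $\frac12\log_2\frac{a^2t_0}{2t}=\frac c2-\frac12\log_2\frac{t}{t_0}$. Applying the $k=0$ case to $\cL\setminus\{0\}$ and adding $\rho(0)=1$ gives the first bullet.
\item Take $a=1$, i.e.\ $c=-1$. The exponent is then $-\frac12-\frac12\log_2(t/t_0)=-\frac12\log_2(2t/t_0)$, which is a negative constant when $t>t_0/2$. Hence $\rho_{1/\xi_t}(\cL)=1+2^{-\Omega(n)}$; the lower bound $\ge1$ is trivial.
\item For the last bullet, I apply \cref{thm:smoothing-bound} to $\cL^*$, using $(\cL^*)^*=\cL$. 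This gives $\eta_{1/2}(\cL^*)\le(\beta/\sqrt{2\pi e}+o(1))\sqrt n/\lambda$. It then suffices to check $\xi_t^2>2(\beta^2/(2\pi e)+o(1))n/\lambda^2$, i.e. $\frac{4t\ln 2}{\pi d^2}>\frac{\beta^2}{\pi e\lambda^2}(1+o(1))$. This is equivalent to $t>t_0(d/\lambda)^2(1+o(1))$. Since $d/\lambda\le1+1/n$ and $t>t_0$ is a fixed constant gap, it holds for large $n$.
\end{itemize}

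The main obstacle is making the $o(1)$ terms uniform. The $o(n)$ in \cref{lem:gaussian-moments} is used with $s=\Theta(\lambda/\sqrt n)$, where the hidden constants depend only on $a$ and $t$, which are fixed. The $o(1)$ in the smoothing bound is uniform over lattices. With these two facts the uniformity claimed in the statement follows. The last bullet then feeds directly into \cref{thm:dgs-sampling} with $s=\xi_t$.
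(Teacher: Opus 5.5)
Your proposal is correct, and through the second bullet it is the paper's proof. The paper makes the same substitution $s=a/\xi_t$ into \cref{lem:gaussian-moments}, absorbs $(d/\lambda)^n\le e$ into $2^{o(n)}$, uses the same pointwise bound $\norm{x}^k\le\lambda^k(1+\norm{x}/\lambda)^k$ with the $x=0$ term vanishing, and makes the same specializations $a=\sqrt{2^{1+c}}$ and $a=1$.

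The only real difference is the last bullet. You import the ADRS smoothing bound \cref{thm:smoothing-bound} and compare it with $\xi_t$ directly, which reduces to $t>t_0(d/\lambda)^2(1+o(1))$. The paper instead stays inside the corollary. Taking $c=0$ (i.e.\ $a=\sqrt2$) in the first bullet gives $\rho_{\sqrt2/\xi_t}(\cL\setminus\{0\})\le 2^{(-\frac12\log_2(t/t_0)+o(1))n}<\frac12$ for $t>t_0$. Since $\eta_{1/2}(\cL^*)$ is defined by $\rho_{1/\eta_{1/2}(\cL^*)}(\cL\setminus\{0\})=\frac12$, and this mass decreases as $\eta$ grows, it follows that $\sqrt2/\xi_t<1/\eta_{1/2}(\cL^*)$.

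The two routes rest on the same counting bound \cref{thm:lattice-points}, as the constant $\beta/\sqrt{2\pi e}$ in \cref{thm:smoothing-bound} shows. Yours is shorter once that lemma is granted. The paper's is self-contained, and it is the argument that transfers to the sublattice in \cref{lem:refined-source-coset-smoothing}. There, $\xi_R>\sqrt2\,\eta_{1/2}(\Lambda_0)$ is again derived from a Gaussian-mass bound on $\Lambda_0^*$ plus monotonicity, rather than from a $\lambda_1$-based estimate.

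One small wording slip: \cref{thm:smoothing-bound} as stated already bounds $\eta_{1/2}(\cL^*)$ in terms of $\lambda_1(\cL)$, so you apply it to $\cL$ itself, not to $\cL^*$. Applied literally to $\cL^*$, it would bound $\eta_{1/2}(\cL)$ via $\lambda_1(\cL^*)$. The inequality you actually write down is the correct one, so nothing breaks.
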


\begin{proof}
For $s=a/\xi_t$, we have
$  \frac{\beta^2ns^2}{2\pi e\lambda^2}
  =
  \frac{a^2t_0}{2t}\frac{d^2}{\lambda^2}.$
The first claim follows from \cref{lem:gaussian-moments}, since
$d/\lambda\le1+1/n$.
The second claim follows from the first because its term at $x=0$
vanishes and
$\norm{x}^k\le
\lambda^k(1+\norm{x}/\lambda)^k$ for $x\ne0$.
The first bound follows by taking
$a=\sqrt{2^{c+1}}$ for given $c$ and $k=0$. Plugging $t>t_0/2$
gives the second bound.
If $t>t_0$, then the bound gives
\[
  \rho_{\sqrt2/\xi_t}(\cL\setminus\{0\})
  <\frac12
\]
for all sufficiently large $n$.
The definition and monotonicity of the smoothing parameter give the last claim.
\end{proof}

\subsection{The periodic Gaussian and the shortest parity class}
\label{sec:periodic-hessian}
\label{sec:shortest-class-hessian}

For $s>0$, define the periodic Gaussian function
\cite{AR05,CCC:DRS14} by
\begin{align}\label{eqn: periodic Gaussian}
    F_s(z)
  :=
  \frac{\rho_s(\cL+z)}
       {\rho_s(\cL)},
  \qquad z\in\R^n.
\end{align}
The following is the Poisson representation of $F_s$.
\begin{align}\label{lem:poisson}
  F_s(z)
  =
  \E_{X\sim D_{\cL^*,1/s}}
  \left[
    e^{2\pi i\ip{X}{z}}
  \right].
\end{align}

For $u\in\F_2^n$, define the Hessian
\[
  \cG_{t,d}(u)
  :=
  \nabla^2 F_{1/\xi_t}(Bu/2).
\]
We also define
\begin{align}
    \label{eqn:cAdef}
  \cA_{t,d}(u)
  :=
  \cG_{t,d}(u)
    +
    2\pi\xi_t^2
    F_{1/\xi_t}(Bu/2)\Id.
\end{align}
For fixed $u$, $\cA_{t,d}(u)$ and $\cG_{t,d}(u)$ differ by
$2\pi\xi_t^2F_{1/\xi_t}(Bu/2)\Id$, a constant multiple of $\Id$. Hence
the two matrices have the same eigenvectors and eigengaps.
We will use the following alternative formula.

\begin{lemma}\label{lem:hessian-formulas}
For every $u\in\F_2^n$,
\begin{align}\label{eqn: cGexpectation}
  \cG_{t,d}(u)
  =
  -4\pi^2
  \E_{X\sim D_{\cL^*,\xi_t}}
  \left[
    XX^T
    (-1)^{u^T(B^TX\bmod2)}
  \right].
\end{align}
Moreover, the following equality holds thus $\cA_{t,d}(u)$ is positive semidefinite:
\begin{align}\label{eqn: cApositive}
  \cA_{t,d}(u)
  =
  \frac{\pi^2\xi_t^4}{\rho_{1/\xi_t}(\cL)}
  \sum_{w\in Bu+2\cL}
  ww^T
  e^{-\pi\xi_t^2\norm{w}^2/4}.
\end{align}
\end{lemma}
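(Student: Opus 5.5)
We prove the two formulas by differentiating the two representations of $F_s$ at $s=1/\xi_t$ and evaluating at $z=Bu/2$.

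\textbf{The expectation formula.} We start from the Poisson representation \eqref{lem:poisson}, $F_{1/\xi_t}(z)=\E_{X\sim D_{\cL^*,\xi_t}}[e^{2\pi i\ip{X}{z}}]$, which we write as the absolutely convergent series $\sum_{X\in\cL^*}\rho_{\xi_t}(X)e^{2\pi i\ip{X}{z}}/\rho_{\xi_t}(\cL^*)$. Differentiating twice in $z$ term by term gives
\[
\nabla^2F_{1/\xi_t}(z)=-4\pi^2\E_{X\sim D_{\cL^*,\xi_t}}\left[XX^Te^{2\pi i\ip{X}{z}}\right].
\]
Term-by-term differentiation is justified because $\sum_X\norm{X}^2\rho_{\xi_t}(X)<\infty$, since Gaussian weights decay faster than any polynomial over a lattice. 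So the differentiated series converges uniformly in $z$.

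Now evaluate at $z=Bu/2$. Since $X\in\cL^*$, we have $B^TX\in\Z^n$, so
\[
e^{2\pi i\ip{X}{Bu/2}}=e^{\pi i\,u^T(B^TX)}=(-1)^{u^T(B^TX\bmod 2)}.
\]
This gives \eqref{eqn: cGexpectation}.

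\textbf{The positive-sum formula.} Write $s=1/\xi_t$ and differentiate the first representation $F_s(z)=\sum_{y\in\cL}\rho_s(y+z)/\rho_s(\cL)$ directly. Using $\cL=-\cL$, we may write this as $\sum_y \rho_s(y-z)$. For $g(x)=e^{-\pi\norm{x}^2/s^2}$ we have
\[
\nabla^2 g(x)=\left(\frac{4\pi^2}{s^4}xx^T-\frac{2\pi}{s^2}\Id\right)g(x).
\]
Summing over $x=y-z$ (again uniformly convergent, so differentiation under the sum is allowed) gives
\[
\nabla^2F_s(z)=\frac{4\pi^2}{s^4\rho_s(\cL)}\sum_{y\in\cL}(y-z)(y-z)^Te^{-\pi\norm{y-z}^2/s^2}-\frac{2\pi}{s^2}F_s(z)\Id.
\]
With $s^{-2}=\xi_t^2$ the identity term is exactly $2\pi\xi_t^2F_{1/\xi_t}(z)\Id$, so it cancels against the term added in \eqref{eqn:cAdef}.

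Now set $z=Bu/2$ and substitute $w=Bu-2y$. As $y$ ranges over $\cL$, $w$ ranges bijectively over $Bu+2\cL$ (using $-2\cL=2\cL$), and $y-z=-w/2$. Then
\[
(y-z)(y-z)^T=\tfrac14 ww^T,\qquad \norm{y-z}^2/s^2=\xi_t^2\norm{w}^2/4.
\]
Combining $4\pi^2\xi_t^4\cdot\tfrac14=\pi^2\xi_t^4$ yields \eqref{eqn: cApositive}. Positive semidefiniteness follows because this is a nonnegative combination of the rank-one PSD matrices $ww^T$.

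The only points needing care are the convergence justification for interchanging derivatives and sums, and the bookkeeping of the bijection and constants. There is no genuine obstacle.
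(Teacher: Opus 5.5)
Your proof is correct and follows essentially the same route as the paper. You differentiate the Poisson representation and evaluate at $Bu/2$ using $B^TX\in\Z^n$, then differentiate the defining sum and reindex it onto the coset $Bu+2\cL$; you use $w=Bu-2y$ where the paper uses $w=2y-Bu$, which makes no difference, and your justification for differentiating under the sum is a detail the paper leaves implicit.
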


\begin{proof}
Differentiating the expression in
\cref{lem:poisson} gives
\[
  \nabla^2F_s(z)
  =
  -4\pi^2
  \E_{X\sim D_{\cL^*,1/s}}
  \left[
    XX^T e^{2\pi i\ip{X}{z}}
  \right].
\]
Set $s=1/\xi_t$ and $z=Bu/2$.
Since $B^TX\in\Z^n$,
\[
  e^{2\pi i\ip{X}{Bu/2}}
  =
  e^{\pi i u^TB^TX}
  =
  (-1)^{u^T(B^TX\bmod2)}.
\]
This proves the first identity.

Differentiating the expression in \cref{eqn: periodic Gaussian} and
using $\cL=-\cL$ gives
\[
  \nabla^2F_s(z)
  =
  \frac1{\rho_s(\cL)}
  \sum_{y\in\cL}
  \left(
    \frac{4\pi^2}{s^4}(y-z)(y-z)^T
    -
    \frac{2\pi}{s^2}\Id
  \right)                                    
  \cdot
  e^{-\pi\norm{y-z}^2/s^2}.
\]
It follows that
\[
    \nabla^2F_s(z)+\frac{2\pi}{s^2}F_s(z)\Id
  =
  \frac{4\pi^2}{s^4\rho_s(\cL)}
  \sum_{y\in\cL}
  (y-z)(y-z)^T
  e^{-\pi\norm{y-z}^2/s^2}.
\]
Again set $s=1/\xi_t$, $z=Bu/2$, and reindex by
$w=2y-Bu$. 
The second identity follows from
$-Bu+2\cL=Bu+2\cL$.
\end{proof}

The following lemma shows that, if a shortest vector $v$ is included in the coset $Bu+2\cL$, the matrix $\cA_{t,d}(u)$ is approximately proportional to the matrix $vv^T$.

\begin{lemma}\label{lem:shortest-class-hessian}
Let $t>t_0$ and
$  \lambda=\lambda_1(\cL)
  \le d\le
  (1+1/n)\lambda.$
Let $v\in\cL$ satisfy $\norm{v}=\lambda$ and
$u\in\F_2^n$ satisfy
$  v\in Bu+2\cL.$
For the normalized vector $\widehat v=v/\lambda$,
it holds that
\begin{align}\label{eqn: mudef}
  \cA_{t,d}(u)
  =
  \mu_{t,d}\widehat v\widehat v^T
  +
  \cR_{t,d}(u)\qquad \text{ for }\quad   \mu_{t,d}
  :=
  \frac{
    2\pi^2\xi_t^4\lambda^2
  }{
    \rho_{1/\xi_t}(\cL)
  }
  e^{-\pi\xi_t^2\lambda^2/4}=
  \xi_t^4\lambda^2
  2^{-tn+O(1)},
\end{align}
where $\cR_{t,d}(u)$ is positive semidefinite such that
$  \norm{\cR_{t,d}(u)}_{\mathrm{op}}
  \le
  \mu_{t,d}\cdot
  2^{
    \left(
      -\frac12\log_2\frac{t}{t_0}
      +o(1)
    \right)n
  }.$
\end{lemma}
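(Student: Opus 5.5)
We start from the positive expansion \cref{eqn: cApositive} of \cref{lem:hessian-formulas}. Since $v\in Bu+2\cL$, the coset $Bu+2\cL$ contains both $v$ and $-v$ (as $-v=v-2v$). These two terms contribute
\[
\frac{\pi^2\xi_t^4}{\rho_{1/\xi_t}(\cL)}\cdot 2vv^T e^{-\pi\xi_t^2\lambda^2/4}=\mu_{t,d}\,\widehat v\widehat v^T .
\]
We define $\cR_{t,d}(u)$ to be the remaining sum over $w\in (Bu+2\cL)\setminus\{\pm v\}$, multiplied by the same prefactor. It is a nonnegative combination of rank-one matrices $ww^T$, so it is positive semidefinite.

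\textbf{Size of $\mu_{t,d}$.} Since $t>t_0>t_0/2$, \cref{cor:working-scale} gives $\rho_{1/\xi_t}(\cL)=1+2^{-\Omega(n)}$. Also $\pi\xi_t^2\lambda^2/4=nt\ln2\cdot\lambda^2/d^2$. Because $d/\lambda\in[1,1+1/n]$, this exponent equals $nt\ln 2-O(1)$. Hence $e^{-\pi\xi_t^2\lambda^2/4}=2^{-tn+O(1)}$, which gives $\mu_{t,d}=\xi_t^4\lambda^2 2^{-tn+O(1)}$.

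\textbf{Bound on $\cR_{t,d}(u)$.} We bound the operator norm by the trace:
\[
\opnorm{\cR_{t,d}(u)}\le \frac{\pi^2\xi_t^4}{\rho_{1/\xi_t}(\cL)}\sum_{w\in(v+2\cL)\setminus\{\pm v\}}\norm{w}^2e^{-\pi\xi_t^2\norm{w}^2/4}.
\]
The key point is that every $w=v+2y\neq\pm v$ is much longer than $\lambda$, in the sense that $\norm{w}^2-\lambda^2$ is at least of order $\lambda^2$. To use this, we write $w=v+2y$ with $y\in\cL\setminus\{0,-v\}$. Then $z:=(w-v)/2=y$ and $z':=(w+v)/2=y+v$ are both nonzero lattice vectors, so
\[
\norm{w}^2=\norm{z}^2+\norm{z'}^2+\ldots
\]
More cleanly, the parallelogram law gives
\[
\norm{w}^2+\norm{v}^2=2\norm{y}^2+2\norm{y+v}^2 .
\]
Set $a=\norm{y}$ and $b=\norm{y+v}$, both at least $\lambda$. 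Then $\norm{w}^2=2a^2+2b^2-\lambda^2$, and therefore
\[
e^{-\pi\xi_t^2\norm{w}^2/4}=e^{\pi\xi_t^2\lambda^2/4}\,e^{-\pi\xi_t^2 a^2/2}\,e^{-\pi\xi_t^2 b^2/2}.
\]
We bound the factor $e^{-\pi\xi_t^2b^2/2}$ by $e^{-\pi\xi_t^2\lambda^2/2}$. This leaves a sum over $y\in\cL\setminus\{0\}$ of $(2a^2+2b^2)\,e^{-\pi\xi_t^2a^2/2}$, with an overall factor $e^{\pi\xi_t^2\lambda^2/4}e^{-\pi\xi_t^2\lambda^2/2}=e^{-\pi\xi_t^2\lambda^2/4}$. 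That overall factor is exactly the one appearing in $\mu_{t,d}$.

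The polynomial weight is handled by $b\le a+\lambda$, which gives $2a^2+2b^2\le O(\lambda^2(1+a/\lambda)^2)$. The remaining lattice sum is
\[
\sum_{y\in\cL\setminus\{0\}}\lambda^2(1+\norm{y}/\lambda)^2\rho_{\sqrt2/\xi_t}(y).
\]
By \cref{cor:working-scale} with $a=\sqrt2$ and $k=2$, this is at most
\[
\lambda^2 2^{(\frac12\log_2\frac{t_0}{t}+o(1))n}=\lambda^2 2^{(-\frac12\log_2\frac{t}{t_0}+o(1))n}.
\]
Dividing by $\mu_{t,d}$ cancels $\xi_t^4\lambda^2e^{-\pi\xi_t^2\lambda^2/4}/\rho_{1/\xi_t}(\cL)$ up to constants, which yields the claimed relative bound.

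\textbf{Main obstacle.} The delicate step is exactly the one above: discarding one factor $e^{-\pi\xi_t^2b^2/2}\le e^{-\pi\xi_t^2\lambda^2/2}$ while keeping the other as a Gaussian over $\cL\setminus\{0\}$. We must check that this loses nothing exponential, so that the exponent comes out as exactly $-\frac12\log_2(t/t_0)$ with no extra loss. If $\norm{y+v}$ is not at least $\lambda$, the bound breaks; this is why we exclude $y=-v$, the case $w=-v$. We must also confirm that the $1+1/n$ slack in $d$ contributes only $O(1)$ in the exponent.
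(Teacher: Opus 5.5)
Your proposal is correct and follows the paper's proof essentially step for step. The $\pm v$ terms give $\mu_{t,d}\widehat v\widehat v^T$, and the positive semidefinite remainder is bounded by its trace. The parallelogram identity then lets you drop one Gaussian factor using $\norm{y+v}\ge\lambda$; you parametrize $w=v+2y$ where the paper uses $w=-v+2y$, which is cosmetic. Finally, \cref{cor:working-scale} with $a=\sqrt2$, $k=2$ gives the relative bound, and $\rho_{1/\xi_t}(\cL)=1+2^{-\Omega(n)}$ together with the $O(1)$ exponent slack from $d\le(1+1/n)\lambda$ gives the size of $\mu_{t,d}$, exactly as in the paper.
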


\begin{proof}
Since
$  Bu+2\cL=v+2\cL,$
the terms $w=v$ and $w=-v$ in
\cref{eqn: cApositive} contribute
\[
  \frac{
    2\pi^2\xi_t^4
  }{
    \rho_{1/\xi_t}(\cL)
  }
  vv^T
  e^{-\pi\xi_t^2\lambda^2/4}
  =
  \mu_{t,d}\widehat v\widehat v^T.
\]
Let $\cR_{t,d}(u)$ be the sum of the remaining terms.
It is positive semidefinite.

Write every element of $v+2\cL$ as
$w=-v+2y$ for $y\in \cL$.
The two removed terms correspond to $y=0$ and $y=v$.
Since $\cR_{t,d}(u)$ is positive semidefinite, we have
$  \norm{\cR_{t,d}(u)}_{\mathrm{op}}
  \le
  \operatorname{Tr}\left(\cR_{t,d}(u)\right).$ This gives
\[
\begin{split}
  \frac{
    \norm{\cR_{t,d}(u)}_{\mathrm{op}}
  }{
    \mu_{t,d}
  }
  &\le
  2
  \sum_{y\in\cL\setminus\{0,v\}}
  \frac{\norm{y-v/2}^2}{\lambda^2}
  e^{
    -\pi\xi_t^2
    \left(
      \norm{y-v/2}^2-\lambda^2/4
    \right)
  }.
\end{split}
\]
For $y\notin\{0,v\}$, both $y$ and $y-v$ are nonzero
lattice vectors. Hence
\[
  \norm{y-v/2}^2-\frac{\lambda^2}{4}
  =
  \frac{
    \norm{y}^2+\norm{y-v}^2-\lambda^2
  }{2}
  \ge
  \frac{\norm{y}^2}{2}.
\]
Moreover,
\[ \frac{\norm{y-v/2}}{\lambda}
  \le
    \frac{\norm{y}}{\lambda}+\frac{\norm{v}}{2\lambda}
  \le
  \frac{\norm{y}}{\lambda}+1
\]
Combining these two bounds and \cref{cor:working-scale} with
$a=\sqrt2$ and $k=2$ gives
\[
  \frac{
    \norm{\cR_{t,d}(u)}_{\mathrm{op}}
  }{
    \mu_{t,d}
  }\le
  2  \sum_{y\in\cL\setminus\{0\}}
  \left(
    1+\frac{\norm{y}}{\lambda}
  \right)^2
  \rho_{\sqrt2/\xi_t}(y)
  \le
  2^{
    \left(
      -\frac12\log_2\frac{t}{t_0}
      +o(1)
    \right)n
  }.
\]

Finally, the asymptotic of $\mu_{t,d}$ follows from the choice of $\xi_t$ in \cref{eqn: def_xi_t0} and $\lambda \le d \le (1+1/n)\lambda$, as well as $\rho_{1/\xi_t}(\cL)
  =
  1+2^{-\Omega(n)}$
by \cref{cor:working-scale}.
\end{proof}

\subsection{Estimating the Hessians and recovering a shortest vector}
\label{sec:hessian-estimation}
\label{sec:recovering-shortest-vector}
\label{sec:direct-hessian}

By \cref{lem:shortest-class-hessian}, the exact Hessian at the parity
class of a shortest vector $v$ contains some information about $v$.
In this section, we show how to actually recover $v$ using this property. As discussed in the overview, this can be done by the three steps; 1) compute an estimation of Hessian, 2) compute a unit vector close to the direction of $v$, and 3) recover the vector $v$ using (preprocessing) $\BDD$ oracle from \cref{thm:bdd-preprocessing}. 

Fix $0<t<1/4$ and let
$N_t=\lceil n^5\log n\cdot 2^{2tn}\rceil$.\footnote{Throughout this paper, we do not try to optimize the polynomial factors.}
For independent $X_1,\ldots,X_{N_t}\sim D_{\cL^*,\xi_t}$ and
$u\in\F_2^n$, define
\begin{align}
\label{eqn: Gesti} 
  \widehat{\cG}_{t,d}(u)
  :=
  -\frac{4\pi^2}{N_t}
  \sum_{i=1}^{N_t}
  X_iX_i^T(-1)^{u^T(B^TX_i\bmod2)}
  \ind_{\{\norm{X_i}\le C_1\xi_t\sqrt n\}},   
\end{align}
where $C_1$ is the constant in \cref{lem:dgs-tail} for $c=1$. The parameters will be chosen so that we can use the $\DGS$ algorithm in \cref{thm:dgs-sampling}.

The following lemma shows that $\widehat \cG$ approximate $\cG$ accurately.

\begin{lemma}\label{lem:hessian-estimation}
For every fixed $0<t<1/4$, with probability
$1-2^{-\Omega(n\log n)}$, it holds for all $u\in \F_2^n$ that
\[
  \max_{u\in\F_2^n}
  \opnorm{
    \widehat{\cG}_{t,d}(u)-\cG_{t,d}(u)
  }
  \le
  2\xi_t^2 2^{-tn}.
\]
\end{lemma}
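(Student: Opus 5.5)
We split the error into a deterministic truncation bias and a random fluctuation, and then take a union bound over all $2^n$ classes $u$. Write $Y_i(u):=-4\pi^2X_iX_i^T(-1)^{u^T(B^TX_i\bmod2)}\ind_{\{\norm{X_i}\le C_1\xi_t\sqrt n\}}$. Then $\widehat\cG_{t,d}(u)$ is the empirical mean of $N_t$ i.i.d.\ copies of $Y(u)$. By \cref{eqn: cGexpectation}, the truncated expectation differs from $\cG_{t,d}(u)$ only by the tail term
\[
4\pi^2\,\E\left[XX^T(-1)^{\cdots}\ind_{\{\norm X>C_1\xi_t\sqrt n\}}\right].
\]
Its operator norm is at most $4\pi^2\E[\norm X^2\ind_{\{\norm X>C_1\xi_t\sqrt n\}}]\le4\pi^2\xi_t^2n2^{-n}$ by \cref{lem:dgs-tail} with $c=1$. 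Since $t<1/4$, this is $o(\xi_t^22^{-tn})$, so it uses at most half of the budget $2\xi_t^2 2^{-tn}$.

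For the fluctuation, fix $u$ and work entrywise, or better, in the Frobenius norm via $\opnorm{\cdot}\le\Frob{\cdot}$. Each entry of $Y(u)$ is bounded in absolute value by $4\pi^2C_1^2\xi_t^2n$ because of the truncation. Hoeffding (\cref{lem:Hoef}) applied to each of the $n^2$ entries with deviation $\delta=\xi_t^22^{-tn}/n$ gives failure probability at most
\[
2\exp\left(-\frac{2N_t\delta^2}{(8\pi^2C_1^2\xi_t^2n)^2}\right)=2\exp\left(-\Omega\left(\frac{N_t2^{-2tn}}{n^4}\right)\right).
\]
With $N_t=\lceil n^5\log n\cdot2^{2tn}\rceil$ this is $\exp(-\Omega(n\log n))$. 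If every entry deviates by at most $\delta$, the Frobenius norm of the deviation is at most $n\delta=\xi_t^22^{-tn}$. This covers the remaining half of the budget; constants can be absorbed by adjusting the polynomial factor, which the $n^5$ allows.

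Finally we take a union bound over the $n^2$ entries and the $2^n$ values of $u$. This multiplies the failure probability by $n^22^n=2^{O(n)}$, which is still $2^{-\Omega(n\log n)}$. Note the randomness is the same for all $u$, but the union bound does not need independence. The main obstacle is the bookkeeping of exponents: Hoeffding only yields $\exp(-\Theta(n\log n))$ thanks to the $\log n$ factor in $N_t$. Entrywise Hoeffding loses a factor $n^2$ from the Frobenius passage and $n^2$ from the squared range $(\xi_t^2 n)^2$ relative to $\xi_t^4$. We need to check that $n^5$ covers these losses, leaving $n\log n$ in the exponent. If it does not quite, we would use matrix Bernstein instead, or note that the constant in $\Omega$ can absorb $\pi$ and $C_1$ factors while the polynomial powers match exactly ($n^5/n^4=n$).
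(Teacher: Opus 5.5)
Your proposal is correct and matches the paper's proof step for step. The truncation bias is bounded by $4\pi^2\xi_t^2n2^{-n}\le\xi_t^22^{-tn}$ via the discrete Gaussian tail lemma. Each of the $n^2$ entries is controlled by Hoeffding with deviation $\xi_t^22^{-tn}/n$ and range $O(\xi_t^2n)$. A union bound over the $n^22^n$ events keeps the failure probability at $2^{-\Omega(n\log n)}$. The passage to operator norm (yours via $\opnorm{\cdot}\le\Frob{\cdot}$, the paper's via $\opnorm{M}\le n\max_{a,b}|M_{a,b}|$) costs the same factor $n$.

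Your closing hedge about matrix Bernstein is unnecessary. Your own count $N_t2^{-2tn}/n^4\ge n\log n$ already closes the argument.
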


\begin{proof}

Let $X\sim D_{\cL^*,\xi_t}$. Noting that $\norm{X}>C_1 \xi_t \sqrt{n}$ is truncated in $\widehat{\cG}$,
\[
  \E[\widehat{\cG}_{t,d}(u)]-\cG_{t,d}(u)
  =
  4\pi^2
  \E\left[
    XX^T
    (-1)^{u^T(B^TX\bmod2)}
    \ind_{\{\norm{X}>C_1\xi_t\sqrt n\}}
  \right].
\]
Using
$\opnorm{\E[M]}\le\E[\opnorm{M}]$ and
$\opnorm{XX^T}=\norm{X}^2$, Lemma~\ref{lem:dgs-tail} with $c=1$
gives for sufficiently large $n$
\begin{align}\label{eqn: lem35_opbound}
  \opnorm{
    \E[\widehat{\cG}_{t,d}(u)]-\cG_{t,d}(u)
  }
  \le
  4\pi^2
  \E\left[
    \norm{X}^2
    \ind_{\{\norm{X}>C_1\xi_t\sqrt n\}}
  \right]
  \le
  4\pi^2\xi_t^2n2^{-n} \le \xi_t^22^{-tn}.
\end{align}

We next bound the probability that the estimation $\widehat\cG$ is far from $\cG$. Fix $u\in\F_2^n$ and
$a,b\in\{1,\ldots,n\}$. The absolute value of the $(a,b)$-entry
of each summand in \cref{eqn: Gesti} is at most
\[
  4\pi^2
  |(X_i)_a(X_i)_b|
  \ind_{\{\norm{X_i}\le C_1\xi_t\sqrt n\}}
  \le
  4\pi^2C_1^2\xi_t^2n.
\]
Therefore, Hoeffding's inequality gives, using $N_t=\lceil n^5\log n\cdot2^{2tn}\rceil$,
\[
  \Prob\left[
    \left|
      \left(
        \widehat{\cG}_{t,d}(u)
        -
        \E[\widehat{\cG}_{t,d}(u)]
      \right)_{a,b}
    \right|
    >
    \frac{\xi_t^2}{n}2^{-tn}
  \right]
  \le
  2\exp\left(
    -\Omega\left(
      \frac{N_t}{n^4}2^{-2tn}
    \right)
  \right)
  =
  2e^{-\Omega(n\log n)}.
\]
By the union bound over all $2^n n^2$ choices of
$u\in\F_2^n$ and $a,b\in\{1,\ldots,n\}$,
with probability $1-2^{-\Omega(n\log n)}$, every entry of every difference $\widehat{\cG}_{t,d}(u)
        -
        \E[\widehat{\cG}_{t,d}(u)]$ has absolute value at most
$\xi_t^2 2^{-tn}/n$. Since
$  \opnorm{M}
  \le
  n\max_{a,b}|M_{a,b}|,$
the operator norm of every difference is at most $\xi_t^2 2^{-tn}$.
Together with \cref{eqn: lem35_opbound}, this concludes the proof.
\end{proof}

Given that $\widehat{\cG}_{t,d}(u) \approx {\cG}_{t,d}(u)  \approx \mu_{t,d} \widehat{v} \widehat{v}^T$ due to \cref{lem:shortest-class-hessian,lem:hessian-estimation}, the eigenvector of $\widehat{\cG}_{t,d}(u)$ must be close to the direction of $v$. The following lemma formalize this intuition.

\begin{lemma}\label{lem:hessian-recovery}
Let $t_0<t<1/4$ be fixed and let
$\lambda=\lambda_1(\cL)\le d\le(1+1/n)\lambda$.
Let $v\in\cL$ satisfy $\norm{v}=\lambda$, and let
$u\in\F_2^n$ satisfy $v\in Bu+2\cL$.
Suppose that the conclusion of \cref{lem:hessian-estimation} holds.
If $q$ is a unit eigenvector corresponding to the largest eigenvalue
of $\widehat{\cG}_{t,d}(u)$, then
\[
  \min_{\sigma\in\{-1,1\}}
  \norm{q-\sigma v/\lambda}
  =
  O(n^{-1/2}).
\]
Consequently, the $n^{-1/3}$-$\BDD$ query at $dq$ returns one of
$\pm v$ for all sufficiently large $n$.
\end{lemma}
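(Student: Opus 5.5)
The plan is to treat $\widehat{\cG}_{t,d}(u)$ as a rank-one matrix $\mu_{t,d}\widehat v\widehat v^T$ plus a shift by a multiple of $\Id$ plus a small perturbation, and to compare Rayleigh quotients (a Davis--Kahan style argument done by hand). By \cref{eqn:cAdef} and \cref{lem:shortest-class-hessian},
\[
  \widehat{\cG}_{t,d}(u)
  = \mu_{t,d}\widehat v\widehat v^T - c\,\Id + \Delta,
  \qquad c:=2\pi\xi_t^2F_{1/\xi_t}(Bu/2),\quad
  \Delta:=\cR_{t,d}(u)+\bigl(\widehat{\cG}_{t,d}(u)-\cG_{t,d}(u)\bigr).
\]
The shift $c\,\Id$ does not affect eigenvectors or Rayleigh-quotient differences. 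So everything reduces to showing $\opnorm{\Delta}/\mu_{t,d}=O(1/n)$.

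\textbf{Bounding the perturbation.} I will bound the two parts of $\Delta$ separately.
\begin{itemize}
\item For $\cR_{t,d}(u)$: \cref{lem:shortest-class-hessian} gives $\opnorm{\cR_{t,d}(u)}\le \mu_{t,d}2^{-\Omega(n)}$, because $t>t_0$ is fixed.
\item For the estimation error: the hypothesis (conclusion of \cref{lem:hessian-estimation}) gives $\opnorm{\widehat{\cG}_{t,d}(u)-\cG_{t,d}(u)}\le 2\xi_t^22^{-tn}$.
\end{itemize}
Dividing the second bound by $\mu_{t,d}=\xi_t^4\lambda^2 2^{-tn+O(1)}$ gives $O(1/(\xi_t^2\lambda^2))$. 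By \cref{eqn: easy_xi}, concretely $\xi_t^2\lambda^2=\frac{4nt\ln 2}{\pi}\cdot\frac{\lambda^2}{d^2}=\Theta(n)$, this is $O(1/n)$. Hence $\opnorm{\Delta}\le \eps\mu_{t,d}$ with $\eps=O(1/n)$.

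\textbf{Eigenvector closeness.} Let $\gamma=\ip{q}{\widehat v}$. Since $q$ maximizes the Rayleigh quotient of $\widehat{\cG}_{t,d}(u)$,
\[
  q^T\widehat{\cG}_{t,d}(u)q \ge \widehat v^T\widehat{\cG}_{t,d}(u)\widehat v \ge \mu_{t,d}-c-\eps\mu_{t,d}.
\]
On the other hand, $q^T\widehat{\cG}_{t,d}(u)q\le \mu_{t,d}\gamma^2-c+\eps\mu_{t,d}$. Combining the two gives $\gamma^2\ge 1-2\eps$. Choose $\sigma=\operatorname{sign}(\gamma)$. Then
\[
  \norm{q-\sigma\widehat v}^2=2-2|\gamma|\le 2(1-\gamma^2)\le 4\eps=O(1/n),
\]
which is the claimed $O(n^{-1/2})$ bound.

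\textbf{BDD step.} I will use the triangle inequality:
\[
  \norm{dq-\sigma v}\le d\norm{q-\sigma\widehat v}+|d-\lambda|\le (1+1/n)\lambda\cdot O(n^{-1/2})+\lambda/n.
\]
For large $n$ this is $<n^{-1/3}\lambda_1(\cL)$. So $dq$ satisfies the $n^{-1/3}$-$\BDD$ promise, and its unique closest lattice vector is $\sigma v\in\{\pm v\}$, which the oracle of \cref{thm:bdd-preprocessing} returns.

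The only delicate point is making sure the estimation error is only polynomially, not exponentially, smaller than $\mu_{t,d}$: the ratio is exactly $\Theta(1/(\xi_t^2\lambda^2))$. This is why the statement gives $n^{-1/2}$ rather than exponential closeness, and it needs $\xi_t^2\lambda^2=\Theta(n)$ to be checked carefully.
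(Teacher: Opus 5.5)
Your proposal is correct and follows the paper's proof essentially step for step. Both arguments use the decomposition $\widehat{\cG}_{t,d}(u)=\mu_{t,d}\widehat v\widehat v^T-c\,\Id+E$ with $E=\cR_{t,d}(u)+(\widehat{\cG}_{t,d}(u)-\cG_{t,d}(u))$, the bound $\opnorm{E}=O(\mu_{t,d}/n)$ coming from $\xi_t^2\lambda^2=\Theta(n)$, the Rayleigh-quotient comparison that gives $1-\ip{q}{\widehat v}^2=O(1/n)$, and the triangle-inequality estimate $\norm{dq-\sigma v}\le d\norm{q-\sigma\widehat v}+(d-\lambda)$ for the $\BDD$ step.
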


\begin{proof}
Let $\widehat v=v/\lambda$. By
\cref{lem:shortest-class-hessian} and the definition of
$\cA_{t,d}(u)$ in \cref{eqn:cAdef},
\begin{align}
    \label{eqn: cGapporox}
  \widehat{\cG}_{t,d}(u)
  =
  -2\pi\xi_t^2F_{1/\xi_t}(Bu/2)\Id
  +
  \mu_{t,d}\widehat v\widehat v^T
  +
  E,
\end{align}
where
$E=\cR_{t,d}(u)+\widehat{\cG}_{t,d}(u)-\cG_{t,d}(u)$ and $\opnorm{\cR_{t,d}(u)}/\mu_{t,d}=2^{-\Omega(n)}$.

Since $\lambda\le d\le(1+1/n)\lambda$,
$\xi_t^2\lambda^2=\Theta(n)$.
Also,
$\mu_{t,d}=\Theta(\xi_t^4\lambda^2 2^{-tn})$
by \cref{lem:shortest-class-hessian}.
\cref{lem:hessian-estimation} gives
\[
  \frac{
    \opnorm{
      \widehat{\cG}_{t,d}(u)-\cG_{t,d}(u)
    }
  }{\mu_{t,d}}
  =
  O\left(
    \frac1{\xi_t^2\lambda^2}
  \right)
  =
  O(n^{-1}).
\]
Together with
$\opnorm{\cR_{t,d}(u)}/\mu_{t,d}=2^{-\Omega(n)}$,
this gives $\opnorm{E}=O(\mu_{t,d} n^{-1})$.

Since $q$ is a eigenvector corresponding to the largest
eigenvalue of $\widehat{\cG}_{t,d}(u)$, we have
$q^T\widehat{\cG}_{t,d}(u)q
\ge
\widehat v^T\widehat{\cG}_{t,d}(u)\widehat v$.
Substituting the expression for $\widehat{\cG}_{t,d}(u)$ using \cref{eqn: cGapporox}, the
multiple of $\Id$ has the same value on both sides because
$\norm{q}=\norm{\widehat v}=1$.
The remaining inequality is
$\mu_{t,d}\ip{q}{\widehat v}^2+q^TEq
\ge
\mu_{t,d}+\widehat v^TE\widehat v$.
Since $|q^TEq|\le\opnorm{E}$ and
$|\widehat v^TE\widehat v|\le\opnorm{E}$, it follows that
\[
  \mu_{t,d}
  \left(
    1-\ip{q}{\widehat v}^2
  \right)
  \le q^T E q - \widehat v^T E \widehat v\le
  2\opnorm{E}= O(\mu_{t,d}n^{-1}).
\]

Choose $\sigma\in\{-1,1\}$ such that
$\ip{q}{\sigma\widehat v}\ge0$. It follows that
$\norm{q-\sigma\widehat v}^2
\le2(1-\ip{q}{\widehat v}^2)=O(n^{-1})$.

Finally,
$\norm{dq-\sigma v}
\le d\norm{q-\sigma\widehat v}+d-\lambda
=O(\lambda n^{-1/2})<\lambda n^{-1/3}$
for all sufficiently large $n$.
The last claim follows from \cref{thm:bdd-preprocessing}.
\end{proof}

Combining the estimate and the recovery argument gives the direct
algorithm.

\begin{algorithmblock}{Direct Hessian $\SVP$}\label{alg:direct-exhaustive}
  \item Apply LLL reduction to obtain a nonzero $x\in\cL$ and use the scales
        $d_j=(1+1/n)^{-j}\norm{x}$ for $0\le j\le n^2$.
        Construct the preprocessing $n^{-1/3}$-$\BDD$ data.
  \item Fix $t_0<t<1/4$.
        At every scale $d_j$, call $\DGS$ from \cref{thm:dgs-sampling} with
        $\kappa=n^2$ and keep $N_t$ samples from
        $D_{\cL^*,\xi_t(d_j)}$.
        Abort the scale if fewer than $N_t$ samples are returned.
  \item For every $u\in\F_2^n$, evaluate
        $\widehat{\cG}_{t,d_j}(u)$ directly and compute a unit eigenvector
        $q_u$ corresponding to its largest eigenvalue.
        Query $\BDD$ at $d_jq_u$ and retain the output only after exact
        lattice-membership verification.
  \item Return the shortest verified nonzero vector over all scales and
        all $u$.
\end{algorithmblock}

\begin{theorem}\label{thm:direct-hessian-svp}
\cref{alg:direct-exhaustive} solves
Search-$\SVP$ with constant probability in time $2^{1.46295n+o(n)}$
and space
$  2^{n/2+o(n)}.$
\end{theorem}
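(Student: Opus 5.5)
The plan is to verify correctness at the right scale and then account for time and space, choosing $t$ just above $t_0$.

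\emph{Correctness.} LLL yields a nonzero $x$ with $\norm{x}\le 2^{n}\lambda_1(\cL)$. Hence some $j\le n^2$ satisfies $\lambda\le d_j\le(1+1/n)\lambda$, since $(1+1/n)^{n^2}\approx e^n>2^n$. Fix this scale. Because $t>t_0$, \cref{cor:working-scale} gives $\xi_t>\sqrt2\,\eta_{1/2}(\cL^*)$. So \cref{thm:dgs-sampling} returns $M$ samples whose distribution is $e^{-\Omega(n^2)}$-close to i.i.d.\ $D_{\cL^*,\xi_t}$.

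To collect $N_t=2^{2tn+O(\log n)}$ samples, I call $\DGS$ about $N_t/2^{n/2}$ times; this count is at least one, because $2t>2t_0>1/2$. I only need to keep the $n\times n$ running statistics, or equivalently process the samples in batches, so that space stays at $2^{n/2+o(n)}$.

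On the ideal i.i.d.\ samples, \cref{lem:hessian-estimation} holds with probability $1-2^{-\Omega(n\log n)}$. The statistical distance adds only $N_t2^{-n/2}e^{-\Omega(n^2)}$, which is negligible. Let $u$ be the parity class of a shortest vector $v$. By \cref{lem:hessian-recovery}, the BDD query at $d_jq_u$ returns $\pm v$, provided the preprocessing of \cref{thm:bdd-preprocessing} succeeded; this happens with constant probability. Since every output is verified and the shortest one is returned, the algorithm outputs a vector of norm $\lambda$ with constant probability.

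\emph{Complexity.} Per scale, sampling costs $2^{2tn+o(n)}$. Step 3 is the dominant term: naively it costs $2^n\cdot N_t\cdot\poly(n)=2^{(1+2t)n+o(n)}$. The eigenvector computation and the $2^{o(n)}$-time BDD query per $u$ add only $2^{n+o(n)}$. Space consists of the DGS space $2^{n/2+o(n)}$, the BDD advice $2^{o(n)}$, and the $N_t$ stored samples.

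\emph{Main obstacle.} Storing the samples would cost $2^{2tn}>2^{n/2}$ space. I will avoid this by evaluating $\widehat\cG_{t,d_j}(u)$ through regenerating the sample stream. Concretely, I fix the DGS randomness by a seed, or equivalently run batches of $2^{n/2}$ samples and loop over $u$ within each batch while accumulating one matrix per $u$. That scheme, however, would need $2^n$ matrices, which is too many.

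The seed approach works instead. For each $u$, I rerun the seeded sampler and accumulate the $n\times n$ matrix in $2^{n/2+o(n)}$ space. The cost per $u$ becomes $2^{2tn+o(n)}$ sampling time, so the total is still $2^{(1+2t)n+o(n)}$. Since the time is only subexponentially more than the evaluation, the two costs match.

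Taking $t\downarrow t_0$ gives the exponent $1+2t_0=1.46295$. The subtle points are that sampling runs in expected time, which I handle with the truncation convention, and the approach of $t$ to $t_0$, which I absorb into the $o(n)$ by letting $t=t_0+o(1)$ or by fixing $t$ within $10^{-6}$ of $t_0$.
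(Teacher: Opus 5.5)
There is a genuine gap, and it comes from an arithmetic slip: the inequality $2t_0>1/2$ is false. Since $t_0=\beta^2/(4e\ln 2)=0.23147\ldots$, we have $2t_0\approx 0.463<1/2$. \cref{alg:direct-exhaustive} fixes $t_0<t<1/4$, so $2t<1/2$ and $N_t=\lceil n^5\log n\cdot 2^{2tn}\rceil<2^{n/2}$ for large $n$. A single $\DGS$ call therefore already supplies all $N_t$ samples, and storing them costs only $2^{2tn+o(n)}\le 2^{n/2+o(n)}$ space. Your ``main obstacle'' does not exist; your own earlier accounting (DGS space, $\BDD$ advice, and the $N_t$ stored samples) was already a complete space bound.

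Worse, the seeded-regeneration workaround you adopt breaks the time bound. The running-time guarantee of \cref{thm:dgs-sampling} is $2^{n/2+o(n)}$ per call, regardless of how many of its outputs you keep. Rerunning it once per parity class $u$ therefore costs $2^{n}\cdot 2^{n/2+o(n)}=2^{1.5n+o(n)}$, which exceeds $2^{1.46295n}$. Your estimate of $2^{2tn+o(n)}$ sampling time per $u$ holds only under the false assumption $2t>1/2$. The workaround also replaces \cref{alg:direct-exhaustive}, which is the algorithm the theorem is about.

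The repair is exactly the paper's argument. At each scale, keep the $N_t$ samples from one $\DGS$ call. Evaluate each $\widehat{\cG}_{t,d_j}(u)$ directly in $N_t\poly(n)$ time, holding only one $n\times n$ matrix and the current shortest verified vector at a time. This gives time $2^{n}\cdot N_t\cdot 2^{o(n)}=2^{(1+2t)n+o(n)}$ over the polynomially many scales, and space $2^{n/2+o(n)}$ for the sampler, the stored samples, and the $\BDD$ advice.

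Your correctness paragraph is fine and matches the paper: the existence of a good scale $d_j$, $\xi_t>\sqrt2\,\eta_{1/2}(\cL^*)$ from \cref{cor:working-scale}, then \cref{lem:hessian-estimation}, \cref{lem:hessian-recovery}, \cref{thm:bdd-preprocessing}, and exact verification. For the limit $t\to t_0$, take a fixed $t$ with $1+2t\le 1.46295$, which exists because $1+2t_0<1.46295$. Do not use $t=t_0+o(1)$, since the lemmas are stated only for fixed $t$.
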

\begin{proof}
The correctness follows from the previous lemmas. \cref{lem:hessian-estimation} shows that $\widehat{\cG}$ estimate $\cG$ for all inputs with overwhelming probability, and in that case \cref{lem:hessian-recovery} and \cref{thm:bdd-preprocessing} show the correctness of the final outcome.

We explain the time and space complexity below.
There are polynomially many choices of $d_j$ because of the property of LLL-reduced vectors. For each $d_j$, the
algorithm considers all $2^n$ values of $u \in \F_2^n$, and evaluating
$\widehat{\cG}_{t,d_j}(u)$ takes $N_t\poly(n)$ time for $N_t= \poly(n)2^{2tn}$. Hence the total
time is $2^{(1+2t)n+o(n)}.$
Because of \cref{lem:shortest-class-hessian}, 
we have $1+2t> 1+2t_0=1.4629\ldots,$ thus choosing $t\to t_0$ gives the desired time complexity.
All remaining steps take less time and do not affect the overall time complexity. The space complexity is due to $\DGS$ in \cref{thm:dgs-sampling}.
\end{proof}




\section{Batch Hessian Estimation}
\label{sec:walsh}

The factor $2^n$ in the $2^{1.4629n+o(n)}$ running time of the
previous section comes from evaluating
$\widehat{\cG}_{t,d}(u)$ separately for every $u\in\F_2^n$.
The definition of $\widehat{\cG}_{t,d}(u)$ uses the same samples for
all $u$, and \cref{lem:hessian-estimation} proves that all these
estimators are simultaneously accurate. 
This section optimizes the algorithm by batching the Hessian estimations. The main idea is to use the Walsh-Hadamard transform in the following lemma.

\begin{lemma}\label{lem:matrix-wht}
Let $A:\F_2^\ell\to\R^{n\times n}$. Given $A(x)$ for every
$x\in\F_2^\ell$, the matrices
\[
  \sum_{x\in\F_2^\ell}(-1)^{\theta\cdot x}A(x),
  \qquad \theta\in\F_2^\ell,
\]
can be computed in time $2^\ell\poly(n)$ and space
$2^\ell\poly(n)$.
\end{lemma}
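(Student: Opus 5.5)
The plan is to reduce the matrix-valued transform to $n^2$ scalar Walsh--Hadamard transforms, one for each matrix entry, and then run the standard in-place butterfly on each. For fixed indices $a,b\in[n]$, set $f_{a,b}(x):=A(x)_{a,b}$. The $(a,b)$-entry of $\sum_{x}(-1)^{\theta\cdot x}A(x)$ is exactly the scalar Walsh--Hadamard transform $\widehat f_{a,b}(\theta)=\sum_x(-1)^{\theta\cdot x}f_{a,b}(x)$. So it suffices to compute $\widehat f_{a,b}$ for all $\theta$ in time and space $2^\ell\poly(\ell)$. Equivalently, we can run the butterfly directly on the array of matrices, treating each matrix addition as costing $O(n^2)$.

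The scalar transform rests on the tensor-product structure of $(-1)^{\theta\cdot x}=\prod_{k=1}^{\ell}(-1)^{\theta_kx_k}$. We iterate over the coordinates $k=1,\dots,\ell$, maintaining an array $g_k:\F_2^\ell\to\R^{n\times n}$ with $g_0=A$ and
\[
  g_k(\theta_1,\ldots,\theta_k,x_{k+1},\ldots,x_\ell)
  =\sum_{x_1,\ldots,x_k\in\F_2}(-1)^{\sum_{j\le k}\theta_jx_j}A(x_1,\ldots,x_\ell).
\]
The update from $g_{k-1}$ to $g_k$ is the butterfly: for each pair of indices differing only in coordinate $k$, the values $(P,Q)$ at $x_k=0,1$ are replaced by $(P+Q,P-Q)$. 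A direct induction on $k$ shows that $g_\ell(\theta)$ is the desired sum.

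Each round performs $2^{\ell-1}$ butterflies, each consisting of two $n\times n$ matrix additions. That gives $O(2^\ell n^2)$ arithmetic operations per round and $O(\ell 2^\ell n^2)$ in total. The step that needs care is absorbing the factor $\ell$ into the stated bound $2^\ell\poly(n)$. In every application $\ell\le n$, so $\ell 2^\ell n^2\le 2^\ell n^3$. The space bound follows because the transform is done in place, storing $2^\ell$ matrices of $n^2$ entries each, plus $O(1)$ temporaries.

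A last remark on bit complexity: if the entries are rationals or fixed-precision reals, the $\ell$ rounds of additions increase magnitudes by at most a factor $2^\ell$. This adds only $\ell\le n$ bits to each entry, so each addition still takes $\poly(n)$ time, and the claimed bounds hold up to polynomial factors in the input length.
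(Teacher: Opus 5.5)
Your proposal is correct and follows the paper's proof: an in-place butterfly over the $\ell$ coordinates that replaces $(A(x),A(x+e_j))$ by $(A(x)+A(x+e_j),A(x)-A(x+e_j))$, giving $\ell 2^{\ell-1}$ updates on an array of $2^\ell$ matrices. Your explicit induction invariant and your remark that $\ell\le n$ absorbs the extra factor $\ell$ are details the paper leaves implicit.
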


\begin{proof}
Store the matrices $A(x)$ in an array indexed by $\F_2^\ell$.
For every $j\in\{1,\ldots,\ell\}$ and every $x\in\F_2^\ell$
with $x_j=0$, replace
\[
  \bigl(A(x),A(x+e_j)\bigr)
  \quad\text{by}\quad
  \bigl(A(x)+A(x+e_j),A(x)-A(x+e_j)\bigr),
\]
where $e_j$ is the $j$th standard basis vector.
After processing all $\ell$ coordinates, the entry indexed by
$\theta$ equals
$\sum_x(-1)^{\theta\cdot x}A(x)$.
There are $\ell2^{\ell-1}$ replacements, and the array contains
$2^\ell$ matrices.
\end{proof}

\begin{algorithmblock}{Walsh-Hadamard Hessian $\SVP$}
\label{alg:full-scan}
  \item Construct the preprocessing $\BDD$ data and the scale grid,
        choose $t$, and obtain the Discrete Gaussian samples at every scale $d_j$ as in
        \cref{alg:direct-exhaustive}. Set $\ell=\lfloor n/2\rfloor$.

  \item At each scale $d_j$ that was not aborted, define, for every
        sample $X_i$,
        \[
          k_i:=B^TX_i\bmod2,
          \qquad
          W_i:=
          -\frac{4\pi^2}{N_t}X_iX_i^T
          \ind_{\{\norm{X_i}\le C_1\xi_t(d_j)\sqrt n\}}.
        \]
        Write
        $k_i=(k_i',k_i'')\in
        \F_2^\ell\times\F_2^{n-\ell}$.

  \item For every $\theta''\in\F_2^{n-\ell}$, construct the array
        $A_{\theta''}:\F_2^\ell\to\R^{n\times n}$ given by
        \[
          A_{\theta''}(x)
          :=
          \sum_{\substack{1\le i\le N_t\\k_i'=x}}
          (-1)^{\theta''\cdot k_i''}W_i.
        \]
        Apply \cref{lem:matrix-wht} to $A_{\theta''}$.
        The matrix indexed by $\theta'\in\F_2^\ell$ is
        $\widehat{\cG}_{t,d_j}(\theta',\theta'')$.

        For each such matrix, compute a unit eigenvector $q$
        corresponding to its largest eigenvalue, query $\BDD$ at
        $d_jq$, and retain the output if it is a nonzero lattice
        vector.

  \item Return the shortest retained vector over all scales.
\end{algorithmblock}

\begin{theorem}\label{thm:full-scan}
\Cref{alg:full-scan} solves Search-$\SVP$ in time 
$  2^{n+o(n)}$ and 
$  2^{n/2+o(n)}
$ space with constant success
probability.
\end{theorem}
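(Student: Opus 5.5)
The proof has two parts. Correctness reduces to the analysis of \cref{alg:direct-exhaustive}: we show that the matrices produced in Step 3 are exactly the estimators $\widehat{\cG}_{t,d_j}(u)$ of \cref{eqn: Gesti}. Efficiency is then a cost count for the Walsh-Hadamard batching, with $t$ chosen close to $t_0$ so that the sampling cost stays below $2^{n}$.

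\textbf{Correctness.} Fix a scale $d_j$ and write $u=(\theta',\theta'')\in\F_2^\ell\times\F_2^{n-\ell}$. For a sample with $k_i=(k_i',k_i'')$,
\[
  (-1)^{u^T k_i}=(-1)^{\theta'\cdot k_i'}(-1)^{\theta''\cdot k_i''}.
\]
Group the sum in \cref{eqn: Gesti} by the value $x=k_i'$. This gives
\[
  \widehat{\cG}_{t,d_j}(u)=\sum_{x\in\F_2^\ell}(-1)^{\theta'\cdot x}A_{\theta''}(x),
\]
which is exactly the output of \cref{lem:matrix-wht} applied to $A_{\theta''}$. The algorithm therefore computes the same matrices as \cref{alg:direct-exhaustive}, for every $u$ and with the same samples.

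Take a scale with $\lambda\le d_j\le(1+1/n)\lambda$ and fix $t_0<t<1/4$. By \cref{cor:working-scale}, $\DGS$ from \cref{thm:dgs-sampling} applies at parameter $\xi_t$ and returns enough samples with high probability. \cref{lem:hessian-estimation} makes all $2^n$ estimators accurate simultaneously with probability $1-2^{-\Omega(n\log n)}$. On that event, \cref{lem:hessian-recovery} applied to the class $u$ of a shortest vector $v$ shows that the $\BDD$ query returns $\pm v$. Since every output is verified before it is kept, the shortest retained vector is a shortest vector. The failure sources are the $\BDD$ preprocessing, $\DGS$ closeness, and the estimation event. The first has constant probability and the other two are negligible, so the overall success probability is a constant.

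\textbf{Time.} There are $\poly(n)$ scales. At each scale, sampling costs $2^{n/2+o(n)}$ per call of $\DGS$ and there are $N_t/2^{n/2}$ calls, so sampling costs $\max(2^{n/2},2^{2tn})\cdot 2^{o(n)}$. For every one of the $2^{n-\ell}$ values of $\theta''$, the algorithm
\begin{itemize}[nosep]
  \item builds $A_{\theta''}$ in time $N_t\poly(n)+2^\ell\poly(n)$;
  \item applies the transform in time $2^\ell\poly(n)$;
  \item runs one eigenvector computation and one $2^{o(n)}$-time $\BDD$ query for each of the $2^\ell$ resulting matrices.
\end{itemize}
The total is
\[
  2^{n-\ell}\bigl(2^{2tn}+2^{\ell}\bigr)2^{o(n)}=2^{n+o(n)},
\]
because $\ell=\lfloor n/2\rfloor$ and $2t<1/2$. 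Here $2t<1/2$ holds by choosing $t\in(t_0,1/4)$, which is possible since $t_0\approx0.2315<1/4$.

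\textbf{Space.} The $N_t\le 2^{n/2+o(n)}$ samples, the $2^{\ell}=2^{n/2}$-entry array of $n\times n$ matrices, and $\DGS$ all fit in $2^{n/2+o(n)}$. Each $\theta''$ is processed sequentially and its array is discarded afterwards, so the array is never held for more than one $\theta''$.

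\textbf{Main obstacle.} The only delicate point is the regrouping identity above: the Walsh-Hadamard index must match the parity class $u$ consistently. This follows because $\chi$ depends only on $B^TX\bmod 2$, so the coordinate split of $k_i$ and of $u$ can be made identically. The remaining steps are bookkeeping, with constant factors in the failure probability absorbed by repetition.
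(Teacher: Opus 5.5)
Your proposal is correct and follows the paper's own proof. Both arguments first use the regrouping identity $\sum_{x}(-1)^{\theta'\cdot x}A_{\theta''}(x)=\widehat{\cG}_{t,d_j}(\theta',\theta'')$ to show the transform outputs are exactly the estimators of \cref{thm:direct-hessian-svp}, so correctness carries over. Both then count time as $2^{n-\ell}(N_t+2^\ell)2^{o(n)}=2^{n+o(n)}$, using $\ell=\lfloor n/2\rfloor$ and $N_t<2^{n/2}$ for $t<1/4$, and count space as $2^{n/2+o(n)}$ because only one array is held at a time.
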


\begin{proof}
For $u=(\theta',\theta'')$, the matrix produced by the transform is
\[
  \sum_{x\in\F_2^\ell}
  (-1)^{\theta'\cdot x}A_{\theta''}(x)
  =
  \sum_{i=1}^{N_t}
  (-1)^{
    \theta'\cdot k_i'
    +
    \theta''\cdot k_i''
  }W_i=
  \widehat{\cG}_{t,d_j}(u).
\]
Thus the algorithm evaluates exactly the estimators defined in
\cref{eqn: Gesti}. The correctness then follows the same argument as in \cref{thm:direct-hessian-svp}.

Now we discuss the space and time complexity. In the algorithm, only one array $A_{\theta''}$ is stored at a time. The array contains
$2^\ell$ matrices, while the number of Discrete Gaussian samples is
$N_t<2^{n/2}$. Together with the preprocessing $\BDD$ data, this
gives space
$2^{n/2+o(n)}$.

For each fixed $\theta''$, constructing the array
$A_{\theta''}$ takes $N_t\poly(n)$ time and the following Hadamard-Walsh transform take
$2^{\ell+o(n)}$ time
by \cref{lem:matrix-wht}. Since there are
$2^{n-\ell}$ choices of $\theta''$, the total time for each $d_j$ is
$2^{n-\ell}(N_t+2^\ell)2^{o(n)}$.
Here $\ell=\lfloor n/2\rfloor$ and
$N_t=2^{2tn+o(n)}<2^{n/2}$ because $t<1/4$. Therefore, the total
time over the polynomially many scales is
$2^{n+o(n)}$.
\end{proof}

\section{Random Sublattice Coset Hessian}
\label{sec:random-row}
The algorithm in \cref{sec:walsh} computes $\widehat{\cG}(u)$ for all $u\in \F_2^n$, resulting in the complexity $2^{n+o(n)}$. 
This section shows that the same strategy works by estimating the Hessians using vectors in a sublattice cosets.



As always, we mainly focus on the case $\lambda \le d \le (1+1/n)\lambda$ for $\lambda=\lambda_1(\cL)$. We write
$g(t):=\frac12\log_2(t/t_0)$ for $t>0$. For our main interested parameter $d$,
\cref{cor:working-scale} gives, for every fixed $c$ and positive integer $k$,
\begin{align}\label{eqn:simplerho_g}
  \sum_{x\in\cL}\norm{x}^k\rho_{\sqrt{2^{1+c}}/\xi_t}(x)
\le
\lambda^k
2^{( c/2 -g(t)+o(1))n},\qquad\text{and}\qquad
  \rho_{\sqrt{2^{1+c}}/\xi_t}(\cL)
  \le
  1+2^{(c/2-g(t)+o(1))n}.
\end{align}

The following term from \cref{lem:shortest-class-hessian} is convenient for us.
\begin{align}\label{eqn: mutdconv}
    \mu_{t,d} = \xi_t^4 \lambda^2 2^{-tn+O(1)}  =\xi_t^4 d^2 2^{-tn+O(1)} \qquad\text{for}\quad \lambda\le d \le (1+1/n) \lambda.
\end{align}

\subsection{Gaussian and Hessian sums}
\label{sec:spectral-energy}
The following periodic Gaussian and Hessian mass bounds will be used in this section.
\begin{lemma}\label{lem:spectral-identities}
For every $s>0$,
\[
  \sum_{u\in\F_2^n}F_s(Bu/2)
  =
  \frac{\rho_{2s}(\cL)}{\rho_s(\cL)}.
\]
If $\lambda \le d\le (1+1/n) \lambda$, it holds that
$  \sum_{u\ne0}F_{1/\xi_t}(Bu/2)
  \le
  2^{(1/2-g(t)+o(1))n}.$
\end{lemma}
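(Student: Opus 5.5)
The identity follows by direct computation from the lattice-side definition \cref{eqn: periodic Gaussian}. We write
\[
  \sum_{u\in\F_2^n}\rho_s(\cL+Bu/2)=\sum_{u\in\F_2^n}\sum_{y\in\cL}\rho_s(y+Bu/2).
\]
As $u$ ranges over $\F_2^n$ (with any fixed set of representatives) and $y$ over $\cL$, the vector $y+Bu/2=(2y+Bu)/2$ ranges exactly once over $\tfrac12\cL$. This holds because $\cL=\bigsqcup_u(Bu+2\cL)$, using $\cL/2\cL\cong\F_2^n$. Hence the double sum equals
\[
  \rho_s(\tfrac12\cL)=\sum_{w\in\cL}e^{-\pi\norm{w}^2/(4s^2)}=\rho_{2s}(\cL).
\]
Dividing by $\rho_s(\cL)$ gives the first claim. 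Alternatively, one could sum the Poisson representation \cref{lem:poisson} over $u$. The characters $(-1)^{u^T(B^TX\bmod 2)}$ sum to $2^n\ind_{X\in2\cL^*}$, giving $2^n\rho_{1/s}(2\cL^*)/\rho_{1/s}(\cL^*)$. This agrees with the first computation by Poisson summation, but the direct computation is cleaner.

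For the bound, take $s=1/\xi_t$ and separate the $u=0$ term, which is $F_s(0)=1$. Then
\[
  \sum_{u\ne0}F_{1/\xi_t}(Bu/2)=\frac{\rho_{2/\xi_t}(\cL)}{\rho_{1/\xi_t}(\cL)}-1\le\rho_{2/\xi_t}(\cL)-1=\rho_{2/\xi_t}(\cL\setminus\{0\}),
\]
using $\rho_{1/\xi_t}(\cL)\ge1$ from the $x=0$ term. Now $2/\xi_t=\sqrt{2^{1+c}}/\xi_t$ with $c=1$, so \cref{eqn:simplerho_g} (i.e.\ \cref{cor:working-scale} with $a=2$, $k=0$) gives
\[
  \rho_{2/\xi_t}(\cL)\le1+2^{(1/2-g(t)+o(1))n}.
\]
Subtracting $1$ yields the stated bound $2^{(1/2-g(t)+o(1))n}$.

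The only point needing care is the exact bijection $\F_2^n\times\cL\to\tfrac12\cL$, $(u,y)\mapsto y+Bu/2$. Its independence from the choice of representatives follows from the $\cL$-periodicity of $F_s$. No $k>0$ moment is needed, and the hypothesis $\lambda\le d\le(1+1/n)\lambda$ enters only through \cref{cor:working-scale}. No term in this argument is a real obstacle; the proof is essentially a two-line computation.
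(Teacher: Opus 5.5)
Your proof is correct and follows the paper's argument. The paper also uses the partition of $\tfrac12\cL$ into the cosets $Bu/2+\cL$ to get $\rho_s(\tfrac12\cL)/\rho_s(\cL)=\rho_{2s}(\cL)/\rho_s(\cL)$, and then applies \cref{eqn:simplerho_g} with $c=1$; you additionally make explicit the step the paper leaves implicit, namely removing the $u=0$ term $F_s(0)=1$ and using $\rho_{1/\xi_t}(\cL)\ge 1$.
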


\begin{proof}
The cosets $Bu/2+\cL$ for $u\in\F_2^n$ partition
$\frac12\cL$. Hence,
\[
  \sum_{u\in\F_2^n}F_s(Bu/2)
  =
  \sum_{u\in\F_2^n} \frac{\rho_s(Bu/2 + \cL)}{\rho_s(\cL)}
  =
  \frac{\rho_s(\frac12\cL)}{\rho_s(\cL)}
  =
  \frac{\rho_{2s}(\cL)}{\rho_s(\cL)}.
\]
The final statement follows from \cref{eqn:simplerho_g} with $c=1$.
\end{proof}

\begin{lemma}\label{lem:hessian-square-mass}
Let $\lambda \le d\le (1+1/n)\lambda$ and $t>t_0/2$.
Then
\[
  \sum_{u\in\F_2^n}\Frob{\cA_{t,d}(u)}^2
  \le
  \mu_{t,d}^2
  2^{(2t-\min\{g(t),2g(t)\}+o(1))n}.
\]
\end{lemma}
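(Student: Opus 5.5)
The plan is to expand the Frobenius norms with the positive representation \cref{eqn: cApositive}. After summing over all parity classes $u$, the result becomes a double sum over pairs of lattice vectors, which factorizes into two Gaussian moment sums controlled by \cref{eqn:simplerho_g}.

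\textbf{Step 1 (expand).} Write $c:=\pi^2\xi_t^4/\rho_{1/\xi_t}(\cL)$. By \cref{eqn: cApositive},
\[
\Frob{\cA_{t,d}(u)}^2=c^2\sum_{w,w'\in Bu+2\cL}\ip{w}{w'}^2e^{-\pi\xi_t^2(\norm{w}^2+\norm{w'}^2)/4}.
\]
As $u$ ranges over $\F_2^n$, the cosets $Bu+2\cL$ partition $\cL$. Hence $\sum_u\Frob{\cA_{t,d}(u)}^2$ is the same sum taken over all pairs $(w,w')\in\cL^2$ with $w-w'\in2\cL$.

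\textbf{Step 2 (reparametrize).} For such a pair, both $w+w'=(w-w')+2w'$ and $w-w'$ lie in $2\cL$. Set $p=(w+w')/2\in\cL$ and $q=(w-w')/2\in\cL$. The map $(w,w')\mapsto(p,q)$ is a bijection onto $\cL\times\cL$, with inverse $w=p+q$, $w'=p-q$. The parallelogram law gives $\norm{w}^2+\norm{w'}^2=2\norm{p}^2+2\norm{q}^2$. Cauchy--Schwarz gives
\[
\ip{w}{w'}^2\le\norm{w}^2\norm{w'}^2\le(\norm p+\norm q)^4\le 8(\norm p^4+\norm q^4).
\]
Since $e^{-\pi\xi_t^2\norm{p}^2/2}=\rho_{\sqrt2/\xi_t}(p)$, we obtain
\[
\sum_u\Frob{\cA_{t,d}(u)}^2\le 16c^2\Bigl(\sum_{p\in\cL}\norm p^4\rho_{\sqrt2/\xi_t}(p)\Bigr)\rho_{\sqrt2/\xi_t}(\cL).
\]

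\textbf{Step 3 (apply the mass bounds).} Apply \cref{eqn:simplerho_g} with $c=0$ and $k=4$ to the first factor, giving at most $\lambda^4 2^{(-g(t)+o(1))n}$. The second factor is at most $1+2^{(-g(t)+o(1))n}\le 2^{(\max\{0,-g(t)\}+o(1))n}$. Their product is $\lambda^42^{(-\min\{g(t),2g(t)\}+o(1))n}$, since $-g+\max\{0,-g\}=-\min\{g,2g\}$. The hypothesis $t>t_0/2$ keeps us in the regime where \cref{cor:working-scale} applies and $\rho_{1/\xi_t}(\cL)=1+2^{-\Omega(n)}$.

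\textbf{Step 4 (normalize).} Since $\rho_{1/\xi_t}(\cL)\ge1$, we have $c^2\lambda^4\le\pi^4\xi_t^8\lambda^4$. Using \cref{eqn: mutdconv}, this equals $\mu_{t,d}^2\,2^{2tn+O(1)}$, which gives the claimed bound.

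The only step needing care is the reparametrization in Step 2 together with the case bookkeeping in Step 3. Using the crude inequality $(\norm p+\norm q)^4\le8(\norm p^4+\norm q^4)$ together with the full factor $\rho(\cL)$ avoids splitting into the cases $p=0$, $q=0$ and both nonzero. It also produces exactly the $\min\{g,2g\}$ exponent: the one-zero terms dominate when $g>0$, and the both-nonzero terms dominate when $g<0$.
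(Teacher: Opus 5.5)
Your proof is correct and follows the paper's argument: the same reparametrization $(w,w')\mapsto((w+w')/2,(w-w')/2)$ onto $\cL\times\cL$, the factorization into $\sum_{p}\norm{p}^4\rho_{\sqrt2/\xi_t}(p)$ times $\rho_{\sqrt2/\xi_t}(\cL)$, and then \cref{eqn:simplerho_g} with $c=0$ followed by \cref{eqn: mutdconv}. The only difference is that you bound $\ip{w}{w'}^2$ via Cauchy--Schwarz, whereas the paper uses the exact identity $\ip{w}{w'}=\norm{p}^2-\norm{q}^2$; this costs only a constant factor ($16$ instead of $4$), which is absorbed into $2^{o(n)}$.
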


\begin{proof}
We use $\Frob{X}^2 = \operatorname{Tr}(X^T X)$ for the real matrix $X$. Applying this to \cref{eqn: cApositive}, the left-hand side becomes
\[
\left(\frac{\pi^2\xi_t^4}{\rho_{1/\xi_t}(\cL)}\right)^2
\sum_{u \in \F_2^n} 
\sum_{z,z'\in Bu+2\cL} 
\operatorname{Tr}(zz^Tz'z'^T) e^{-\pi\xi_t^2(\norm{z}^2+\norm{z'}^2)/4}.
\]
As $u$ varies, the map from $(z,z')\in(Bu+2\cL)^2$ defined by $(z,z') \mapsto (p=\frac{z+z'}2,q=\frac{z-z'}2)$ is a bijection to $\cL^2$, and it satisfies 
\[
  \operatorname{Tr}(zz^Tz'z'^T)
  =
  (z^Tz')^2
  =
  (\norm{p}^2-\norm{q}^2)^2,\qquad \norm{z}^2+\norm{z'}^2 = 2(\norm{p}^2+\norm{q}^2).
\]
Therefore,
\[
  \sum_{u\in\F_2^n}\Frob{\cA_{t,d}(u)}^2
  =
  \frac{\pi^4\xi_t^8}{\rho_{1/\xi_t}(\cL)^2}
  \sum_{p,q\in\cL}
  (\norm{p}^2-\norm{q}^2)^2
  e^{-\pi\xi_t^2(\norm{p}^2+\norm{q}^2)/2}.
\]
Let
$w(x)=e^{-\pi\xi_t^2\norm{x}^2/2}
=\rho_{\sqrt2/\xi_t}(x)$.
Using
$(\norm{p}^2-\norm{q}^2)^2
\le2(\norm{p}^4+\norm{q}^4)$,
the sum over $p,q\in\cL$ is at most
\[
  4
  \left(
    \sum_{p\in\cL}
    \norm{p}^4w(p)
  \right)
  \left(
    \sum_{q\in\cL}
    w(q)
  \right).
\]
Indeed, the Gaussian weight factors as $w(p)w(q)$, and the
two terms obtained from $\norm{p}^4+\norm{q}^4$ are equal.
Applying the first inequality with $k=4$ and $c=0$ of \cref{eqn:simplerho_g} to the first term and the second inequality with $c=0$ to the second term,
we have an upper bound
\[
  \lambda^4
  2^{(-g(t)+o(1))n} \cdot (1+2^{(-g(t)+o(1))n}) = \lambda^4
  2^{(-\min\{g(t),2g(t)\}+o(1))n}.
\]
The resulting exponent is $-g(t)$ if $g(t)\ge0$ and
$-2g(t)$ otherwise. The fixed constants and
$\rho_{1/\xi_t}(\cL)^{-2}\le1$ are absorbed into $2^{o(n)}$. The final result is obtained using \cref{eqn: mutdconv}.
\end{proof}

\subsection{A random coset Hessian}
\label{sec:linear-fibers}
Fix $0\le\chi\le1/2$.
We split each parity class $u \in \F_2^n$ into $h=\lfloor\chi n\rfloor$ and $\ell=n-h$ bits in random basis.
For $P\in\GL_n(\F_2)$, $u\in\F_2^n$, and $X\in\cL^*$, define
\[
  Pu=(\alpha,\theta)\in\F_2^h\times\F_2^\ell,
  \qquad
  P^{-T}B^TX\bmod2
  =
  (J_P(X),V_P(X))
  \in\F_2^h\times\F_2^\ell.
\]
Define $\Lambda_j:=\{X\in\cL^*:J_P(X)=j\}$. The set $\Lambda_0$ is a sublattice of $\cL^*$, and each $\Lambda_j$ for $j\in \F_2^h$ is one of its cosets. Note that every $\Lambda_j$ is nonempty because the map $X\mapsto P^{-T}(B^TX\bmod2)$ from $\cL^*$ to $\F_2^n$ is surjective.
Therefore, $\Lambda_j$ can be written as $\Lambda_0 + v_j$ for some $v_j \in \Lambda_j$ thus the discrete Gaussian $D_{\Lambda_j,s}$ is well-defined. 

For $\theta\in\F_2^\ell$, we define the (sublattice) coset Hessian along $\Lambda_j$ analogous to \cref{lem:hessian-formulas} by
\begin{align}
    \label{eqn: cGcosetexpectation}
  \cG_{t,d,j}(\theta)
  :=
  -4\pi^2
  \E_{X\sim D_{\Lambda_j,\xi_t}}
  \left[
    XX^T(-1)^{\theta^T V_P(X)}
  \right]
\end{align}
based on the identity for $X \in \Lambda_j$
\begin{align}
    \label{eqn: innersep}
  u^T(B^TX\bmod2)
  =
  (Pu)^T (P^{-T}B^TX\bmod2)
  =
  \alpha^T J_P(X)+\theta^T V_P(X)
  = \alpha^T j + \theta^T V_P(X)
  \pmod2
\end{align}
For $u=P^{-1}(\alpha,\theta)$, \cref{eqn: cGcosetexpectation}
omits the factor $(-1)^{\alpha^Tj}$. This omission allows $\cG_{t,d,j}$ is defined independent of $\alpha$, yet we should be careful about that factor.

Let
$\zeta_{t,j}
:=
2^h\Prob_{X\sim D_{\cL^*,\xi_t}}[J_P(X)=j]$
be the probability, normalized by the uniform probability $2^{-h}$,
that a discrete Gaussian sample of width $\xi_t$ belongs to the
affine lattice coset $\Lambda_j$.
The following lemma shows that $\zeta_{t,j}$ is near 1, i.e.,
the distribution of $J_P(X)$ for the Discrete Gaussian sample $X$ is
pointwise exponentially close to uniform. The parameter condition of this lemma is intentionally general for the later application.

\begin{lemma}\label{lem:target-coset-flatness}
If
$\lambda=\lambda_1(\cL)\le d\le(1+1/n)\lambda$, $t>t_0/2$, and
$0\le\chi\le1/2$ and $\chi<1/2+g(t)$, then, 
with probability
$1-2^{-\Omega(n)}$ over the uniform random $P\in \GL_n(\F_2)$,
$\zeta_{t,j}=1+O(2^{-\Omega(n)})$
for every $j\in\F_2^h$.
\end{lemma}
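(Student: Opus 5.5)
We expand the indicator of the event $J_P(X)=j$ in characters, which turns $\zeta_{t,j}$ into a sum of values of the periodic Gaussian at half-lattice points. Let $e_1,\dots,e_h$ denote the first $h$ coordinates. For $\alpha\in\F_2^h$, set $u_\alpha:=P^{-1}(\alpha,0)\in\F_2^n$. By \cref{eqn: innersep} with $\theta=0$, we have $\alpha^TJ_P(X)=u_\alpha^T(B^TX\bmod 2)$. Hence
\[
  \zeta_{t,j}
  =\sum_{\alpha\in\F_2^h}(-1)^{\alpha^Tj}\,\E_{X\sim D_{\cL^*,\xi_t}}\bigl[(-1)^{u_\alpha^T(B^TX\bmod2)}\bigr]
  =\sum_{\alpha\in\F_2^h}(-1)^{\alpha^Tj}F_{1/\xi_t}(Bu_\alpha/2),
\]
where the last step is the Poisson representation \cref{lem:poisson} with $z=Bu_\alpha/2$. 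The term $\alpha=0$ equals $F(0)=1$, and every $F$-value is nonnegative. So, uniformly in $j$,
\[
  |\zeta_{t,j}-1|\le S_P:=\sum_{\alpha\ne0}F_{1/\xi_t}(Bu_\alpha/2).
\]
It therefore suffices to show that $S_P\le 2^{-\Omega(n)}$ with probability $1-2^{-\Omega(n)}$ over $P$. This single event handles all $j$ at once, so no union bound over $j$ is needed.

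Next we average over $P$. Since $P$ is uniform in $\GL_n(\F_2)$, the map $\alpha\mapsto u_\alpha$ sends $\F_2^h\setminus\{0\}$ to the nonzero elements of the random $h$-dimensional subspace $P^{-1}(\F_2^h\times\{0\})$. Each fixed nonzero $\alpha$ gives a vector $u_\alpha$ uniform on $\F_2^n\setminus\{0\}$. Linearity of expectation and \cref{lem:spectral-identities} then give
\[
  \E_P[S_P]=\frac{2^h-1}{2^n-1}\sum_{u\ne0}F_{1/\xi_t}(Bu/2)\le 2^{(\chi-1)n+1}\cdot 2^{(1/2-g(t)+o(1))n}=2^{(\chi-1/2-g(t)+o(1))n}.
\]
The hypothesis $\chi<1/2+g(t)$ makes this exponent a fixed negative constant $-2\delta$ with $\delta>0$. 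Markov's inequality (\cref{lem:Markov}) then yields $\Prob_P[S_P>2^{-\delta n}]\le 2^{-\delta n+o(n)}$. On the complementary event, $\zeta_{t,j}=1+O(2^{-\delta n})$ for every $j$, which is the claim.

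The main thing to check is the use of \cref{lem:spectral-identities}. Its second statement was derived from \cref{eqn:simplerho_g} with $c=1$, together with $\rho_{1/\xi_t}(\cL)\ge1$. This holds for any $t>t_0/2$; the bound may exceed $1$ when $g(t)<1/2$, but that is harmless. The remaining conditions ($\chi\le1/2$ and $t>t_0/2$) only serve to keep us in the regime where these estimates apply. Beyond that, the one delicate point is the uniformity of $u_\alpha$ over nonzero vectors. It follows because $\GL_n(\F_2)$ acts transitively on $\F_2^n\setminus\{0\}$, and $P^{-1}$ is uniform whenever $P$ is.
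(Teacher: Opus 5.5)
Your proof is correct and follows essentially the same route as the paper. You expand the indicator in characters, rewrite each term via the Poisson representation as $F_{1/\xi_t}(BP^{-1}(\alpha,0)/2)$, bound $|\zeta_{t,j}-1|$ uniformly in $j$ by the sum over nonzero $\alpha$, average over $P$ using \cref{lem:spectral-identities}, and finish with Markov. The only cosmetic difference is that you average by noting $P^{-1}(\alpha,0)$ is uniform on nonzero vectors for each fixed $\alpha$, whereas the paper computes the probability that a fixed nonzero $\kappa$ lies in $P^{-1}(\F_2^h\times\{0\})$; these are the same double count.
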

\begin{proof}
Recall $2^h\ind_{\{y=j\}}
=\sum_{\alpha\in\F_2^h}(-1)^{\alpha\cdot(y+j)}$ for $y,j\in\F_2^h$.
Applying the identity with $y=J_P(X)$ and taking the expectation over
$X\sim D_{\cL^*,\xi_t}$ gives
\[
\zeta_{t,j}
=\sum_{\alpha\in\F_2^h}(-1)^{\alpha\cdot j}
\E[(-1)^{\alpha\cdot J_P(X)}].
\]
Let $\cK_P:=P^{-1}(\F_2^h\times\{0\})$
and
$\kappa=P^{-1}(\alpha,0)\in\cK_P$ for $\alpha\in\F_2^h$.
The Poisson representation in \cref{lem:poisson} gives
\[
F_{1/\xi_t}(B\kappa/2) = \E_{X\sim D_{\cL^*,\xi_t}}[e^{2\pi i\ip{X}{B\kappa/2}}] = \E_{X\sim D_{\cL^*,\xi_t}}[(-1)^{\kappa^T(B^TX\bmod2)}]
=\E_{X\sim D_{\cL^*,\xi_t}}[(-1)^{\alpha\cdot J_P(X)}].
\]
Plugging this equation to the summand in $\zeta_{t,j}$ above, we have
\[
  \zeta_{t,j}
  =
  \sum_{\alpha\in\F_2^h}
  (-1)^{\alpha\cdot j}
  F_{1/\xi_t}
  \left(
    BP^{-1}(\alpha,0)/2
  \right) = 1 + \sum_{\alpha\neq 0}
  (-1)^{\alpha\cdot j}
  F_{1/\xi_t}
  \left(
    BP^{-1}(\alpha,0)/2
  \right).
\]
By substituting $\kappa = P^{-1}(\alpha,0)$, we have $|\zeta_{t,j}-1|
  \le
  \sum_{\substack{\kappa\in\cK_P\\\kappa\ne0}}
  F_{1/\xi_t}(B\kappa/2)$ for all $j$.
We bound this term using \cref{lem:spectral-identities}. For each nonzero $\kappa \in \F_2^n$, the space $\cK_P$ for random $P$ is a uniform random $h$-dimensional subspace so that 
$\Prob_P[\kappa\in\cK_P]
=(2^h-1)/(2^n-1)
=2^{(\chi-1+o(1))n}$. This gives
\[
  \E_P
  \sum_{{\kappa\in\cK_P,\kappa\ne0}}
  F_{1/\xi_t}(B\kappa/2)
  \le
  2^{(\chi-1/2-g(t)+o(1))n}.
\]
The exponent is a negative constant by the assumption
$\chi<1/2+g(t)$. Markov's inequality shows that, except with
probability $2^{-\Omega(n)}$ over $P$,
\[
  \sum_{{\kappa\in\cK_P,\kappa\ne0}}
  F_{1/\xi_t}(B\kappa/2)
  \le
  2^{-\Omega(n)}.
\]
This proves the lemma, as the upper bound of $|\zeta_{t,j}-1|$ holds simultaneously for all $j.$
\end{proof}

The following lemma shows the sampling from $D_{\Lambda_j,\xi_t}$ can be done by sampling from the entire $D_{\cL^*,\xi_t}$ and then collecting the samples with $J_P(X)=j$ thanks to the almost uniformity.

\begin{lemma}\label{lem:affine-coset-sampling}
If $\lambda\le d\le (1+1/n) \lambda$, $t_0 <t<1/4$, and $0\le \chi\le 1/2$, there is a classical algorithm that for independent uniform
$P\in\GL_n(\F_2)$ and $j\in\F_2^h$, outputs $N_t=\lceil n^5\log n\cdot 2^{2tn}\rceil$ lattice vectors or aborts in time $2^{(\max\{1/2,\chi+2t\}+o(1))n}$ and space
$2^{n/2+o(n)}$.
It does not abort with probability $1-2^{-\Omega(n)}$ over $P$
and the sampler randomness, and in that case, the joint distribution of its outputs has statistical
distance at most $\exp(-\Omega(n^2))$ from
$D_{\Lambda_j,\xi_t}^{N_t}$.
\end{lemma}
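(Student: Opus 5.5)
The plan is rejection sampling by coset label: call $\DGS$ on $\cL^*$ at width $\xi_t$ repeatedly, keep only the samples with $J_P(X)=j$, and show this yields $N_t$ i.i.d. samples from $D_{\Lambda_j,\xi_t}$ quickly.

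\textbf{Setup.} Since $t>t_0$, \cref{cor:working-scale} gives $\xi_t>\sqrt2\,\eta_{1/2}(\cL^*)$. Hence \cref{thm:dgs-sampling} with $\kappa=n^2$ returns, in each call, $M=2^{n/2}$ vectors whose joint law is $\exp(-\Omega(n^2))$-close to $D_{\cL^*,\xi_t}^{M}$, because $M'=M$ in this regime. Each call costs time $2^{n/2+\polylog(n)+o(n)}=2^{n/2+o(n)}$ and space $2^{n/2+o(n)}$.

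\textbf{Algorithm.} Perform $T=\lceil 2\cdot 2^{h}N_t/M\rceil+1$ calls. For each returned $X$, compute $J_P(X)$ in $\poly(n)$ time via $P^{-T}(B^TX\bmod 2)$, and keep $X$ only if $J_P(X)=j$. Store only kept samples, stop once $N_t$ have been kept, and abort if the calls run out. Since $N_t\le 2^{n/2}$ (as $t<1/4$) and each call uses $2^{n/2+o(n)}$ space that can be discarded after filtering, the space is $2^{n/2+o(n)}$. The time is $T\cdot 2^{n/2+o(n)}=(2^{h}N_t+2^{n/2})2^{o(n)}=2^{(\max\{1/2,\chi+2t\}+o(1))n}$.

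\textbf{Distribution.} In the ideal model where every sample is i.i.d. $D_{\cL^*,\xi_t}$, a sample conditioned on $J_P(X)=j$ is exactly distributed as $D_{\Lambda_j,\xi_t}$, since $\rho$ restricted to $\Lambda_j$ and renormalized is $D_{\Lambda_j,\xi_t}$. The kept samples are i.i.d. conditioned draws, and whether we abort depends only on the count of kept samples, not on their values. So, conditioned on not aborting, the first $N_t$ kept samples are exactly $D_{\Lambda_j,\xi_t}^{N_t}$. Passing from the ideal model to the real sampler costs $T\cdot\exp(-\Omega(n^2))=\exp(-\Omega(n^2))$ in statistical distance, since $T\le 2^{O(n)}$.

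\textbf{Non-abort probability (main step).} We need the acceptance probability $p_j=2^{-h}\zeta_{t,j}$ to be $\Theta(2^{-h})$ for the fixed target $j$. This is where \cref{lem:target-coset-flatness} enters. Its hypotheses are $t>t_0/2$ and $\chi<1/2+g(t)$; the latter holds because $t>t_0$ gives $g(t)>0$, and $\chi\le 1/2$. The lemma then gives $\zeta_{t,j}=1+2^{-\Omega(n)}$ for all $j$ simultaneously, except with probability $2^{-\Omega(n)}$ over $P$. This uniformity also makes the independence of $j$ from $P$ harmless. Given this, the expected number of kept samples is at least $(1-o(1))\cdot 2N_t$. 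Since the samples are independent Bernoulli trials, the Chernoff bound (\cref{lem:Chernoff}) with $\delta=1/3$ shows that fewer than $N_t$ are kept with probability $\exp(-\Omega(N_t))=2^{-\Omega(n)}$. A union bound over the bad-$P$ event and the real-versus-ideal distance completes the argument.
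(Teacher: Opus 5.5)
Your proposal is correct and follows the paper's proof: call $\DGS$ on $\cL^*$ at width $\xi_t$, keep only the samples with $J_P(X)=j$, use \cref{lem:target-coset-flatness} to get acceptance probability $2^{-h}(1+2^{-\Omega(n)})$ for all $j$ outside a $2^{-\Omega(n)}$-probability set of bad $P$, and apply Chernoff plus the $\exp(-\Omega(n^2))$ real-versus-ideal sampler distance. Drawing $2\cdot 2^hN_t$ raw samples instead of the paper's $4N_t2^h$ is immaterial. You also spell out steps the paper leaves implicit, namely the hypothesis check $\chi\le 1/2<1/2+g(t)$, why conditioning yields exactly $D_{\Lambda_j,\xi_t}^{N_t}$, and the space bound via $N_t\le 2^{n/2}$.
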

\begin{proof}
We use \cref{thm:dgs-sampling} $\lceil4N_t2^{h-n/2}\rceil$ times sequentially, collect the samples satisfying $J_P(X)=j$, and discard the others.
By \cref{lem:target-coset-flatness}, except with probability
$2^{-\Omega(n)}$ over $P$,
$  \Prob_{X\sim D_{\cL^*,\xi_t}}[J_P(X)=j]
  =
  2^{-h}\left(1+O(2^{-\Omega(n)})\right)$
simultaneously for every $j\in\F_2^h$.
For every such $P$ and every $j$, the Chernoff bound in
\cref{lem:Chernoff} shows that fewer than $N_t$ samples are
accepted among $4N_t2^h$ independent samples with probability
$2^{-\Omega(N_t)}$.
The statistical distance from independent
$D_{\cL^*,\xi_t}$ samples changes this probability by at most
$\exp(-\Omega(n^2))$.
Therefore, the procedure aborts with probability
$2^{-\Omega(n)}$.
The time and space complexity is clear.
\end{proof}

\subsection{Estimating the coset Hessian}
\label{sec:direct-coset-estimation}

Fix $t_0=0.23147\ldots<t<1/4$ and $N_t=\lceil n^5\log n\cdot 2^{2tn}\rceil$.
For independent samples from the coset discrete Gaussian distribution
$X_1,\ldots,X_{N_t}\sim D_{\Lambda_j,\xi_t}$, define an estimation of \cref{eqn: cGcosetexpectation} analogous to \cref{eqn: Gesti} by
\begin{align}\label{eqn:affine-coset-estimator}
  \widehat{\cG}_{t,d,j}(\theta)
  :=
  -\frac{4\pi^2}{N_t}
  \sum_{i=1}^{N_t}
  X_iX_i^T(-1)^{\theta\cdot V_P(X_i)}
  \ind_{\{\norm{X_i}\le C_1\xi_t\sqrt n\}}
\end{align}
which can be computed using 
\cref{lem:affine-coset-sampling}.
We show that $\widehat{\cG}_{t,d,j}$ approximates ${\cG}_{t,d,j}$ similar to \cref{lem:hessian-estimation}.

\begin{lemma}\label{lem:affine-coset-estimation}
Let $\lambda=\lambda_1(\cL)$ and
$\lambda\le d\le(1+1/n)\lambda$. Choose independent uniform
$P\in\GL_n(\F_2)$ and $j\in\F_2^h$. For independent
$X_1,\ldots,X_{N_t}\sim D_{\Lambda_j,\xi_t}$, with probability
$1-2^{-\Omega(n)}$ over $P,j,X_1,\ldots,X_{N_t}$,
\[
  \max_{\theta\in\F_2^\ell}
  \opnorm{
    \widehat{\cG}_{t,d,j}(\theta)
    -
    \cG_{t,d,j}(\theta)
  }
  = O\left(\frac{\mu_{t,d}}{n}\right).
\]
\end{lemma}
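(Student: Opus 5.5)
We follow the template of \cref{lem:hessian-estimation}, splitting the error into a truncation bias and a sampling fluctuation. The one genuinely new ingredient is that the samples now come from the coset distribution $D_{\Lambda_j,\xi_t}$ rather than from $D_{\cL^*,\xi_t}$. We handle this by conditioning on the good event of \cref{lem:target-coset-flatness}, which holds with probability $1-2^{-\Omega(n)}$ over $P$. On that event, for every $j$ and every $X\in\Lambda_j$,
\[
D_{\Lambda_j,\xi_t}(X)=\frac{D_{\cL^*,\xi_t}(X)}{\Prob[J_P(X)=j]}=\frac{2^h}{\zeta_{t,j}}D_{\cL^*,\xi_t}(X)\le 2^{h+1}D_{\cL^*,\xi_t}(X).
\]
So every tail expectation under the coset Gaussian is at most $2^{h+1}\le 2^{n/2+1}$ times the corresponding quantity under the full discrete Gaussian.

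\textbf{Bias.} As in \cref{eqn: lem35_opbound}, $\E[\widehat{\cG}_{t,d,j}(\theta)]-\cG_{t,d,j}(\theta)$ is $4\pi^2$ times a signed truncated second-moment term. Its operator norm is at most
\[
4\pi^2\,\E_{D_{\Lambda_j,\xi_t}}\bigl[\norm{X}^2\ind_{\{\norm{X}>C_1\xi_t\sqrt n\}}\bigr]\le 4\pi^2\, 2^{h+1}\xi_t^2 n 2^{-n}.
\]
Here we apply \cref{lem:dgs-tail} with $c=1$, which gives the factor $2^{-n}$, beating $2^{h}\le2^{n/2}$. The result is $\xi_t^2 2^{-n/2+O(\log n)}$. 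Using \cref{eqn: mutdconv} and $\xi_t^2\lambda^2=\Theta(n)$, we have $\mu_{t,d}/n=\Theta(\xi_t^2 2^{-tn})$. Since $t<1/4<1/2$, the bias is $o(\mu_{t,d}/n)$. If the constant $c=1$ ever turned out too weak, we would simply apply \cref{lem:dgs-tail} with a larger $c$ and enlarge the truncation constant; but the estimator's definition fixes $C_1$, and $c=1$ already suffices.

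\textbf{Fluctuation.} Conditioned on $P$ and $j$, the $N_t$ samples are i.i.d.\ and each summand entry is bounded by $4\pi^2C_1^2\xi_t^2 n$, because of the truncation indicator. This bound does not depend on the coset distribution, so the Hoeffding computation of \cref{lem:hessian-estimation} goes through verbatim. For fixed $\theta$ and fixed entry $(a,b)$, the deviation exceeds $\xi_t^2 2^{-tn}/n$ with probability $2e^{-\Omega(n\log n)}$. A union bound over the $2^\ell n^2\le 2^n n^2$ choices, followed by $\opnorm{M}\le n\max|M_{ab}|$, gives a uniform deviation of at most $\xi_t^2 2^{-tn}=O(\mu_{t,d}/n)$ for all $\theta$. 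Summing the two bounds gives the claim, and the total failure probability is $2^{-\Omega(n)}$ from $P$ plus $2^{-\Omega(n\log n)}$ from the samples.

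The main obstacle is the bias step. The truncation bound must survive the change of measure to a coset of index up to $2^{n/2}$, and this is exactly where flatness, i.e.\ $\zeta_{t,j}\ge 1/2$, is needed. Without it, a rare coset could carry disproportionate tail mass. Checking that the hypotheses of \cref{lem:target-coset-flatness} hold is immediate: $t>t_0$ gives $g(t)>0$, hence $\chi\le1/2<1/2+g(t)$. The sampling step needs only the bounded range, which the truncation indicator guarantees.
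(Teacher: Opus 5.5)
Your proof is correct and follows the paper's argument step for step. You condition on the flatness event of \cref{lem:target-coset-flatness} to get $\zeta_{t,j}\ge 1/2$, and transfer the \cref{lem:dgs-tail} truncation bias to the coset at a cost of $2^{h+1}\le 2^{n/2+1}$, which $2^{-n}$ beats because $t<1/4$. You then rerun the entrywise Hoeffding bound, the union bound over $2^\ell n^2$ choices, and $\opnorm{M}\le n\max_{a,b}|M_{a,b}|$. The only cosmetic difference is that you write the thresholds as $\xi_t^2 2^{-tn}$ where the paper writes $\xi_t^4 d^2 2^{-tn}/n$; these agree via $\xi_t^2 d^2=4nt\ln 2/\pi$.
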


\begin{proof}
By \cref{lem:target-coset-flatness}, except with probability
$2^{-\Omega(n)}$ over $P$, we have $\zeta_{t,j}\ge1/2$ for every
$j\in\F_2^h$. Fix such a matrix $P$.
Let $Y\sim D_{\cL^*,\xi_t}$ and
$X\sim D_{\Lambda_j,\xi_t}$. 
By
$\Prob[J_P(Y)=j]=2^{-h}\zeta_{t,j}$ and
\cref{lem:dgs-tail} with $c=1$, we have
\[
  \E\left[
    \norm{X}^2
    \ind_{\{\norm{X}>C_1\xi_t\sqrt n\}}
  \right]
  \le
  \frac{2^h}{\zeta_{t,j}}
  \E\left[
    \norm{Y}^2
    \ind_{\{\norm{Y}>C_1\xi_t\sqrt n\}}
  \right]
  \le
  2^{h+1}\xi_t^2n2^{-n}.
\]
With the same argument for \cref{eqn: lem35_opbound},
we have
\[
  \opnorm{
    \E[\widehat{\cG}_{t,d,j}(\theta)]
    -
    \cG_{t,d,j}(\theta)
  }
  \le
  8\pi^2\xi_t^2n2^{-\ell}
  \le
  \frac{\xi_t^4d^2}{n}2^{-tn}
\]
for all sufficiently large $n$, where we used
$\ell\ge n/2$, $t<1/4$, and
$\xi_t^2d^2=4nt\ln2/\pi$.

Each entry of the random matrix in
\cref{eqn:affine-coset-estimator} has absolute value at most
$4\pi^2C_1^2\xi_t^2n$. Hence, for fixed
$\theta\in\F_2^\ell$ and $a,b\in\{1,\ldots,n\}$,
Hoeffding's inequality gives
\[
  \Prob\left[
    \left|
      \left(
        \widehat{\cG}_{t,d,j}(\theta)
        -
        \E[\widehat{\cG}_{t,d,j}(\theta)]
      \right)_{a,b}
    \right|
    >
    \frac{\xi_t^4d^2}{n^2}2^{-tn}
  \right]
  \le
  2\exp\left(
    -\Omega\left(
      \frac{N_t\xi_t^4d^4}{n^6}2^{-2tn}
    \right)
  \right)
  =
  2e^{-\Omega(n\log n)}.
\]
Here we used
$N_t=\lceil n^5\log n\cdot2^{2tn}\rceil$ and
$\xi_t^2d^2=4nt\ln2/\pi$.
A union bound over the $2^\ell n^2$ choices of
$\theta,a,b$ shows that, with probability
$1-2^{-\Omega(n\log n)}$ over the samples,
\[
  \max_{\theta\in\F_2^\ell}
  \opnorm{
    \widehat{\cG}_{t,d,j}(\theta)
    -
    \E[\widehat{\cG}_{t,d,j}(\theta)]
  }
  \le
  \frac{\xi_t^4d^2}{n}2^{-tn} 
\]
The triangle inequality applied to
$\widehat{\cG}_{t,d,j}(\theta)-
\E[\widehat{\cG}_{t,d,j}(\theta)]$ and
$\E[\widehat{\cG}_{t,d,j}(\theta)]-
\cG_{t,d,j}(\theta)$ proves the lemma, after using $\mu_{t,d} = \xi_t^4d^22^{-tn+O(1)} $.
\end{proof}

\subsection{Coset Hessians and shortest vectors}
It remains to show how to relate the coset Hessians and the (direction) of shortest vectors as in \cref{lem:shortest-class-hessian}.
The following lemma extends \cref{eqn:cAdef} to the coset Hessian.

\begin{lemma}\label{lem:affine-coset-hessian}
  $\cG_{t,d,j}(\theta)
  -
  \frac1{\zeta_{t,j}}
  \sum_{\alpha\in\F_2^h}
  (-1)^{\alpha\cdot j}
  \cA_{t,d}\left(P^{-1}(\alpha,\theta)\right)
  \in\operatorname{span}\{\Id\}.$
\end{lemma}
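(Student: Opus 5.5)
The plan is to expand the coset expectation in \cref{eqn: cGcosetexpectation} as an expectation over the full dual lattice. Then \cref{eqn: innersep} lets us resum over $\alpha$ and recognize full-lattice Hessians $\cG_{t,d}(u)$. Finally, \cref{eqn:cAdef} converts each $\cG$ into $\cA$, with the correction collapsing to a multiple of $\Id$.

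\textbf{Step 1: pass to the full dual lattice.} For $Y\sim D_{\cL^*,\xi_t}$, the conditional law of $Y$ given $J_P(Y)=j$ is exactly $D_{\Lambda_j,\xi_t}$, since both are proportional to $\rho_{\xi_t}$ restricted to $\Lambda_j$. Since $\Prob[J_P(Y)=j]=2^{-h}\zeta_{t,j}$, this gives
\[
\cG_{t,d,j}(\theta)=-\frac{4\pi^2\,2^h}{\zeta_{t,j}}\E_Y\left[YY^T(-1)^{\theta^TV_P(Y)}\ind_{\{J_P(Y)=j\}}\right].
\]

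\textbf{Step 2: expand the indicator and resum.} Write the indicator as $2^{-h}\sum_{\alpha\in\F_2^h}(-1)^{\alpha\cdot(J_P(Y)+j)}$, as in the proof of \cref{lem:target-coset-flatness}. Combine the exponent using \cref{eqn: innersep}, namely
\[
\alpha^TJ_P(Y)+\theta^TV_P(Y)=u^T(B^TY\bmod 2)\quad\text{for } u=P^{-1}(\alpha,\theta).
\]
The factor $2^h$ cancels, and by \cref{eqn: cGexpectation} we obtain
\[
\cG_{t,d,j}(\theta)=\frac{1}{\zeta_{t,j}}\sum_{\alpha}(-1)^{\alpha\cdot j}\,\cG_{t,d}\bigl(P^{-1}(\alpha,\theta)\bigr).
\]

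\textbf{Step 3: switch from $\cG$ to $\cA$.} By \cref{eqn:cAdef}, each term satisfies $\cG_{t,d}(u)=\cA_{t,d}(u)-2\pi\xi_t^2F_{1/\xi_t}(Bu/2)\Id$. The difference in the statement is therefore
\[
-\frac{2\pi\xi_t^2}{\zeta_{t,j}}\sum_\alpha(-1)^{\alpha\cdot j}F_{1/\xi_t}\bigl(BP^{-1}(\alpha,\theta)/2\bigr)\,\Id,
\]
which is a scalar multiple of $\Id$.

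\textbf{Main obstacle.} The only delicate point is the bookkeeping in Step 2. The sign $(-1)^{\alpha\cdot j}$ must come out with the right orientation. This is fine over $\F_2$, since $\alpha\cdot(J+j)=\alpha\cdot J+\alpha\cdot j$. In addition, $\zeta_{t,j}>0$ must hold so that the division is legitimate. This holds because $\Lambda_j$ is nonempty and the Gaussian is positive on it.
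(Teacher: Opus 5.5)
Your proposal is correct and follows essentially the same route as the paper. You condition $D_{\cL^*,\xi_t}$ on $J_P(Y)=j$, expand the indicator as a character sum over $\alpha\in\F_2^h$, and recombine signs via \cref{eqn: innersep} to obtain $\cG_{t,d,j}(\theta)=\zeta_{t,j}^{-1}\sum_{\alpha}(-1)^{\alpha\cdot j}\cG_{t,d}(P^{-1}(\alpha,\theta))$; you then pass from $\cG$ to $\cA$ termwise. You also make the scalar $-\frac{2\pi\xi_t^2}{\zeta_{t,j}}\sum_{\alpha}(-1)^{\alpha\cdot j}F_{1/\xi_t}(BP^{-1}(\alpha,\theta)/2)$ explicit, which the paper leaves implicit.
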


\begin{proof}
Since
$\Prob[J_P(X)=j]=2^{-h}\zeta_{t,j}$, the definition of
$\cG_{t,d,j}(\theta)$ gives
\[
  \cG_{t,d,j}(\theta)
  =
  -\frac{4\pi^2 2^h}{\zeta_{t,j}}
  \E_{X\sim D_{\cL^*,\xi_t}}
  \left[
    XX^T(-1)^{\theta\cdot V_P(X)}
    \ind_{\{J_P(X)=j\}}
  \right].
\]
Observe that $2^h\ind_{\{y=j\}}
  =
  \sum_{\alpha\in\F_2^h}
  (-1)^{\alpha\cdot(y+j)}$ for every $y,j\in\F_2^h$.
Indeed, if $y=j$, every summand is one; otherwise, the summands can
be paired with opposite signs.
Applying this identity with $y=J_P(X)$ and interchanging the finite
sum and expectation gives
\begin{align}
    \label{eqn: cGdual}
  \cG_{t,d,j}(\theta)
  =
  \frac1{\zeta_{t,j}}
  \sum_{\alpha\in\F_2^h}
  (-1)^{\alpha\cdot j}
  \cG_{t,d}\left(P^{-1}(\alpha,\theta)\right)
\end{align}
where we put $u=P^{-1}(\alpha,\theta)$ and use
\cref{eqn: innersep} to derive
\[
(-1)^{\theta\cdot V_P(X)} \cdot (-1)^{\alpha \cdot J_P(X) + \alpha \cdot j}  = (-1)^{\alpha\cdot j} \cdot (-1)^{\alpha\cdot J_P(X) + \theta\cdot V_P(X)} = (-1)^{\alpha \cdot j} \cdot (-1)^{u^T (B^T X \bmod 2)}.
\]
The term $(-1)^{u^T (B^T X \bmod 2)}$ is exactly the term in
\cref{lem:hessian-formulas}, proving \cref{eqn: cGdual}.
The final inclusion is proven by using \cref{lem:hessian-formulas} on 
$\cG_{t,d}(P^{-1}(\alpha,\theta))-\cA_{t,d}(P^{-1}(\alpha,\theta))\in\operatorname{span}\{\Id\}$ for each $\alpha.$
\end{proof}

This lemma gives some intuition why the approach in this section works. 
If one of $P^{-1}(\alpha,\theta)=u$ is such that $Bu+2\cL$ contains a shortest vector $v$, the corresponding $\cA_{t,d}(P^{-1}(\alpha,\theta))$ must be approximately proportional to $vv^T$. 
We will prove that the sum of the remaining terms is
exponentially smaller with high probability over $P$ and $j$.
Therefore, some eigenvector of $\cG_{t,d,j}(\theta)$ must be approximately proportional to $v$, as
the multiple of $I_n$ does not affect the eigen vectors and the order of eigenvalues.

The coefficient of the first term is
$(-1)^{\alpha_*\cdot j}$, so it may determine either the largest or
the smallest eigenvalue. We therefore consider both extreme
eigenvalues. Similarly to \cref{lem:hessian-recovery}, we obtain the
following lemma.

\begin{lemma}\label{lem:affine-coset-recovery}
Let $t_0<t<1/4$ be fixed, let $0\le\chi\le1/2$, and let
$\lambda=\lambda_1(\cL)\le d\le(1+1/n)\lambda$.
Let $v\in\cL$ satisfy $\norm{v}=\lambda$, and let
$u_*\in\F_2^n$ satisfy $v\in Bu_*+2\cL$.
Choose independent uniform $P\in\GL_n(\F_2)$ and
$j\in\F_2^h$, and write $Pu_*=(\alpha_*,\theta_*)$.
With probability $1-2^{-\Omega(n)}$ over $P$ and $j$, the following
holds. Suppose that the conclusion of
\cref{lem:affine-coset-estimation} holds. Let $q_+$ and $q_-$ be
unit eigenvectors corresponding to the largest and smallest
eigenvalues of $\widehat{\cG}_{t,d,j}(\theta_*)$, respectively.
Then there are $s\in\{+,-\}$ and $\tau\in\{-1,1\}$ such that
\[
  \norm{q_s-\tau v/\lambda}
  =
  O(n^{-1/2}).
\]
Consequently, the $n^{-1/3}$-$\BDD$ query at $dq_s$ returns
$\tau v$ for all sufficiently large $n$.
\end{lemma}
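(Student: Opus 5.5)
By \cref{lem:affine-coset-hessian} applied at $\theta=\theta_*$, the exact coset Hessian has, up to a multiple of $\Id$, the form
\[
  \cG_{t,d,j}(\theta_*) \equiv \frac{1}{\zeta_{t,j}}\Bigl((-1)^{\alpha_*\cdot j}\cA_{t,d}(u_*) + \sum_{\alpha\ne\alpha_*}(-1)^{\alpha\cdot j}\cA_{t,d}\bigl(P^{-1}(\alpha,\theta_*)\bigr)\Bigr).
\]
\cref{lem:shortest-class-hessian} gives $\cA_{t,d}(u_*)=\mu_{t,d}\widehat v\widehat v^T+\cR_{t,d}(u_*)$ with $\opnorm{\cR_{t,d}(u_*)}\le\mu_{t,d}2^{-\Omega(n)}$. \cref{lem:target-coset-flatness} applies because $g(t)>0$ and $\chi\le1/2$, so $\zeta_{t,j}=1+2^{-\Omega(n)}$ except with probability $2^{-\Omega(n)}$ over $P$. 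The plan is therefore to show that, with probability $1-2^{-\Omega(n)}$ over $P$ and $j$, the cross-term
\[
  S:=\sum_{\alpha\ne\alpha_*}(-1)^{\alpha\cdot j}\cA_{t,d}\bigl(P^{-1}(\alpha,\theta_*)\bigr)
\]
satisfies $\opnorm{S}\le\mu_{t,d}2^{-\Omega(n)}$. Once this holds, the estimator decomposes as
\[
  \widehat\cG_{t,d,j}(\theta_*)=c\Id+\sigma\mu_{t,d}\widehat v\widehat v^T+E,\qquad \sigma=(-1)^{\alpha_*\cdot j}\zeta_{t,j}^{-1}\in\pm[1/2,2],
\]
where $E$ collects $\cR_{t,d}(u_*)/\zeta_{t,j}$, $S/\zeta_{t,j}$ and the estimation error from \cref{lem:affine-coset-estimation}, so $\opnorm{E}=O(\mu_{t,d}/n)$.

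\textbf{Bounding the cross-term $S$ (the main obstacle).} The sign pattern $(-1)^{\alpha\cdot j}$ is a Walsh character, and $j$ is uniform and independent of $P$. I would bound $S$ in second moment over $j$: by Parseval/orthogonality of characters,
\[
  \E_j\Frob{S}^2=\sum_{\alpha\ne\alpha_*}\Frob{\cA_{t,d}\bigl(P^{-1}(\alpha,\theta_*)\bigr)}^2 .
\]
The vectors $u=P^{-1}(\alpha,\theta_*)$ with $\alpha\ne\alpha_*$ range over the coset $u_*+\cK_P$ minus $u_*$ itself. For a fixed $u\ne u_*$, the event $u\in u_*+\cK_P$ means $u-u_*\in\cK_P$ with $u-u_*\neq0$, which has probability $2^{(\chi-1+o(1))n}$ over $P$. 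Hence, using \cref{lem:hessian-square-mass},
\[
  \E_{P,j}\Frob{S}^2\le 2^{(\chi-1)n}\mu_{t,d}^2\,2^{(2t-g(t)+o(1))n}.
\]
Since $t<1/4$ and $\chi\le 1/2$, we have $\chi-1+2t<0$, and $g(t)>0$ only helps, so the exponent is a negative constant. Markov's inequality then yields $\opnorm{S}\le\Frob{S}\le\mu_{t,d}2^{-\Omega(n)}$ with probability $1-2^{-\Omega(n)}$ over $(P,j)$. Note that $u_*$ is fixed independently of $P$, so the probability computation is legitimate. The delicate points are that $\alpha_*$ itself depends on $P$, which is handled by conditioning on the coset structure as above, and that \cref{lem:hessian-square-mass} is indexed by all $u$ while we need only those $u\neq u_*$; restricting to $u\ne u_*$ only discards terms.

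\textbf{Eigenvector perturbation.} If $\sigma>0$, I would take $s=+$ and repeat the Rayleigh-quotient argument of \cref{lem:hessian-recovery} verbatim. Comparing $q_+^T\widehat\cG q_+\ge\widehat v^T\widehat\cG\widehat v$ gives
\[
  |\sigma|\mu_{t,d}\bigl(1-\ip{q_+}{\widehat v}^2\bigr)\le2\opnorm{E}=O(\mu_{t,d}/n).
\]
If $\sigma<0$, I would take $s=-$ and use the reversed inequality $q_-^T\widehat\cG q_-\le\widehat v^T\widehat\cG\widehat v$, which yields the same bound. Because $|\sigma|\ge1/2$, in either case we get $\norm{q_s-\tau\widehat v}=O(n^{-1/2})$ for a suitable sign $\tau$.

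\textbf{BDD step.} Exactly as in \cref{lem:hessian-recovery}, $\norm{dq_s-\tau v}=O(\lambda n^{-1/2})<\lambda n^{-1/3}$ for large $n$, so \cref{thm:bdd-preprocessing} returns $\tau v$. Union-bounding the failure events of \cref{lem:target-coset-flatness} and of the Markov step for $S$ gives the claimed $1-2^{-\Omega(n)}$ probability over $P$ and $j$.
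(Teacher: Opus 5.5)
Your proposal is correct and follows the paper's proof essentially step for step. It uses the same second-moment bound on the $\alpha\ne\alpha_*$ cross-term: orthogonality of the characters over uniform $j$, then the $(2^h-1)/(2^n-1)$ probability over $P$ combined with \cref{lem:hessian-square-mass} and Markov. This is followed by the Rayleigh-quotient argument and the $\BDD$ step. The only cosmetic difference is how you handle a negative hidden sign: you use the reversed Rayleigh inequality for $q_-$, whereas the paper takes the top eigenvector of $\sigma\widehat{\cG}_{t,d,j}(\theta_*)$ with $\sigma=(-1)^{\alpha_*\cdot j}$, which amounts to the same thing.
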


\begin{proof}
Let $\widehat v=v/\lambda$ and recall
\cref{lem:shortest-class-hessian} stating that
$\cA_{t,d}(u_*)
  =
  \mu_{t,d}\widehat v\widehat v^T
  +
  \cR_{t,d}(u_*)$
for ${\opnorm{\cR_{t,d}(u_*)}}/{\mu_{t,d}}
  =
  2^{-\Omega(n)}$, where $\mu_{t,d}=\xi_t^4 \lambda^2 2^{-tn+O(1)}$.

We first bound the contribution from $\alpha\ne\alpha_*$. For fixed
$P$, expanding the squared Frobenius norm using $\Frob{X}^2=\operatorname{Tr}(X^TX)$ for the real matrix $X$, 
\begin{align*}
    \Frob{
    \sum_{\alpha\ne\alpha_*}
    (-1)^{j\cdot(\alpha-\alpha_*)}
    \cA_{t,d}\left(P^{-1}(\alpha,\theta_*)\right)
  }^2
  =\sum_{\alpha,\alpha' \neq \alpha_*} (-1)^{j\cdot(\alpha-\alpha')} \operatorname{Tr}\left(
  \cA_{t,d}\left(P^{-1}(\alpha',\theta_*)\right)^T \cA_{t,d}\left(P^{-1}(\alpha,\theta_*)\right)
  \right)
\end{align*}
and taking the expectation
over uniform $j$ using $\E_j[(-1)^{j\cdot(\alpha-\alpha')}]=0$ for $\alpha\neq \alpha'$
gives
\[
  \E_j\Frob{
    \sum_{\alpha\ne\alpha_*}
    (-1)^{j\cdot(\alpha-\alpha_*)}
    \cA_{t,d}\left(P^{-1}(\alpha,\theta_*)\right)
  }^2
  =
  \sum_{\alpha\ne\alpha_*}
  \Frob{
    \cA_{t,d}\left(P^{-1}(\alpha,\theta_*)\right)
  }^2.
\]

Now we are taking the expectation over $P$. 
For each fixed $u\ne u_*$, the matrix $\cA_{t,d}(u)$ occurs in the
sum on exactly when
$P(u-u_*)\in\F_2^h\times\{0\}$. Since $P(u-u_*)$ is uniform over
$\F_2^n\setminus\{0\}$, the probability that $P(u-u_*)$ is included in $\F_2^h\times\{0\}$ is exactly $(2^h-1)/(2^n-1)$.
Consequently, \cref{lem:hessian-square-mass} gives
\[
  \E_{P,j}\Frob{
    \sum_{\alpha\ne\alpha_*}
    (-1)^{j\cdot(\alpha-\alpha_*)}
    \cA_{t,d}\left(P^{-1}(\alpha,\theta_*)\right)
  }^2
  \le
    2^{h-n}  \cdot 
    (\mu_{t,d}^2 2^{2tn-g(t)n+o(n)})
    =\mu_{t,d}^2 2^{(\chi -1 -g(t) +2t+o(1))n}
\]
where we use $h=\lfloor \chi n \rfloor$.
Since $\chi\le1/2$, $t<1/4$, and $g(t)>0$, the exponent satisfies
$\chi-1-g(t)+2t<0$. Markov's inequality shows that
\begin{align}\label{eqn: Frobremainderbound}
  \Frob{
    \sum_{\alpha\ne\alpha_*}
    (-1)^{j\cdot(\alpha-\alpha_*)}
    \cA_{t,d}\left(P^{-1}(\alpha,\theta_*)\right)
  }
  \le
  \mu_{t,d}2^{-\Omega(n)}
\end{align}
holds except
with probability $2^{-\Omega(n)}$ over $P$ and $j$, and the same bound holds for the operator norm as well.

For the hidden sign $\sigma=(-1)^{\alpha_*\cdot j}$, we can decompose $E:=\sigma\widehat{\cG}_{t,d,j}(\theta_*)
  - \frac{\mu_{t,d}}{\zeta_{t,j}}
  \widehat v\widehat v^T - a\Id$ using \cref{lem:affine-coset-hessian} 
for some $a$
and bound its operator norm by
\[
  \opnorm{\widehat{\cG}_{t,d,j}(\theta_*) - {\cG}_{t,d,j}(\theta_*)}
  +
  \frac{1}{\zeta_{t,j}}
  \left(\opnorm{\sum_{\alpha\ne\alpha_*}
    (-1)^{j\cdot(\alpha-\alpha_*)}
    \cA_{t,d}\left(P^{-1}(\alpha,\theta_*)\right)}
  +\opnorm{\cR_{t,d}(u_*)}\right) = O\left(\mu_{t,d}n^{-1}\right)
\]
where we use \cref{lem:affine-coset-estimation} and \cref{eqn: Frobremainderbound} together with
$\zeta_{t,j}=1+O(2^{-\Omega(n)})$ from \cref{lem:target-coset-flatness}.

Let $q$ be the unit eigenvector corresponding to the largest eigenvalue of $\sigma\widehat{\cG}_{t,d,j}(\theta_*)$, which must be one of $q_+$ (if $\sigma=1$) or $q_-$ (if $\sigma=-1$).
Hence
$q^T\sigma\widehat{\cG}_{t,d,j}(\theta_*)q
\ge
\widehat v^T\sigma\widehat{\cG}_{t,d,j}(\theta_*)\widehat v$.
Plugging $\sigma\widehat{\cG}_{t,d,j}(\theta_*) =   \frac{\mu_{t,d}}{\zeta_{t,j}}\widehat v\widehat v^T + a\Id + E$, we obtain the following inequality
\[
  \frac{\mu_{t,d}}{\zeta_{t,j}}
  \left(1-\ip{q}{\widehat v}^2\right)
  \le
  2\opnorm{E} = O\left(\mu_{t,d}n^{-1}\right).
\]
Thus $1-\ip{q}{\widehat v}^2=O(n^{-1})$. Choosing
$\tau\in\{-1,1\}$ such that
$\ip{q}{\tau\widehat v}\ge0$ gives
$\norm{q-\tau\widehat v}=O(n^{-1/2})$.

Finally,
$\norm{dq-\tau v}
\le d\norm{q-\tau\widehat v}+d-\lambda
=O(\lambda n^{-1/2})
<\lambda n^{-1/3}$
for all sufficiently large $n$. The final claim follows from
\cref{thm:bdd-preprocessing}.
\end{proof}

\subsection{The affine-coset algorithm}
\label{sec:affine-coset-algorithm}
This section presents the algorithm using the affine cosets. 
The overall time complexity is $
2^{0.7314 n}\approx 2^{(0.5+t)n+o(n)} $ by balancing $2^{\max(0.5,\chi+2t) n +o(n)}$ of the DGS sampling from \cref{lem:affine-coset-sampling} and $2^{\ell+o(n)} = 2^{(1-\chi)n+o(n)}$ of the number of $\theta \in \F_2^\ell$, where $h=\lfloor \chi n \rfloor$, $t_0 =0.2314\ldots < t < 1/4$ and $N_t=2^{(2t+o(1))n}$. This can be obtained using the Walsh-Hadamard transform as before, and the other steps take much smaller time.

\begin{algorithmblock}{Affine-coset Hessian $\SVP$}
\label{alg:direct-row}
  \item Construct the preprocessing $n^{-1/3}$-$\BDD$ data and the
        scale grid from \cref{alg:direct-exhaustive}.
        Choose $t_0<t<1/4$, and put $\chi=1/2-t$.

  \item At every scale $d$, choose independent uniform
        $P\in\GL_n(\F_2)$ and $j\in\F_2^h$.
        Apply the sampling algorithm from
        \cref{lem:affine-coset-sampling}.
        Abort the scale $d$ unless $N_t$ samples from
        $D_{\Lambda_j,\xi_t}$ are obtained.

  \item Let $b=\lfloor n/2\rfloor$. For every accepted sample $X_i$,
        define
        \[
          V_P(X_i)=(k_i',k_i'')
          \in\F_2^b\times\F_2^{\ell-b}, \qquad
          W_i:=
          -\frac{4\pi^2}{N_t}X_iX_i^T
          \ind_{\{\norm{X_i}\le C_1\xi_t\sqrt n\}}.
        \]
        For each $\theta''\in\F_2^{\ell-b}$, do the following:
        \begin{enumerate}
            \item Construct
        the array $
          A_{\theta''}(x)
          :=
          \sum_{\substack{1\le i\le N_t\\k_i'=x}}
          (-1)^{\theta''\cdot k_i''}W_i$ for all $x\in\F_2^b$.
          \item Apply the matrix-valued Walsh-Hadamard transform to obtain for every $\theta'\in\F_2^b$:
        \begin{align}\label{eqn:sequential-coset-wht}
          \sum_{x\in\F_2^b}
          (-1)^{\theta'\cdot x}A_{\theta''}(x)
          =
          \sum_{i=1}^{N_t}
          (-1)^{\theta'\cdot k_i'
          +\theta''\cdot k_i''}W_i
          =
          \widehat{\cG}_{t,d,j}(\theta',\theta'').
        \end{align}
        \item For each output in \cref{eqn:sequential-coset-wht},
        compute unit eigenvectors $q_+$ and $q_-$ corresponding to
        its largest and smallest eigenvalues and query $\BDD$ at $dq_\pm$. Store a nonzero output in $\cL$.
        Keep only the shortest vector found, and discard the other vectors and temporal data.
        \end{enumerate}


  \item Return the shortest vector over all scales.
\end{algorithmblock}


\begin{theorem}\label{thm:direct-row}
\cref{alg:direct-row} solves
Search-$\SVP$ with constant success probability in time
$  2^{0.7314n+o(n)}$
and space
$2^{n/2+o(n)}$.
\end{theorem}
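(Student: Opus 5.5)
The proof splits into correctness and complexity, and follows the structure of \cref{thm:full-scan}.

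\textbf{Correctness.} By LLL and the scale grid of \cref{alg:direct-exhaustive}, some scale $d=d_j$ satisfies $\lambda\le d\le(1+1/n)\lambda$. Fix such a $d$, a shortest vector $v$, and its parity class $u_*$. At this scale the algorithm draws independent uniform $P,j$. The parameter $\chi=1/2-t$ satisfies $0\le\chi\le1/2$ because $t<1/4$. Since $t>t_0$ gives $g(t)>0$, we also have $\chi<1/2+g(t)$, so the hypotheses of \cref{lem:target-coset-flatness,lem:affine-coset-sampling} hold.

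With probability $1-2^{-\Omega(n)}$ the sampler does not abort. Its output is then within statistical distance $\exp(-\Omega(n^2))$ of $D_{\Lambda_j,\xi_t}^{N_t}$. We may therefore replace the samples by genuinely independent ones, at an additive cost $\exp(-\Omega(n^2))$ in failure probability.

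Next, \cref{lem:affine-coset-estimation} holds with probability $1-2^{-\Omega(n)}$. The good event of \cref{lem:affine-coset-recovery} also holds with probability $1-2^{-\Omega(n)}$ over $P,j$. Identity \eqref{eqn:sequential-coset-wht}, checked exactly as in the proof of \cref{thm:full-scan}, shows that the transform outputs $\widehat{\cG}_{t,d,j}(\theta)$ for every $\theta=(\theta',\theta'')\in\F_2^\ell$, and in particular for $\theta_*$. \cref{lem:affine-coset-recovery} then shows that one of the two $\BDD$ queries at $dq_\pm$ returns $\pm v$.

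Outputs are kept only after exact lattice-membership verification, so every retained vector is a nonzero lattice vector. Hence the shortest retained vector has norm $\lambda$. A union bound over the good events, together with the constant success probability of the $\BDD$ preprocessing (\cref{thm:bdd-preprocessing}), gives constant overall success probability.

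\textbf{Complexity.} There are polynomially many scales. At each scale:
\begin{itemize}
\item Sampling costs $2^{(\max\{1/2,\chi+2t\}+o(1))n}$ by \cref{lem:affine-coset-sampling}. With $\chi=1/2-t$ this is $2^{(1/2+t)n+o(n)}$.
\item The loop ranges over $2^{\ell-b}$ values of $\theta''$, where $\ell-b=n-\lfloor\chi n\rfloor-\lfloor n/2\rfloor=(t+o(1))n$.
\item Each iteration builds the array $A_{\theta''}$ in $N_t\poly(n)=2^{2tn+o(n)}$ time and runs the transform in $2^{b}\poly(n)$ time by \cref{lem:matrix-wht}. It also computes $2^b$ eigenvector pairs and makes $2^b$ $\BDD$ queries, each taking $2^{o(n)}$ time.
\end{itemize}
Since $2t<1/2$, each iteration costs $2^{n/2+o(n)}$, and the loop totals $2^{(1/2+t)n+o(n)}$.

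Letting $t\downarrow t_0=0.23147\ldots$ gives time $2^{(1/2+t_0)n+o(n)}\le 2^{0.7315n+o(n)}$, i.e.\ $2^{0.7314n+o(n)}$ up to the rounding of $t_0$. Space is dominated by three items: the sampler's $2^{n/2+o(n)}$, the $N_t<2^{n/2}$ stored samples, and a single array of $2^b\poly(n)$ matrices. All are $2^{n/2+o(n)}$, and the $\BDD$ advice is only $2^{o(n)}$.

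\textbf{Main obstacle.} The step needing most care is the probability bookkeeping. The good events of \cref{lem:affine-coset-sampling,lem:affine-coset-estimation,lem:affine-coset-recovery} all concern the same independent $P$ and $j$, and they must be combined correctly; the replacement of the sampler output by true samples must be justified at the statistical-distance level. The arithmetic is routine.
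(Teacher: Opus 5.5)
Your proposal is correct and follows the paper's proof essentially step for step. For correctness it uses the same chain: sampling via \cref{lem:affine-coset-sampling}, then \cref{lem:affine-coset-estimation}, identity \eqref{eqn:sequential-coset-wht}, and \cref{lem:affine-coset-recovery}. For the running time it uses the same count of $2^{\ell-b}$ iterations at $2^{b+o(n)}$ each and the same balance $\max\{\chi+2t,1-\chi\}=1/2+t$. Your explicit hypothesis checks and your observation that $1/2+t_0=0.73147\ldots$ slightly exceeds the stated $0.7314$ are accurate; the paper leaves both points implicit.
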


\begin{proof}
Suppose
$\lambda_1(\cL)\le d\le(1+1/n)\lambda_1(\cL)$ and
let $v$ a shortest vector. 
Choose $u_*\in\F_2^n$ such that
$v\in Bu_*+2\cL$, write $Pu_*=(\alpha_*,\theta_*)$, and write
$\theta_*=(\theta_*',\theta_*'')\in
\F_2^b\times\F_2^{\ell-b}$.

By \cref{lem:affine-coset-sampling}, the algorithm obtains the
specified samples from $D_{\Lambda_j,\xi_t}$. By
\cref{eqn:sequential-coset-wht}, the iteration indexed by
$\theta_*''$ computes
$\widehat{\cG}_{t,d,j}(\theta_*',\theta_*'')
=\widehat{\cG}_{t,d,j}(\theta_*)$.
The conclusion of \cref{lem:affine-coset-estimation} holds
simultaneously for every $\theta\in\F_2^\ell$, and
\cref{lem:affine-coset-recovery} shows the correctness.
The statistical distance in \cref{lem:affine-coset-sampling} only changes
the success probability by $\exp(-\Omega(n^2))$.

We analyze the time and space complexity. 
Sampling takes
$2^{(\max\{1/2,\chi+2t\}+o(1))n}$ time by \cref{lem:affine-coset-sampling}. There are
$2^{\ell-b}$ sequential iterations. In each iteration, constructing
the array takes $N_t\poly(n)$ time, and its Walsh-Hadamard
transform, eigenvector computations, and $2^{b+1}$ $\BDD$ queries
take $2^{b+o(n)}$ time. Since
$N_t=2^{(2t+o(1))n}<2^{b+o(n)}=2^{0.5n+o(n)}$, their total time is $2^{\ell +o(n)}$.
Since $\ell=(1-\chi)n+O(1)$ and $\chi=1/2-t$, the time exponent is $\max\{\chi+2t,1-\chi\}=1/2+t\to0.7314$ for $t\to t_0$. The space complexity is dominated by the number of accepted samples and the preprocessing $\BDD$ data, which is $2^{n/2+o(n)}$.
\end{proof}

\section{Importance Sampling on an Affine Lattice Coset}
\label{sec:best}
This section improves the algorithm in \cref{sec:random-row} by
sampling from $D_{\Lambda_j,\xi_R}$ and estimating the Hessian
defined using the smaller width $\xi_r$.

Fix constants $0<r<R$ with $r<1/4$ and $0<\chi<1/2$.
Let $h=\lfloor\chi n\rfloor$ and $\ell=n-h$ as in
\cref{sec:linear-fibers}.
Recall $g(t)=\frac12\log_2(t/t_0)$.
We use the parameters $R$ and $r$ for the source and target
widths, respectively.

\subsection{Lattice points in a shortest parity class}
\label{sec:parity-shell}
We first study the number of lattice points in the parity classes to improve some inequalities and conditions in some lemmas.
For $0<\varphi<\pi/2$, define the Kabatiansky-Levenshtein constant
\[
  B_{\rm KL}(\varphi)
  :=
  \frac{1+\sin\varphi}{2\sin\varphi}
  \log_2\frac{1+\sin\varphi}{2\sin\varphi}
  -
  \frac{1-\sin\varphi}{2\sin\varphi}
  \log_2\frac{1-\sin\varphi}{2\sin\varphi}.
\]
Note that $\beta$ in \cref{thm:lattice-points} is such that $\log_2 \beta = B_{\rm KL}(\pi/3)$ as shown in \cite{EPRINT:PujSte09}.
We use the following bound \cite{KL78}.

\begin{theorem}\label{thm:functional-kl}
Let $0<\varphi<\pi/2$ be fixed. If
$S\subseteq\{x\in\R^n:\norm{x}=1\}$ and the angle between every
two distinct vectors in $S$ is at least $\varphi$, then
$|S|\le2^{(B_{\rm KL}(\varphi)+o(1))n}$. The $o(1)$ term is uniform
when $\varphi$ ranges over a compact subinterval of $(0,\pi/2)$.
\end{theorem}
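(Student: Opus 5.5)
This statement is the classical Kabatiansky--Levenshtein linear-programming bound for spherical codes, so I will not re-derive it from scratch. The plan is to import the pointwise statement from \cite{KL78} and add a short argument for the uniformity of the $o(1)$ term.

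\textbf{Pointwise bound.} For a fixed angle $\varphi\in(0,\pi/2)$, \cite{KL78} apply Delsarte's linear programming method on the sphere with a suitably chosen polynomial built from Gegenbauer (Jacobi) polynomials. The polynomial must be nonnegative on $[-1,\cos\varphi]$, or have the sign pattern that makes the LP bound valid, and have nonnegative Gegenbauer coefficients with positive constant term. Optimizing the degree as a linear function of $n$ and using the asymptotics of the largest zero of Jacobi polynomials gives
\[
  \frac1n\log_2|S|\le B_{\rm KL}(\varphi)+o(1)
\]
for every code $S$ with minimal angle at least $\varphi$. I will cite this directly. The displayed expression is exactly their basic bound in the form valid on all of $(0,\pi/2)$. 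It is not the refined small-angle version obtained by Levenshtein's dimension reduction, which I do not need. As a consistency check, $\log_2\beta=B_{\rm KL}(\pi/3)$ recovers the constant of \cite{EPRINT:PujSte09} used in \cref{thm:lattice-points}.

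\textbf{Uniformity on a compact interval.} I only use the pointwise statement and the continuity of $B_{\rm KL}$ on $(0,\pi/2)$. The continuity is clear because $\sin\varphi>0$ there and both arguments of the logarithms stay positive.
\begin{itemize}
\item Fix a compact interval $[a,b]\subset(0,\pi/2)$ and $\delta>0$. Since $B_{\rm KL}$ is uniformly continuous on $[a,b]$, there is a finite grid $a=\varphi_1<\cdots<\varphi_m\le b$ with $|B_{\rm KL}(\varphi)-B_{\rm KL}(\varphi_k)|\le\delta$ whenever $\varphi_k\le\varphi<\varphi_{k+1}$ (taking $\varphi_{m+1}:=b$ and including $b$ itself in the last interval).
\item Given any $\varphi\in[a,b]$, choose the grid point $\varphi_k\le\varphi$ just below it. Every code with minimal angle at least $\varphi$ also has minimal angle at least $\varphi_k$. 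Hence $|S|\le 2^{(B_{\rm KL}(\varphi_k)+o_k(1))n}\le 2^{(B_{\rm KL}(\varphi)+\delta+o_k(1))n}$.
\item Only finitely many error terms $o_k(1)$ occur, so their maximum is still $o(1)$. This gives a bound $B_{\rm KL}(\varphi)+\delta+o(1)$ uniformly in $\varphi\in[a,b]$.
\item Letting $\delta=\delta(n)\to0$ slowly, for example by a diagonal choice of the grid refinement as $n$ grows, turns $\delta+o(1)$ into a single uniform $o(1)$.
\end{itemize}
Note that this argument does not need $B_{\rm KL}$ to be monotone in $\varphi$; rounding the angle downward plus continuity suffices.

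\textbf{Main obstacle.} The only real difficulty is the pointwise LP bound itself, namely the extremal polynomial construction and the asymptotics of Jacobi zeros. Since this is precisely the content of \cite{KL78}, I will treat it as a black box. The uniformity step is routine.
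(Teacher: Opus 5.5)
Your proposal is correct and matches the paper, which also simply imports this bound from \cite{KL78} without giving a proof. Your argument for uniformity of the $o(1)$ term also works: round $\varphi$ down to a finite grid, use continuity of $B_{\rm KL}$ on the compact interval, then take a diagonal $\delta(n)\to 0$. This supplies a justification for the uniformity clause, which the paper asserts without proof.
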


The counting argument in the following lemma is a 
variant of \cite[Lemma 3]{EPRINT:PujSte09}. The restriction to
$v+2\cL$ gives the stronger separation used below.
\begin{lemma}\label{lem:parity-separation}
Let $v$ be a shortest vector and put $\lambda=\norm{v}$.
Every $w\in v+2\cL\setminus\{\pm v\}$ satisfies
$\norm{w}\ge\sqrt3\lambda$. Moreover, for every fixed
$C\ge\sqrt3$, uniformly for $\sqrt3\le x\le C$,
\[
  \left|
    \left\{
      w\in v+2\cL\setminus\{\pm v\}:
      \norm{w}\le x\lambda
    \right\}
  \right|
  \le
  2^{(B_{\rm KL}(\arccos(1-2/x^2))+o(1))n}.
\]
\end{lemma}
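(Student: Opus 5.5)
Write $w=v+2y$ with $y\in\cL$. Then $w=\pm v$ exactly when $y=0$ or $y=-v$, so for the first claim we may assume $y\neq0$ and $y+v\neq0$. Both $y$ and $y+v$ are then nonzero lattice vectors, so $\norm{y}\ge\lambda$ and $\norm{y+v}\ge\lambda$. The parallelogram identity applied to $y$ and $y+v$ gives
\[
\norm{w}^2=\norm{2y+v}^2=2\norm{y}^2+2\norm{y+v}^2-\norm{v}^2\ge 4\lambda^2-\lambda^2=3\lambda^2 .
\]
This proves $\norm{w}\ge\sqrt3\lambda$.

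For the counting bound, the plan is to show that distinct elements of the set $S_x=\{w\in v+2\cL\setminus\{\pm v\}:\norm{w}\le x\lambda\}$ are pairwise far apart in angle.

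Take distinct $w,w'\in S_x$. Their difference $w-w'$ lies in $2\cL\setminus\{0\}$, so $\norm{w-w'}\ge2\lambda$. This is the stronger separation that comes from restricting to a parity class. Together with $\sqrt3\lambda\le\norm{w},\norm{w'}\le x\lambda$, this yields an angle lower bound. Write $a=\norm{w}$, $b=\norm{w'}$, and let $\phi$ be the angle between $w$ and $w'$. Then
\[
\cos\phi=\frac{a^2+b^2-\norm{w-w'}^2}{2ab}\le\frac{a^2+b^2-4\lambda^2}{2ab}.
\]
The right-hand side is not obviously at most $1-2/x^2$ for all admissible $a,b$. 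For example, if $a=x\lambda$ and $b=\sqrt3\lambda$ are far apart, it is close to $(a^2-\lambda^2)/(2\sqrt3 a\lambda)$, which can be large.

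I would therefore proceed as in \cite[Lemma 3]{EPRINT:PujSte09}. Partition the radii $[\sqrt3\lambda,x\lambda]$ into thin shells $[\gamma\lambda,\gamma(1+1/n)\lambda]$; there are polynomially many of them, and this costs only a $2^{o(n)}$ factor. Within a shell we have $a\approx b\approx\gamma\lambda$, and the bound becomes
\[
\cos\phi\le 1-\frac{2}{\gamma^2}+O(1/n)\le 1-\frac{2}{x^2}+O(1/n),
\]
using $\gamma\le x$. Hence $\phi\ge\arccos(1-2/x^2)-o(1)$.

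The normalized vectors $w/\norm{w}$ in one shell form a spherical code with minimum angle $\varphi_n\to\arccos(1-2/x^2)$. Since $x\in[\sqrt3,C]$, this angle lies in the compact interval $[\pi/3,\arccos(1-2/C^2)]\subset(0,\pi/2)$. Distinct $w$ in the same shell also give distinct unit vectors, because their angle is positive. \cref{thm:functional-kl} then bounds each shell by $2^{(B_{\rm KL}(\varphi_n)+o(1))n}$.

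The step needing the most care is exchanging $\varphi_n$ for the limit angle. $B_{\rm KL}$ is continuous on $(0,\pi/2)$, hence uniformly continuous on the compact interval above, so the $O(1/n)$ perturbation of the angle changes the exponent by only $o(1)$, uniformly in $x$. The uniformity of the $o(1)$ term in \cref{thm:functional-kl} handles the rest. Multiplying by the polynomial number of shells gives the stated bound.
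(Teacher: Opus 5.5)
Your proof is correct, but the counting step takes a different route from the paper. The first claim uses the same parallelogram-identity argument; the paper writes $w=2y-v$ rather than $v+2y$.

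\textbf{Paper's route for the count.} The paper avoids shells entirely. It lifts each $w$ with $\norm{w}\le x\lambda$ to $(w,\sqrt{x^2\lambda^2-\norm{w}^2})\in\R^{n+1}$. All lifted vectors then have the same norm $x\lambda$, and pairwise distances can only grow, so they remain at least $2\lambda$. This gives pairwise angles at least $\arccos(1-2/x^2)$ exactly, and \cref{thm:functional-kl} is applied once, in dimension $n+1$. The lifting removes the obstacle you correctly identified, namely that the law-of-cosines bound degrades when $\norm{w}$ and $\norm{w'}$ differ. It needs no $O(1/n)$ angle perturbation, no shell count, and no continuity of $B_{\rm KL}$.

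\textbf{Your route.} Your thin-shell argument, in the style of Pujol--Stehl\'e, also works. Within a shell you get $\cos\phi\le(1+1/n)^2-2/\gamma^2\le 1-2/x^2+3/n$. There are $O(n)$ shells, and uniform continuity of $B_{\rm KL}(\arccos(\cdot))$ on $[1/3,1-1/C^2]$ absorbs the perturbation uniformly in $x$. This costs more bookkeeping than the lifting. In exchange it gives a slightly finer per-shell bound, governed by $\gamma$ rather than $x$, though that is not needed here.

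\textbf{One small slip.} For $x\in[\sqrt3,C]$, the limiting angle $\arccos(1-2/x^2)$ lies in $[\arccos(1-2/C^2),\arccos(1/3)]$, not in $[\pi/3,\arccos(1-2/C^2)]$; the angle drops below $\pi/3$ once $x>2$. The corrected interval is still a compact subset of $(0,\pi/2)$, since $\arccos(1/3)<\pi/2$ and $C<\infty$, so your argument is unaffected.
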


\begin{proof}
Write $w=2y-v$. If $w\ne\pm v$, then $y$ and $y-v$ are nonzero
lattice vectors. The parallelogram identity gives
$\norm{w}^2+\lambda^2=2\norm{y}^2+2\norm{y-v}^2\ge4\lambda^2$.

For distinct $w,w'\in v+2\cL$, the vector $(w-w')/2$ is a nonzero
lattice vector, and hence $\norm{w-w'}\ge2\lambda$. For each vector
with $\norm{w}\le x\lambda$, append the coordinate
$\sqrt{x^2\lambda^2-\norm{w}^2}$. The resulting vectors in
$\R^{n+1}$ have norm $x\lambda$ and pairwise distance at least
$2\lambda$. After normalization, their pairwise angles are at least
$\arccos(1-2/x^2)$. The result follows from
\cref{thm:functional-kl}.
\end{proof}

We improve the bound on $\rho_{2\sqrt2 /\xi_t}(\cL)\le 1+2^{(1-g(t)+o(1))n}$ in \cref{cor:working-scale} (i.e. \cref{eqn:simplerho_g}) using a more fine-grained counting argument with the vectors in each coset modulo $2\cL$.
For $x\ge1$, define
\[
  K_2(x)
  :=
  \begin{cases}
    0,&1\le x\le\sqrt2,\\
    B_{\rm KL}(\arccos(1-2/x^2)),&x>\sqrt2,
  \end{cases}
\]
Combining \cref{thm:functional-kl} with \cref{thm:lattice-points}, the number of nonzero vectors
in $\cL$ of norm at most $x\lambda$ is
\[
  N_{\cL}(x\lambda_1(\cL))
  \le 2^{\left(
    \min\{\log_2\beta+\log_2x,1+K_2(x)\}+o(1)
  \right)n}.
\]

Recall $g(t)=\frac{1}{2}\log_2(t/t_0)$ for $t_0=0.23147\ldots.$
Define\footnote{The inequality is not obvious. We sketch the proof for $R < 1/\ln 2$ which is only relevant to our analysis. In this case, dropping the second term from the minimum gives
$g_2(R)\ge1-\sup_{x\ge1}
(\log_2\beta+\log_2x-Rx^2/2)$. The supremum is attained at
$x=1/\sqrt{R\ln2}$, and the resulting lower bound equals $g(R)$.}
\begin{align}\label{eqn:refined-source-exponent}
  g_2(R)
  :=
  1-
  \sup_{x\ge1}
  \left(
    \min\left\{
      \log_2\beta+\log_2x,
      1+K_2(x)
    \right\}
    -\frac R2x^2
  \right) \ge g(R).
\end{align}

\begin{lemma}\label{lem:parity-source-mass}
Let 
$\lambda=\lambda_1(\cL)\le d\le(1+1/n)\lambda$. For every fixed $R>0$,
$  \rho_{2\sqrt2/\xi_R}(\cL\setminus\{0\})
  \le
  2^{(1-g_2(R)+o(1))n}.$
\end{lemma}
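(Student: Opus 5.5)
Compute the Gaussian weight first: $\rho_{2\sqrt2/\xi_R}(x)=\exp(-\pi\xi_R^2\norm{x}^2/8)$ and $\xi_R^2=4nR\ln2/(\pi d^2)$, so
\[
\rho_{2\sqrt2/\xi_R}(x)=2^{-\frac{R}{2}\frac{\norm{x}^2}{d^2}n}\ge 2^{-\frac R2 x^2 n}\cdot 2^{-O(1)}\quad\text{is at most}\quad 2^{-\frac R2 y^2 n+O(1)}
\]
whenever $\norm{x}\ge y\lambda$, using $d\le(1+1/n)\lambda$ so that $\norm{x}^2/d^2\ge y^2(1-O(1/n))$. The plan is a shell decomposition exactly as in \cref{lem:gaussian-moments}. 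I split $\cL\setminus\{0\}$ into shells $S_k=\{x: y_k\lambda<\norm{x}\le y_{k+1}\lambda\}$ with $y_k=1+k/n$, for $k$ up to some $Cn$ with a fixed large constant $C$. The tail $\norm{x}>C\lambda$ is handled separately.

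On each shell, I bound the number of points by the combined counting bound stated just before \cref{eqn:refined-source-exponent}:
\[
|S_k|\le N_\cL(y_{k+1}\lambda)\le 2^{(\min\{\log_2\beta+\log_2 y_{k+1},\,1+K_2(y_{k+1})\}+o(1))n}.
\]
The $1+K_2$ branch comes from partitioning $\cL\setminus\{0\}$ into the $2^n$ cosets modulo $2\cL$. Inside a coset, the points are pairwise at distance $\ge2\lambda$. I then apply the lifting trick of \cref{lem:parity-separation} and \cref{thm:functional-kl}. For $x\le\sqrt2$, a coset can contain only $O(1)$ (indeed at most $2$) vectors of norm $\le x\lambda$, which gives $K_2=0$. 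Each shell then contributes at most $2^{(f(y_{k+1})-\frac R2 y_k^2+o(1))n}$, where $f$ denotes the min inside \cref{eqn:refined-source-exponent}. Since $y_{k+1}^2-y_k^2=O(1/n)$ for bounded $y$, this is $2^{(f(y_{k+1})-\frac R2 y_{k+1}^2+o(1))n}\le 2^{(1-g_2(R)+o(1))n}$. Summing $O(n)$ shells costs only a polynomial factor.

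For the tail $\norm{x}>C\lambda$, I use only $N_\cL(y\lambda)\le\beta^{n+o(n)}y^n$ from \cref{thm:lattice-points}, together with the Gaussian decay $2^{-\frac R2 y^2 n}$. For $C$ large depending on $R$, the standard tail summation (as in \cref{lem:dgs-tail}, or a geometric sum over dyadic shells) makes this contribution $2^{-\Omega(n)}$ times something at most $2^{(1-g_2(R))n}$. Note that $1-g_2(R)\ge -\log_2\beta\cdot 0$ is not needed; it suffices that the tail is at most $2^{-c n}$ with $c$ exceeding $|1-g_2(R)|$, which holds once $C$ is large.

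The main obstacle is uniformity of the $o(1)$ in the KL bound across $x\in(\sqrt2,C]$. \cref{thm:functional-kl} is uniform only on compact subintervals of angles in $(0,\pi/2)$, and the angle $\arccos(1-2/x^2)$ tends to $\pi/2$ as $x\to\sqrt2^+$. I would handle this by fixing a small $\delta>0$. For $x\le\sqrt2+\delta$, I bound the count via monotonicity by the count at $\sqrt2+\delta$, whose KL exponent is $K_2(\sqrt2+\delta)\to0$ as $\delta\to0$. This loses an additive $K_2(\sqrt2+\delta)+\frac R2((\sqrt2+\delta)^2-2)$, which vanishes as $\delta\to0$. On $[\sqrt2+\delta,C]$ the bound is uniform, and letting $\delta\to0$ slowly absorbs everything into the $o(1)$.
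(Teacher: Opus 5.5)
Your route is the paper's: count the vectors of norm at most $x\lambda$ in each coset of $2\cL$, sum over the $2^n$ cosets, take the minimum with \cref{thm:lattice-points}, sum over shells against the weight $2^{-(R/2)n\norm{w}^2/d^2}$, and handle the tail $\norm{w}>C\lambda$ with \cref{thm:lattice-points} alone. One step, however, is false as stated. For $x\le\sqrt2$, a coset of $2\cL$ need not contain only $O(1)$ vectors of norm $\le x\lambda$, let alone at most $2$.
\begin{itemize}
\item In $\Z^n$, the coset $(1,1,0,\ldots,0)+2\Z^n$ contains the four vectors $(\pm1,\pm1,0,\ldots,0)$ of norm $\sqrt2\lambda$.
\item In $\cL=\frac{\lambda}{\sqrt2}D_n$, with $D_n=\{z\in\Z^n:\sum_i z_i\equiv 0 \pmod 2\}$, all $2n$ vectors $\pm\sqrt2\lambda e_i$ lie in a single nonzero coset of $2\cL$.
\end{itemize}
What is true, and what the paper proves, is $m\le 2(n+1)$. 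After lifting to unit vectors in $\R^{n+1}$, the pairwise inner products are at most $1-2/x^2\le 0$. The rank bound gives $\Frob{G}^2\ge m^2/(n+1)$, while nonpositivity together with $\norm{\sum_w u_w}^2\ge0$ gives $\Frob{G}^2\le 2m$; comparing the two yields $m\le2(n+1)$.

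Alternatively, your own $\delta$-device repairs this with no extra argument. Apply the monotone bound by the count at $\sqrt2+\delta$ on all of $[1,\sqrt2+\delta]$, not just on $(\sqrt2,\sqrt2+\delta]$. This costs only an additive $K_2(\sqrt2+\delta)$ in the exponent, since $\min\{a,1+\epsilon\}\le\min\{a,1+K_2(x)\}+\epsilon$ for $\epsilon\ge0$, and $K_2(\sqrt2+\delta)\to0$ as $\delta\to0$.

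The remaining difference is how uniformity is handled. You use the $n$-dependent grid $y_k=1+k/n$, so you need the $o(1)$ in the per-coset count to be uniform in $x$. You correctly obtain this from the uniformity clause of \cref{thm:functional-kl} on $[\sqrt2+\delta,1+C]$ and isolate the degenerate endpoint $\sqrt2^+$. The paper instead partitions $[1,X_0]$ into finitely many fixed intervals $[a,b]$ with $R(b^2-a^2)/2\le\delta$. It uses the counting bound only at the finitely many endpoints $b$, so pointwise-in-$x$ bounds suffice and the degeneracy at $\sqrt2$ never needs separate treatment. Both arguments then let $\delta\to0$, so both work; the paper's needs slightly less from the KL theorem.

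Two minor points. Your first display garbles the inequality direction; what you need is the upper bound $\rho_{2\sqrt2/\xi_R}(w)\le 2^{-\frac R2 y^2 n+O(1)}$ for $\norm{w}\ge y\lambda$ with $y$ bounded. The sentence about $-\log_2\beta\cdot 0$ should be deleted. The tail argument that follows it, choosing $C$ so the tail is at most $2^{-cn}$ with $c>|1-g_2(R)|$, is fine.
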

Note that the proof of this lemma is rather complicated and somewhat independent to the other arguments. The readers may use a simpler bound with $g$. All arguments in this section works well by using $g$ in place of $g_2$, which gives a only slightly bad complexity of $2^{0.6040n+o(n)}$ time and space. See \cref{rem: nonopt}.

\begin{proof}
We first bound the number of short vectors in one coset modulo
$2\cL$. Fix a coset $C\in\cL/2\cL$ and put
\[
  S_C(x)
  :=
  \left\{
    w\in C\setminus\{0\}:
    \norm{w}\le x\lambda
  \right\},
  \qquad
  m:=|S_C(x)|.
\]
For every $w\in S_C(x)$, define
  $u_w
  :=
  \frac1{x\lambda}
  \left(
    w,
    \sqrt{x^2\lambda^2-\norm{w}^2}
  \right)
  \in\R^{n+1}.$
Then $\norm{u_w}=1$. If $w,w'\in S_C(x)$ are distinct, then
$(w-w')/2$ is a nonzero vector in $\cL$, and hence
$\norm{w-w'}\ge2\lambda$. It follows that
\[
  \norm{u_w-u_{w'}}
  \ge
  \frac{\norm{w-w'}}{x\lambda}
  \ge
  \frac2x \quad \Longrightarrow \quad
  \ip{u_w}{u_{w'}}
  =
  1-\frac12\norm{u_w-u_{w'}}^2
  \le
  1-\frac2{x^2}.
\]

Suppose first that $1\le x\le\sqrt2$. Then
$\ip{u_w}{u_{w'}}\le0$ for distinct $w,w'$. Let
$G=(\ip{u_w}{u_{w'}})_{w,w'\in S_C(x)}$ be the Gram matrix of
these unit vectors. Since $\operatorname{rank}(G)\le n+1$ and
$\operatorname{tr}(G)=m$, the Cauchy-Schwarz inequality for the
eigenvalues of $G$ gives
\[
  \Frob{G}^2
  \ge
  \frac{\operatorname{tr}(G)^2}
       {\operatorname{rank}(G)}
  \ge
  \frac{m^2}{n+1}.
\]
On the other hand,
$-1\le\ip{u_w}{u_{w'}}\le0$ for $w\ne w'$, and hence
$\ip{u_w}{u_{w'}}^2\le-\ip{u_w}{u_{w'}}$.
Moreover,
\[
  0
  \le
  \norm{\sum_{w\in S_C(x)}u_w}^2
  =
  m+
  \sum_{\substack{w,w'\in S_C(x)\\w\ne w'}}
  \ip{u_w}{u_{w'}}.
\]
Consequently,
\[
  \Frob{G}^2
  =
  m+
  \sum_{\substack{w,w'\in S_C(x)\\w\ne w'}}
  \ip{u_w}{u_{w'}}^2
  \le
  m-
  \sum_{\substack{w,w'\in S_C(x)\\w\ne w'}}
  \ip{u_w}{u_{w'}}
  \le
  2m.
\]
Combining the two bounds on $\Frob{G}^2$ gives
$m\le2(n+1) = 2^{o(n)}$.

Now consider $x>\sqrt2$. The preceding inner-product
bound shows that the angle between every two distinct vectors
$u_w,u_{w'}$ is at least
$\arccos(1-2/x^2)$. Applying
\cref{thm:functional-kl} in dimension $n+1$ gives
\[
  |S_C(x)|
  \le
  2^{(K_2(x)+o(1))n}.
\]
Combining two cases, each coset modulo $2\cL$ contains
$2^{(K_2(x)+o(1))n}$ vectors of norm
$\le x\lambda$ for every fixed $x\ge1$.
Summing over $2^n$ cosets modulo $2\cL$, giving $2^{(1+K_2(x)+o(1))n}$. Together with \cref{thm:lattice-points}, we have
\[
  N_{\cL}(x\lambda)
  \le
  2^{\left(
    \min\{
      \log_2\beta+\log_2x,
      1+K_2(x)
    \}
    +o(1)
  \right)n} = 2^{A(x)n +o(n)}
\]
for $A(x)=\min\{
    \log_2\beta+\log_2x,
    1+K_2(x)
  \}$.
We now estimate the Gaussian mass.
Put $x_k=(1+1/n)^k$ and
$  S_k
  :=
  \left\{
    w\in\cL:
    x_k\lambda\le\norm{w}<x_{k+1}\lambda
  \right\}.$
Using
$  \rho_{2\sqrt2/\xi_R}(w)
  =
  2^{-(R/2)n\norm{w}^2/d^2},$
we have
\[
  \rho_{2\sqrt2/\xi_R}(\cL\setminus\{0\})
  \le
  \sum_{k\ge0}
  N_{\cL}(x_{k+1}\lambda)
  2^{-(R/2)n(\lambda^2/d^2)x_k^2}.
\]

Fix $\delta>0$ and a constant $X_0>1$. Partition
$[1,X_0]$ into finitely many intervals $[a,b]$ such that
$R(b^2-a^2)/2\le\delta$. For each interval, consider 
$w\in\cL$ satisfying
$a\lambda\le\norm{w}<b\lambda$. Their number is at most
$2^{(A(b)+o(1))n}$, while each of their Gaussian weights is at most
$2^{-(R/2)n(\lambda^2/d^2)a^2}$. Hence their total contribution is
at most
\[
  2^{\left(
    A(b)-Ra^2/2+o(1)
  \right)n}
  \le
  2^{\left(
    \sup_{x\ge1}F_R(x)+\delta+o(1)
  \right)n}
\]
where $F_R(x):=A(x)-\frac R2x^2$. The last upper bound is independent of the interval, and the number of intervals in the decomposition of $[1,X_0]$ is independent of $n$, so their total
contribution satisfies the same bound.

It remains to consider $x_k>X_0$. Here we use only
\cref{thm:lattice-points}. Apart from its uniform
$2^{o(n)}$ factor, the upper bound for the $k$-th shell is
\[
  U_k
  :=
  \beta^n x_{k+1}^n
  2^{-(R/2)n(\lambda^2/d^2)x_k^2}\quad\Longrightarrow \quad
  \frac{U_{k+1}}{U_k}
  \le
  e\,
  2^{-R(\lambda^2/d^2)x_k^2(1+1/(2n))}
\]
since $x_{k+1}=(1+1/n)x_k$.
Choose $X_0$ sufficiently so large that this ratio is at most $1/2$
for all sufficiently large $n$ and simultaneously
the first such shell is bounded by
$2^{(\sup_{x\ge1}F_R(x)+\delta+o(1))n}$, which is possible because
$\log_2\beta+\log_2x-Rx^2/2$ tends to $-\infty$ as
$x\to\infty$. The bounds for the remaining shells form a geometric
series and hence have the same total upper bound.

Letting $\delta$ tend to zero gives
\[
  \rho_{2\sqrt2/\xi_R}(\cL\setminus\{0\})
  \le
  2^{(\sup_{x\ge1}F_R(x)+o(1))n}
  =
  2^{(1-g_2(R)+o(1))n},
\]
where the last equality follows from
\cref{eqn:refined-source-exponent}.
\end{proof}

The following lemma shows a spectral phenomena for the Hessian, or more precisely its translated version $\cA$ in 
\cref{eqn: cApositive}, similar to \cref{lem:shortest-class-hessian} even if $r<t_0=0.23147\ldots$.
We choose the lower bound $9/50=0.18$ only to ensure
\begin{align}\label{eqn:bKLineq}
  \sup_{x\ge\sqrt3}
  \left\{
    B_{\rm KL}(\arccos(1-2/x^2))
    -r(x^2-1)
  \right\}
  <-\frac1{10},
\end{align}
which follows by standard calculus.
Also note that $B_{\rm KL}(\arccos(1-2/x^2))$ is Lipschitz in the interval $[\sqrt 3,7]$. We define $\mu_{r,d} = \frac{
    2\pi^2\xi_r^4\lambda^2
  }{
    \rho_{1/\xi_r}(\cL)
  }
  e^{-\pi\xi_r^2\lambda^2/4}$ for any $r$.


\begin{lemma}\label{lem:parity-rank-one}
Let $\lambda=\lambda_1(\cL)\le d\le(1+1/n)\lambda$.
Let $v\in\cL$ satisfy $\norm{v}=\lambda$, let
$u\in\F_2^n$ satisfy $v\in Bu+2\cL$, and put
$\widehat v=v/\lambda$. If $r\ge 9/50$, then
\[
  \Frob{
    \cA_{r,d}(u)
    -
    \mu_{r,d}\widehat v\widehat v^T
  }
  \le
  \mu_{r,d}2^{-\Omega(n)}.
\]
\end{lemma}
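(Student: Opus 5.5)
Starting from \cref{eqn: cApositive} with $t=r$ and $Bu+2\cL=v+2\cL$, the terms $w=\pm v$ contribute exactly $\mu_{r,d}\widehat v\widehat v^T$. The remainder $\cR:=\cA_{r,d}(u)-\mu_{r,d}\widehat v\widehat v^T$ is therefore the positive semidefinite sum over $w\in v+2\cL\setminus\{\pm v\}$. Since $\Frob{\cR}\le\operatorname{Tr}(\cR)$ for PSD matrices, it suffices to show
\[
  \frac{\operatorname{Tr}(\cR)}{\mu_{r,d}}
  =\frac{1}{2}\sum_{w\in v+2\cL\setminus\{\pm v\}}\frac{\norm{w}^2}{\lambda^2}\,e^{-\pi\xi_r^2(\norm{w}^2-\lambda^2)/4}
  \le 2^{-\Omega(n)}.
\]
Using $\pi\xi_r^2\lambda^2/4=rn\ln 2\cdot\lambda^2/d^2$ and writing $\norm{w}=x\lambda$, each term is $x^2\,2^{-rn(x^2-1)(\lambda^2/d^2)}$. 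Because $\lambda^2/d^2\ge 1-O(1/n)$, this equals $x^2 2^{-rn(x^2-1)+O(x^2)}$. The estimate from \cref{cor:working-scale} is not sufficient here, since $r$ may lie below $t_0$. So we use the parity-class counting of \cref{lem:parity-separation} in its place.

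We split the sum into shells by $x$. \cref{lem:parity-separation} gives $x\ge\sqrt3$ for every remaining $w$.

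\textbf{Bounded range $\sqrt3\le x\le 7$.} Partition $[\sqrt3,7]$ into finitely many intervals $[a,b]$ with $r(b^2-a^2)\le\delta$ and $B_{\rm KL}(\arccos(1-2/b^2))-B_{\rm KL}(\arccos(1-2/a^2))\le\delta$; this uses the stated Lipschitz property. On each interval, \cref{lem:parity-separation} bounds the number of $w$ by $2^{(B_{\rm KL}(\arccos(1-2/b^2))+o(1))n}$, and each weight is at most $49\cdot 2^{-rn(a^2-1)+O(1)}$. Hence each interval contributes at most
\[
  2^{(\sup_{x\ge\sqrt3}\{B_{\rm KL}(\arccos(1-2/x^2))-r(x^2-1)\}+2\delta+o(1))n}\le 2^{(-1/10+2\delta+o(1))n}
\]
by \cref{eqn:bKLineq}. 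Here $r\ge 9/50$ is used: the function $B_{\rm KL}(\arccos(1-2/x^2))-r(x^2-1)$ is decreasing in $r$, so the inequality proven at $r=9/50$ persists. Taking $\delta=1/100$, the finitely many intervals together give $2^{-\Omega(n)}$.

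\textbf{Tail $x>7$.} Use the crude count $N_\cL(x\lambda)\le\beta^{n+o(n)}x^n$ from \cref{thm:lattice-points} over the geometric shells $x_k=(1+1/n)^k$, as in the proof of \cref{lem:parity-source-mass}. The exponent $\log_2\beta+\log_2 x-r(x^2-1)$ with $r\ge0.18$ is already $\le 0.41+2.81-8.64<0$ at $x=7$ and decreases afterwards. The factor $x^2$ and the successive shell ratios, at most $1/2$, make the tail a geometric series bounded by $2^{-\Omega(n)}$.

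The main obstacle is the bounded range. The uniformity of the $o(1)$ in \cref{thm:functional-kl} must be made compatible with the finite interval discretization, and the correction $\lambda^2/d^2$ must be controlled. It contributes only $2^{O(x^2)}=2^{O(1)}$ for $x\le7$, so it is harmless there. Beyond that, the argument depends on the calculus fact \cref{eqn:bKLineq}, which we take as given.
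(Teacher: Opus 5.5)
Your proposal is correct and follows essentially the same route as the paper. You bound the positive semidefinite remainder by its trace, control the range $\sqrt3\le\norm{w}/\lambda\le7$ with the parity-class count of \cref{lem:parity-separation} and \cref{eqn:bKLineq}, and handle $\norm{w}/\lambda>7$ with the crude count of \cref{thm:lattice-points} and a geometric shell sum. The only difference is cosmetic: you cut $[\sqrt3,7]$ into a fixed finite family of intervals with $\delta$-slack (as in the proof of \cref{lem:parity-source-mass}), whereas the paper uses $O(n)$ shells $x_k=\sqrt3(1+1/n)^k$ with a Lipschitz transfer; your exponent $-1/10+2\delta$ is exactly what \cref{eqn:bKLineq} supports.
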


\begin{proof}
The terms $w=\pm v$ in \cref{eqn: cApositive} give
$\mu_{r,d}\widehat v\widehat v^T$. The sum of the other terms is
positive semidefinite, so its Frobenius norm is at most its trace.
Consequently,
\[
  \frac{
    \Frob{
      \cA_{r,d}(u)-\mu_{r,d}\widehat v\widehat v^T
    }
  }{\mu_{r,d}}
  \le
  \frac12
  \sum_{\substack{w\in v+2\cL\\w\ne\pm v}}
  \left(\frac{\norm{w}}\lambda\right)^2
  2^{-rn(\lambda^2/d^2)(({\norm{w}}/\lambda)^2-1)}.
\]

Put $x_k:=\sqrt3(1+1/n)^k$ and let
$  S_k
  :=
  \left\{
    w\in v+2\cL\setminus\{\pm v\}:
    x_k\le \frac{\norm{w}}\lambda<x_{k+1}
  \right\}.$
  Then
\[
  \frac12
  \sum_{\substack{w\in v+2\cL\\w\ne\pm v}}
  \left(\frac{\norm{w}}\lambda\right)^2
  2^{-rn(\lambda^2/d^2)(({\norm{w}}/\lambda)^2-1)}
  \le
  \frac12
  \sum_{k\ge0}
  |S_k|x_{k+1}^2
  2^{-rn(\lambda^2/d^2)(x_k^2-1)}.
\]
We will show that the summations for $x_k<7$ and $x_k\ge7$ both are bounded above by $2^{-\Omega(n)}.$

First, if $x_k<7$, \cref{lem:parity-separation} gives the bound 
\[
\frac 1n\log_2 \left(|S_k|x_{k+1}^2
  2^{-rn(\lambda^2/d^2)(x_k^2-1)}\right)
  \le 
      (B_{\rm KL}(\arccos(1-2/x_{k+1}^2))
      +o(1)) -
      \frac{r\lambda^2 (x_k^2-1)}{d^2}
\]
and using the fact that $B_{\rm KL}(\arccos(1-2/x^2))$ is Lipschitz, $x_{k+1}=x_k + x_k/n = x_k+O(1/n)$ for $x_k<7$, and $d/\lambda=1+O(1/n)$ gives an upper bound
\[
  B_{\rm KL}(\arccos(1-2/x_k^2))
  -
  r(x_k^2-1)
  +
  o(1)
  \le
  -\frac17+o(1)
\]
where we use 
\cref{eqn:bKLineq},
proving $\frac12
  |S_k|x_{k+1}^2
  2^{-rn(\lambda^2/d^2)(x_k^2-1)}
  \le
  2^{(-1/7+o(1))n}.$
  Since there are at most $O(n)$ $k$'s such that $x_k<7$, it gives the desire bound for the first summand over $x_k<7$.

Now consider $x_k\ge7$. Since
$d\le(1+1/n)\lambda$, $r\lambda^2/d^2 \ge 1/6$ for sufficiently large $n.$
\cref{thm:lattice-points} on $|S_k|$ gives
\[
\frac 1n\log_2 \left(|S_k|
  2^{-rn(\lambda^2/d^2)(x_k^2-1)}\right)
  \le
  \log_2\beta+\log_2x_{k+1}-\frac{ x_k^2-1}{6}+o(1)
\]
Since $x_{k+1}=(1+1/n)x_k\le2x_k$, the exponent is at most
\[
  \log_2\beta+\log_2(2x_k)
  -
  \frac16(x_k^2-1)
  +
  o(1) \le -\frac{x_k^2}{14} +o(1)
\]
which holds for $x_k\ge 7.$
Applying this inequality with $x=x_k$ proves
\[
  \frac12
  |S_k|x_{k+1}^2
  2^{-rn(\lambda^2/d^2)(x_k^2-1)}
  \le
  \frac{x_{k+1}^2}{2}
  2^{-nx_k^2/14+o(n)}.
\]
This term (ignoring the uniform $o(1)$ factor) decays (super-)exponentially fast,\footnote{In particular, for $x_k\ge 7$, $(x_{k+2}^2 2^{-nx_{k+1}^2/14})/(x_{k+1}^2 2^{-nx_k^2/14})=(1+1/n)^2 2^{-nx_{k+1}^2 (2/n-1/n^2) /14} \le 1/2$ holds.} thus the asymptotic bound of the summation of the above term for $k$ such that $x_k \ge 7$ becomes $2^{-\Omega(n)}$.
\end{proof}

\subsection{Sampling (wider) discrete Gaussian from an affine lattice coset}
\label{sec:source-coset}
This section shows that sampling from the discrete Gaussian $D_{\Lambda_j , \xi_R}$ for a large $\xi_R$ is more efficient than the one in \cref{lem:affine-coset-sampling} by removing the multiplicative factor $2^\chi$ in the time complexity.

For the fixed $d$ and $R$, recall that
$\xi_R=\sqrt{4nR\ln2/(\pi d^2)}$.
For $P\in\GL_n(\F_2)$ and $j\in\F_2^h$, recall from
\cref{sec:linear-fibers} that
$  \Lambda_j
  :=
  \{X\in\cL^*:J_P(X)=j\}.$
In particular, $\Lambda_0$ is a sublattice of $\cL^*$ and
$\Lambda_j$ is a coset of $\Lambda_0$.
The following lemma shows some Gaussian mass bound from \cref{cor:working-scale} works for $\Lambda_0$ in expectation.


\begin{lemma}\label{lem:refined-source-coset-smoothing}
Let $\lambda=\lambda_1(\cL)$ and
$\lambda\le d\le(1+1/n)\lambda$. If $R>t_0$ and
$\chi<g_2(R)$, then
$  \rho_{\sqrt2/\xi_R}(\Lambda_0^*\setminus\{0\})
  \le
  2^{-\Omega(n)}$
with probability $1-2^{-\Omega(n)}$ over uniform
$P\in\GL_n(\F_2)$. Consequently,
$\xi_R>\sqrt2\,\eta_{1/2}(\Lambda_0)$.
\end{lemma}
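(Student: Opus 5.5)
We first identify the dual $\Lambda_0^*$ explicitly. Since $\Lambda_0=\{X\in\cL^*: J_P(X)=0\}$, with $J_P(X)$ the first $h$ coordinates of $P^{-T}(B^TX\bmod 2)$, it is the set of $X\in\cL^*$ such that $\kappa^T(B^TX\bmod 2)=0$ for all $\kappa\in\cK_P=P^{-1}(\F_2^h\times\{0\})$. To see this, note that $(P^{-1}(\alpha,0))^T(B^TX)=\alpha^TJ_P(X)$, as in the proof of \cref{lem:target-coset-flatness}. Equivalently, $\ip{X}{B\kappa/2}\in\Z$ for all $\kappa\in\cK_P$. Hence
\[
  \Lambda_0^*=\cL+\{B\kappa/2:\kappa\in\cK_P\}=\bigcup_{\kappa\in\cK_P}\left(\cL+B\kappa/2\right).
\]
This set is a superlattice of $\cL$ of index $2^h$ inside $\frac12\cL$. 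Therefore
\[
  \rho_{\sqrt2/\xi_R}(\Lambda_0^*\setminus\{0\})=\rho_{\sqrt2/\xi_R}(\cL\setminus\{0\})+\sum_{\kappa\in\cK_P\setminus\{0\}}\rho_{\sqrt2/\xi_R}(\cL+B\kappa/2).
\]

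The first term is at most $2^{(-g(R)+o(1))n}$ by \cref{eqn:simplerho_g} with $c=0$. This is $2^{-\Omega(n)}$ because $R>t_0$ gives $g(R)>0$.

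For the second term we take the expectation over $P$. Each nonzero $\kappa\in\F_2^n$ lies in the uniformly random $h$-dimensional subspace $\cK_P$ with probability $(2^h-1)/(2^n-1)\le 2^{(\chi-1+o(1))n}$. The cosets $\cL+B\kappa/2$ for $\kappa\neq0$ partition $\frac12\cL\setminus\cL$. Rescaling, we get $\rho_{\sqrt2/\xi_R}(\tfrac12\cL\setminus\cL)\le\rho_{2\sqrt2/\xi_R}(\cL\setminus\{0\})$, and this is at most $2^{(1-g_2(R)+o(1))n}$ by \cref{lem:parity-source-mass}. Hence the expectation of the second term is at most $2^{(\chi-g_2(R)+o(1))n}$. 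The exponent is a negative constant by the hypothesis $\chi<g_2(R)$.

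Markov's inequality (\cref{lem:Markov}), applied at threshold $2^{(\chi-g_2(R))n/2}$, then shows that the second term is $2^{-\Omega(n)}$ except with probability $2^{-\Omega(n)}$ over $P$. Combining the two terms gives $\rho_{\sqrt2/\xi_R}(\Lambda_0^*\setminus\{0\})\le 2^{-\Omega(n)}<1/2$ for large $n$.

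Note that $\rho_{\sqrt2/\xi_R}=\rho_{1/(\xi_R/\sqrt2)}$. By the definition and monotonicity of the smoothing parameter applied to the lattice $\Lambda_0$ (whose dual is $\Lambda_0^*$), we get $\eta_{1/2}(\Lambda_0)<\xi_R/\sqrt2$, which is the consequence.

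The main point needing care is the duality identification $\Lambda_0^*=\cL+B\cK_P/2$, including the check that the index and the mod-2 bookkeeping with $P^{-T}$ match. The remaining steps simply reuse the estimates from \cref{lem:target-coset-flatness} and \cref{lem:parity-source-mass}.
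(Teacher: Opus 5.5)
Your proposal is correct and follows the paper's proof essentially step for step. It uses the same ingredients: the identification $\Lambda_0^*=\cL+B\cK_P/2$, the split into $\rho_{\sqrt2/\xi_R}(\cL\setminus\{0\})$ plus the cosets indexed by nonzero $\kappa\in\cK_P$, the inclusion probability $(2^h-1)/(2^n-1)$, \cref{eqn:simplerho_g} and \cref{lem:parity-source-mass} for the two terms, and Markov's inequality. The differences are only cosmetic: you apply Markov to the random part alone rather than to the whole sum, and you bound $\rho_{2\sqrt2/\xi_R}(\cL)-\rho_{\sqrt2/\xi_R}(\cL)$ by $\rho_{2\sqrt2/\xi_R}(\cL\setminus\{0\})$, which is harmless.
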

\begin{proof}
Let
$\cK_P=P^{-1}(\F_2^h\times\{0\})$.
We first observe that $\Lambda_0^*=\cL+B\cK_P/2$. 
For every nonzero $\kappa\in\F_2^n$,
it is easy to see that
$\Prob_P[\kappa\in\cK_P]=(2^h-1)/(2^n-1)$.
Since the cosets $\cL+B\kappa/2$ for $\kappa\in\F_2^n$
partition $\cL/2$, we have
\begin{align*}
  \E_P\rho_{\sqrt2/\xi_R}
  (\Lambda_0^*\setminus\{0\})
  &
  \le
  \rho_{\sqrt2/\xi_R}(\cL\setminus\{0\})
  +
  2^{h-n+o(n)}
  \left(
    \rho_{2\sqrt2/\xi_R}(\cL)
    -
    \rho_{\sqrt2/\xi_R}(\cL)
  \right)
  \\
  &
  \le
  2^{(-g(R)+o(1))n}
  +
  2^{(\chi-g_2(R)+o(1))n}.
\end{align*}
The last inequality follows from \cref{eqn:simplerho_g} with
$c=0$ and \cref{lem:parity-source-mass}. Since $\chi<g_2(R)$ and $R>t_0$, Markov's inequality in
\cref{lem:Markov} gives
$\rho_{\sqrt2/\xi_R}(\Lambda_0^*\setminus\{0\})
=2^{-\Omega(n)}<1/2$ except with probability $2^{-\Omega(n)}$.
The last claim follows from the definition and monotonicity of the
smoothing parameter.
\end{proof}



Given this lemma, we can apply \cref{thm:dgs-sampling} to sample the discrete Gaussian on $\Lambda_j$. More precisely, we define $\Gamma:=
  \Lambda_0+\Z x =\Lambda_0\sqcup\Lambda_j$ for $j\neq 0$ and $x\in\Lambda_j$, and run the algorithm from \cref{thm:dgs-sampling} on $\Gamma$.
  Collecting the samples in $\Lambda_j$ works well. This gives the following lemma.

\begin{lemma}\label{lem:source-coset-sampling}
Let $\lambda=\lambda_1(\cL)$,
$\lambda\le d\le(1+1/n)\lambda$, $\chi<g_2(R)$, $R>t_0$,
and $M=2^{O(n)}$.
With probability $1-2^{-\Omega(n)}$ over uniform
$P\in\GL_n(\F_2)$, the following holds for every
$j\in\F_2^h$. There is an algorithm that outputs $M$ vectors or
aborts, and its abort probability is at most $2^{-\Omega(n)}$.
If it does not abort, the joint distribution of its outputs has statistical
distance at most $\exp(-\Omega(n^2))$ from
$D_{\Lambda_j,\xi_R}^M$.
The algorithm takes time
$(M+2^{n/2})2^{o(n)}$ and space $2^{n/2+o(n)}$.
\end{lemma}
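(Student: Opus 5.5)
Fix $P$ in the good event of \cref{lem:refined-source-coset-smoothing}, so that $\xi_R>\sqrt2\,\eta_{1/2}(\Lambda_0)$. If $j=0$, run $\DGS$ from \cref{thm:dgs-sampling} directly on $\Lambda_0$. For $j\neq0$, pick any $x\in\Lambda_j$ and let $\Gamma=\Lambda_0+\Z x$; here $x$ is easy to find since $P^{-T}B^TX\bmod 2$ is surjective, and we may take $X=B^{-T}y$ with $y=P^T(j,0)$. Because $\Lambda_0=\{X\in\cL^*:J_P(X)=0\}$ and $J_P$ is linear mod $2$, we have $2x\in\Lambda_0$, hence $\Gamma=\Lambda_0\sqcup\Lambda_j$ and $\Lambda_0$ has index $2$ in $\Gamma$. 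A basis of $\Gamma$ and of $\Lambda_0$ can be computed in polynomial time from $B$ and $P$ as the preimages of subspaces of $\F_2^n$.

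\emph{Step 1: $\Gamma$ is also smooth at $\xi_R$.} Since $\Lambda_0\subseteq\Gamma$, we have $\Gamma^*\subseteq\Lambda_0^*$, so $\rho_{\sqrt2/\xi_R}(\Gamma^*\setminus\{0\})\le\rho_{\sqrt2/\xi_R}(\Lambda_0^*\setminus\{0\})=2^{-\Omega(n)}<1/2$. Hence $\xi_R>\sqrt2\,\eta_{1/2}(\Gamma)$, and \cref{thm:dgs-sampling} with $\kappa=n^2$ returns exactly $2^{n/2}$ samples, $\exp(-\Omega(n^2))$-close to i.i.d.\ $D_{\Gamma,\xi_R}$, in time $2^{n/2+o(n)}$ and space $2^{n/2+o(n)}$.

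\emph{Step 2: half of the mass lies on each coset.} Poisson summation gives $\rho_{\xi_R}(\Lambda_0+c)=\frac{\xi_R^n}{\det\Lambda_0}\sum_{y\in\Lambda_0^*}\rho_{1/\xi_R}(y)e^{2\pi i\ip{y}{c}}$. Since $\rho_{1/\xi_R}\le\rho_{\sqrt2/\xi_R}$ pointwise, this equals $\frac{\xi_R^n}{\det\Lambda_0}(1\pm2^{-\Omega(n)})$ uniformly in $c$. Therefore $\Prob_{X\sim D_{\Gamma,\xi_R}}[X\in\Lambda_j]=\frac12(1\pm2^{-\Omega(n)})$. Conditioned on $X\in\Lambda_j$, the sample is distributed exactly as $D_{\Lambda_j,\xi_R}$.

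\emph{Step 3: repetition and rejection.} Call $\DGS$ on $\Gamma$ sequentially $\lceil 4M2^{-n/2}\rceil+1$ times, keep the samples lying in $\Lambda_j$ (membership is checked via $J_P$), and output the first $M$ of them. Abort if fewer than $M$ are kept. By \cref{lem:Chernoff}, the number accepted out of $\ge 4M$ ideal samples falls below $M$ with probability $2^{-\Omega(M)}\le2^{-\Omega(n)}$; replacing the ideal samples by the actual $\DGS$ output changes this by at most $\exp(-\Omega(n^2))$ per call, hence by $2^{O(n)}\exp(-\Omega(n^2))$ in total. Conditioned on not aborting, rejection sampling applied to independent $D_{\Gamma,\xi_R}$ samples gives exactly $D_{\Lambda_j,\xi_R}^M$, so the statistical distance is $\exp(-\Omega(n^2))$. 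To keep the space at $2^{n/2+o(n)}$ when $M$ is large, the accepted samples are streamed out rather than stored; reading ``space'' as working space, only one batch of $2^{n/2}$ samples is held at a time. The time is $(M2^{-n/2}+1)\cdot2^{n/2+o(n)}=(M+2^{n/2})2^{o(n)}$.

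The main obstacle is Step 2: we must make sure that smoothing of $\Lambda_0$ rather than of $\Gamma$ is what controls the coset balance. The Poisson sum is over $\Lambda_0^*$, which is exactly the quantity bounded in \cref{lem:refined-source-coset-smoothing}, and the width mismatch between $1/\xi_R$ and $\sqrt2/\xi_R$ only helps. The $P$-dependence is handled once, uniformly for all $j$, because the Poisson bound is uniform in the shift $c$.
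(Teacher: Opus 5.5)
Your proof is correct and follows the paper's argument step for step. You fix $P$ via \cref{lem:refined-source-coset-smoothing}, pass to $\Gamma=\Lambda_0\sqcup\Lambda_j$ (smooth at $\xi_R$ because $\Gamma^*\subseteq\Lambda_0^*$), show that each coset carries probability $\frac12(1\pm2^{-\Omega(n)})$, and then reject. The only difference is cosmetic: you get the coset balance by applying Poisson summation over $\Lambda_0^*$ to each coset directly, whereas the paper evaluates the character $e^{2\pi i\ip{X}{y}}$ for some $y\in\Lambda_0^*\setminus\Gamma^*$, which is the same computation in dual form.

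One small slip: $2^{-\Omega(M)}$ is not $2^{-\Omega(n)}$ when $M<n$. This is harmless, because your extra $\DGS$ call already yields at least $4M+2^{n/2}$ candidates, so the Chernoff failure probability is $2^{-\Omega(M+2^{n/2})}$. The paper instead draws $4(M+n)$ candidates for the same reason.
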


\begin{proof}
Note that $\rho_{\sqrt2/\xi_R}
  \left(
    \Lambda_0^*\setminus\{0\}
  \right)
  \le
  2^{-\Omega(n)}$ and $\xi_R>\sqrt2\,\eta_{1/2}(\Lambda_0)$ holds except with probability
$2^{-\Omega(n)}$ over $P$ because of \cref{lem:refined-source-coset-smoothing}.
Fix any such $P$.

For $j=0$, we can directly use \cref{thm:dgs-sampling} $\lceil M/2^{n/2}\rceil$ times to $\Lambda_0$ with
parameter $\xi_R$ and $\kappa=n^2$.
Suppose that $j\ne0$. 
Solve the linear equations $J_P(x)=j$ over $\F_2$ and lift a
solution to obtain $x\in\cL^*$ satisfying $x\in\Lambda_j$. Let
$  \Gamma
  :=
  \Lambda_0+\Z x
  =
  \Lambda_0\sqcup\Lambda_j.$
Since $\Gamma^*\subseteq\Lambda_0^*$, we have $\rho_{\sqrt2/\xi_R}
  \left(
    \Gamma^*\setminus\{0\}
  \right)
  \le
  2^{-\Omega(n)},$
and hence
$\xi_R>\sqrt2\,\eta_{1/2}(\Gamma)$ for large $n$.

Choose arbitrary $y\in\Lambda_0^*\setminus\Gamma^*$.
Poisson summation formula gives
\[
  \left|
    \E_{X\sim D_{\Gamma,\xi_R}}
    e^{2\pi i\ip{X}{y}}
  \right|
  =
  \frac{
    \rho_{1/\xi_R}(\Gamma^*-y)
  }{
    \rho_{1/\xi_R}(\Gamma^*)
  }
  \le
  \rho_{\sqrt2/\xi_R}
  \left(
    \Lambda_0^*\setminus\{0\}
  \right)
  =
  2^{-\Omega(n)}.
\]
The character $e^{2\pi i\ip{X}{y}}$ equals $1$ for every
$X\in\Lambda_0$. Moreover, $2x\in\Lambda_0$, while
$y\notin\Gamma^*$, so $\ip{x}{y}\in\Z+1/2$. Hence the character
equals $-1$ for every $X\in x+\Lambda_0=\Lambda_j$. The above
inequality gives $|1-2p|\le2^{-\Omega(n)}$, where
$p=\Prob_{X\sim D_{\Gamma,\xi_R}}[X\in\Lambda_j]$. It follows that
$p=1/2+O(2^{-\Omega(n)})$.

  We obtain the following sampling algorithm: Use \cref{thm:dgs-sampling} on $\Gamma$ to generate
$4(M+n)$ candidates, with parameter $\xi_R$ and $\kappa=n^2$.
Collect the first $M$ candidates in $\Lambda_j$ and abort if fewer
than $M$ are obtained. The Chernoff bound in
\cref{lem:Chernoff} gives abort probability $2^{-\Omega(n)}$.

For exact samples from $D_{\Gamma,\xi_R}$, conditioning on
$X\in\Lambda_j$ gives exactly $D_{\Lambda_j,\xi_R}$. There are at
most $2^{O(n)}$ calls to
\cref{thm:dgs-sampling}, so their total statistical distance is
still $\exp(-\Omega(n^2))$. Applying the collection algorithm does
not increase statistical distance, and conditioning on its
non-abort event changes it by at most a factor
$1+2^{-\Omega(n)}$.
Finally, the calls can be made sequentially, giving time
$(M+2^{n/2})2^{o(n)}$ and space $2^{n/2+o(n)}$.
\end{proof}

\subsection{Estimating the Hessian by importance sampling}
\label{sec:importance-estimation}

Importance sampling estimates an expectation under a target
distribution using samples from another distribution. Here we sample
from $D_{\Lambda_j,\xi_R}$ for large $\xi_R$ and estimate the
Hessian defined using $D_{\Lambda_j,\xi_r}$ for the smaller target
$\xi_r$. For $X\in\Lambda_j$, let
$w(X):=\rho_{\xi_r}(X)/\rho_{\xi_R}(X)$ and
$\overline w_j:=\E_{X\sim D_{\Lambda_j,\xi_R}}[w(X)]
=\rho_{\xi_r}(\Lambda_j)/\rho_{\xi_R}(\Lambda_j)$.
For every matrix-valued function $H$ for which the expectations
exist,
\begin{align}
    \label{eqn: change_of_weight}
  \E_{X\sim D_{\Lambda_j,\xi_R}}[w(X)H(X)]
  =
  \sum_{X\in \Lambda_j} \frac{\rho_{\xi_R}(X)}{\rho_{\xi_R}(\Lambda_j)}  \frac{\rho_{\xi_r}(X)}{\rho_{\xi_R}(X)} H(X)
  =
  \overline w_j
  \E_{X\sim D_{\Lambda_j,\xi_r}}[H(X)].
\end{align}
Thus the weighted expectation is
$\overline w_j\cG_{r,d,j}(\theta)$ when
$H(X)=-4\pi^2XX^T(-1)^{\theta\cdot V_P(X)}$.


Let $M$ be the number of sample to be determined later.
Consider $n$ independent families
$\{X_{a,1},\ldots,X_{a,M}\}$ of samples from
$D_{\Lambda_j,\xi_R}$ for $a\in [n]$.
 Define the importance-sampling Hessian estimator
\begin{align}\label{eqn:importance-estimator}
  \widehat{\cN}_{a,j}(\theta)
  &:=
  -\frac{4\pi^2}{M}
  \sum_{i=1}^M
  w(X_{a,i})
  X_{a,i}X_{a,i}^T
  (-1)^{\theta\cdot V_P(X_{a,i})}
  \ind_{\{\norm{X_{a,i}}\le C_1\xi_R\sqrt n\}}.
\end{align}
For $p,q\in[n]$, define the $(p,q)$-entry of
$\widehat{\cN}_j(\theta)$ to be the median (i.e., 
$(\lfloor n/2\rfloor+1)$-st smallest) among the corresponding
entries of
$\widehat{\cN}_{1,j}(\theta),\ldots,
\widehat{\cN}_{n,j}(\theta)$. We will show the median estimator $\widehat{\cN}_j(\theta)$ approximates $\overline w_j\cG_{r,d,j}(\theta)$ well for uniform random $P$ and $j$.

We need some notations.
Define $s>0$ by $1/s=2/r-1/R$. Then
\begin{align}\label{eqn:s_identities}
  \frac1{\xi_s^2}
  =
  \frac2{\xi_r^2}-\frac1{\xi_R^2},
  \qquad
  \frac{\rho_{\xi_r}(X)^2}{\rho_{\xi_R}(X)}
  =
  \rho_{\xi_s}(X),
  \qquad
  s=\frac{rR}{2R-r}<r<R.
\end{align}

We also recall $\mu$ from \cref{eqn: mudef}, and choose the number of samples $M$ using $\iota$ defined below:
\begin{align}\label{eqn:importance-loss}
\mu_{r,d}
  =
  \frac{2\pi^2\xi_r^4\lambda^2}
       {\rho_{1/\xi_r}(\cL)}
  e^{-\pi\xi_r^2\lambda^2/4},
  \qquad
  M=\lceil n^6 2^{(\iota(r,R)+2r)n}\rceil\quad\text{for} \quad
  \iota(r,R)
  &:=
  \frac12\log_2\frac{R^2}{r(2R-r)}
\end{align}
where $\mu_{r,d}=\Theta(\xi_r^4\lambda^2 2^{-rn})$ if $\lambda\le d\le(1+1/n)\lambda$ and $s>t_0/2$.

\begin{lemma}\label{lem:importance-weight-bounds}
Let $\lambda=\lambda_1(\cL)$ and
$\lambda\le d\le(1+1/n)\lambda$.
If $s>t_0/2$ and $\chi<1/2+g(r)$, then,
the following hold
with probability
$1-O(n^{-1})$ over uniform random $P\in\GL_n(\F_2)$ and
$j\in\F_2^h$:
\[
  \zeta_{r,j}\ge\frac12,
  \qquad
  \overline w_j
  =
  \left(\frac rR\right)^{n/2}
  \left(
    1+O(2^{-\Omega(n)})
  \right),
  \qquad
  \frac{
    \E_{X\sim D_{\Lambda_j,\xi_R}}[w(X)^2]
  }{
    \overline w_j^2
  }
  \le
  2n2^{\iota(r,R)n}.
\]
\end{lemma}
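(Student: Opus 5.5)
The plan is to write every quantity in the statement as a ratio of Gaussian masses of the coset $\Lambda_j$, and to express each such mass through the normalized coset probability $\zeta_{t,j}$ and the dual mass $\rho_{1/\xi_t}(\cL)$. By the definition of $\zeta_{t,j}$ and the Poisson summation formula $\rho_{\xi}(\cL^*)=\xi^n\det(\cL)\,\rho_{1/\xi}(\cL)$, for every $t>0$ we have
\[
  \rho_{\xi_t}(\Lambda_j)
  =2^{-h}\zeta_{t,j}\,\rho_{\xi_t}(\cL^*)
  =2^{-h}\zeta_{t,j}\,\xi_t^n\det(\cL)\,\rho_{1/\xi_t}(\cL).
\]
We also note that $s<r<R$ and $s>t_0/2$. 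Hence \cref{cor:working-scale} gives $\rho_{1/\xi_t}(\cL)=1+2^{-\Omega(n)}$ for each $t\in\{s,r,R\}$. In addition, $\xi_t\propto\sqrt t$ for fixed $d$.

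\textbf{First two claims.} Since $r>s>t_0/2$ and $\chi<1/2+g(r)\le1/2+g(R)$ (as $g$ is increasing), \cref{lem:target-coset-flatness} applies at both $t=r$ and $t=R$. Outside an event of probability $2^{-\Omega(n)}$ over $P$, it gives $\zeta_{r,j},\zeta_{R,j}=1+O(2^{-\Omega(n)})$ for all $j$; in particular $\zeta_{r,j}\ge1/2$. Then
\[
  \overline w_j=\frac{\rho_{\xi_r}(\Lambda_j)}{\rho_{\xi_R}(\Lambda_j)}
  =\left(\frac{\xi_r}{\xi_R}\right)^n\frac{\rho_{1/\xi_r}(\cL)}{\rho_{1/\xi_R}(\cL)}\cdot\frac{\zeta_{r,j}}{\zeta_{R,j}}
  =\left(\frac rR\right)^{n/2}\left(1+O(2^{-\Omega(n)})\right).
\]

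\textbf{Second moment.} By \cref{eqn:s_identities}, $w(X)^2\rho_{\xi_R}(X)=\rho_{\xi_s}(X)$, so $\E_{X\sim D_{\Lambda_j,\xi_R}}[w(X)^2]=\rho_{\xi_s}(\Lambda_j)/\rho_{\xi_R}(\Lambda_j)$. The same decomposition then gives
\[
  \frac{\E[w^2]}{\overline w_j^2}
  =\left(\frac{\xi_s\xi_R}{\xi_r^2}\right)^n\cdot\frac{\rho_{1/\xi_s}(\cL)\rho_{1/\xi_R}(\cL)}{\rho_{1/\xi_r}(\cL)^2}\cdot\frac{\zeta_{s,j}\zeta_{R,j}}{\zeta_{r,j}^2}.
\]
Here $(\xi_s\xi_R/\xi_r^2)^n=(sR/r^2)^{n/2}=\bigl(R^2/(r(2R-r))\bigr)^{n/2}=2^{\iota(r,R)n}$. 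The middle factor is $1+2^{-\Omega(n)}$. The $\zeta_{R,j}$ and $\zeta_{r,j}$ factors are $1+O(2^{-\Omega(n)})$ from the previous step.

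\textbf{Main obstacle.} The remaining factor $\zeta_{s,j}$ is the difficulty. Since $s<r$, the condition $\chi<1/2+g(s)$ may fail, so \cref{lem:target-coset-flatness} is unavailable at width $s$. Instead I would use only averaging over $j$. We have $\zeta_{s,j}\ge0$, and $\E_j[\zeta_{s,j}]=\sum_j\Prob_{X\sim D_{\cL^*,\xi_s}}[J_P(X)=j]=1$ exactly, for every $P$. Markov's inequality (\cref{lem:Markov}) then gives $\Prob_j[\zeta_{s,j}>n]\le1/n$. On the complementary event, the ratio is at most $n\,2^{\iota(r,R)n}(1+O(2^{-\Omega(n)}))\le2n\,2^{\iota(r,R)n}$ for large $n$.

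A union bound of this $O(n^{-1})$ event with the $2^{-\Omega(n)}$ failure events over $P$ yields the claimed probability $1-O(n^{-1})$. This also explains the polynomial factor $2n$ in the statement.
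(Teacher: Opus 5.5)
Your proposal is correct and follows the paper's proof essentially step for step. It uses flatness from \cref{lem:target-coset-flatness} at widths $r$ and $R$, reduces every coset mass to $2^{-h}\zeta_{t,j}\rho_{\xi_t}(\cL^*)$ via Poisson summation, uses the identity $\E[w(X)^2]=\rho_{\xi_s}(\Lambda_j)/\rho_{\xi_R}(\Lambda_j)$, and controls the factor $\zeta_{s,j}$, for which flatness is unavailable, by Markov over $j$. The one cosmetic difference is that you apply Markov to $\zeta_{s,j}$ alone, whose mean over $j$ is exactly $1$, and absorb $\zeta_{R,j}/\zeta_{r,j}^2$ deterministically, whereas the paper applies Markov to the product $\zeta_{s,j}\zeta_{R,j}/\zeta_{r,j}^2$; the two are equivalent.
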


\begin{proof}
Since $s>t_0/2$ and $s<r<R$,
\cref{cor:working-scale} gives
$\rho_{1/\xi_s}(\cL),
\rho_{1/\xi_r}(\cL),
\rho_{1/\xi_R}(\cL)
=1+2^{-\Omega(n)}$.
Since $g$ is increasing,
\cref{lem:target-coset-flatness} gives
$\zeta_{r,j},\zeta_{R,j}
=1+O(2^{-\Omega(n)})$ simultaneously for every $j$, except with
probability $2^{-\Omega(n)}$ over $P$. In particular,
$\zeta_{r,j}\ge1/2$ for all sufficiently large $n$.

By the definitions of $\overline w_j$ and $\zeta_{t,j}$, and by applying the Poisson summation formula to $\cL^*$, we have
\[
  \overline w_j
  =
  \frac{\zeta_{r,j}}{\zeta_{R,j}}
  \frac{\rho_{\xi_r}(\cL^*)}{\rho_{\xi_R}(\cL^*)}
  =
  \left(\frac rR\right)^{n/2}
  \frac{\rho_{1/\xi_r}(\cL)}{\rho_{1/\xi_R}(\cL)}
  \frac{\zeta_{r,j}}{\zeta_{R,j}}=\left(\frac rR\right)^{n/2} \left(
    1+O(2^{-\Omega(n)})
  \right)
\]
where we use the approximations of $\rho$ and $\zeta$ above.
This proves the estimate of $\overline w_j$ in the
statement.

Fix a choice of $P$ satisfying the preceding conclusions. Since
$2^{-h}\sum_j\zeta_{s,j}=1$ and
$\zeta_{R,j}/\zeta_{r,j}^2
=1+2^{-\Omega(n)}$ uniformly in $j$, the expectation over
uniform $j$ of
$\zeta_{s,j}\zeta_{R,j}/\zeta_{r,j}^2$ is
$1+2^{-\Omega(n)}$. Markov's inequality gives
\[
  \frac{
    \zeta_{s,j}\zeta_{R,j}
  }{
    \zeta_{r,j}^2
  }
  \le n
\]
except with probability $O(n^{-1})$ over uniform $j$.

The definitions of $w(X)$ and
$D_{\Lambda_j,\xi_R}$, together with
\cref{eqn:s_identities}, give
\[
  \E_{X\sim D_{\Lambda_j,\xi_R}}[w(X)^2]
  =
  \frac1{\rho_{\xi_R}(\Lambda_j)}
  \sum_{X\in\Lambda_j}
  \frac{\rho_{\xi_r}(X)^2}{\rho_{\xi_R}(X)}
  =
  \frac{\rho_{\xi_s}(\Lambda_j)}
       {\rho_{\xi_R}(\Lambda_j)}.
\]

Also,
$\overline w_j
=\rho_{\xi_r}(\Lambda_j)/\rho_{\xi_R}(\Lambda_j)$.
Consequently, we can compute ${\E_{X\sim D_{\Lambda_j,\xi_R}}[w(X)^2]}/
       {\overline w_j^2}$ as follows:
\[
  \frac{
    \rho_{\xi_s}(\Lambda_j)
    \rho_{\xi_R}(\Lambda_j)
  }{
    \rho_{\xi_r}(\Lambda_j)^2
  }
  =
  \frac{
    \rho_{\xi_s}(\cL^*)
    \rho_{\xi_R}(\cL^*)
  }{
    \rho_{\xi_r}(\cL^*)^2
  }
  \frac{
    \zeta_{s,j}\zeta_{R,j}
  }{
    \zeta_{r,j}^2
  }
  =
  \left(
    \frac{sR}{r^2}
  \right)^{n/2}
  \frac{
    \rho_{1/\xi_s}(\cL)
    \rho_{1/\xi_R}(\cL)
  }{
    \rho_{1/\xi_r}(\cL)^2
  }
  \frac{
    \zeta_{s,j}\zeta_{R,j}
  }{
    \zeta_{r,j}^2
  }.
\]
Here the first equality follows by writing
$\rho_{\xi_t}(\Lambda_j)
=2^{-h}\zeta_{t,j}\rho_{\xi_t}(\cL^*)$, and the second equality is obtained by
applying Poisson summation to $\cL^*$. 
Plugging $s=rR/(2R-r)$ and 
$R/\sqrt{r(2R-r)}=2^{\iota(r,R)}$ gives the multiplicative factor $2^{\iota(r,R)n}$.
The last two factors are at most
$2$ and $n$, respectively, for all sufficiently large $n$.
This proves the result.
\end{proof}

\begin{lemma}\label{lem:importance-coset-estimation}
Let $\lambda=\lambda_1(\cL)$ and
$\lambda\le d\le(1+1/n)\lambda$. If $s>t_0/2$ and
$\chi<1/2+g(r)$, then
it holds simultaneously for every $\theta\in\F_2^\ell$:
\[
  \max_{\theta\in\F_2^\ell}
  \opnorm{
    \widehat{\cN}_j(\theta)
    -
    \overline w_j\cG_{r,d,j}(\theta)
  }
  =
  O\left(
    \frac{\overline w_j\mu_{r,d}}n
  \right)
\]
with probability $1-O(n^{-1})$ over
$P,j$, and the choices of $X_{a,i}$.
\end{lemma}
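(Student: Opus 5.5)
We condition on the good event of \cref{lem:importance-weight-bounds}. It holds with probability $1-O(n^{-1})$ over $P,j$ and gives $\zeta_{r,j}\ge1/2$, the value of $\overline w_j$, and $\E[w^2]\le 2n2^{\iota n}\overline w_j^2$. With $P$ and $j$ fixed, the plan is a bias bound plus a variance bound for each $\widehat{\cN}_{a,j}(\theta)$, followed by a median-of-$n$ amplification and a union bound over $\theta$ and entries. The median is needed because the weights $w(X)$ are unbounded relative to $\overline w_j$: Hoeffding is not available, only Chebyshev, and the median converts constant-probability Chebyshev guarantees into $2^{-\Omega(n)}$ failure probability via \cref{lem:majority}.

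\textbf{Bias.} By \cref{eqn: change_of_weight}, the expectation of $\widehat{\cN}_{a,j}(\theta)$ equals $\overline w_j\cG_{r,d,j}(\theta)$ minus the truncated part
\[
4\pi^2\E_{D_{\Lambda_j,\xi_R}}\bigl[wXX^T(-1)^{\theta\cdot V_P(X)}\ind_{\{\norm{X}>C_1\xi_R\sqrt n\}}\bigr]
=4\pi^2\overline w_j\E_{D_{\Lambda_j,\xi_r}}\bigl[\cdots\ind_{\{\norm{X}>C_1\xi_R\sqrt n\}}\bigr].
\]
Since $\xi_r<\xi_R$, this indicator is dominated by $\ind_{\{\norm{X}>C_1\xi_r\sqrt n\}}$. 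Passing from $D_{\Lambda_j,\xi_r}$ to $D_{\cL^*,\xi_r}$ costs a factor $2^h/\zeta_{r,j}\le 2^{h+1}$, exactly as in \cref{lem:affine-coset-estimation}. \cref{lem:dgs-tail} then bounds the operator norm of the truncated part by $\overline w_j\xi_r^2 n 2^{h+1-n}$. This is $O(\overline w_j\mu_{r,d}/n)$ because $\ell\ge n/2>rn$, using $\mu_{r,d}=\Theta(\xi_r^4\lambda^2 2^{-rn})$ and $\xi_r^2\lambda^2=\Theta(n)$.

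\textbf{Variance, entrywise.} Fix $\theta$ and an entry $(p,q)$. Each summand has second moment at most $16\pi^4\E_{D_{\Lambda_j,\xi_R}}[w(X)^2\norm{X}^4\ind_{\{\norm{X}\le C_1\xi_R\sqrt n\}}]\le 16\pi^4C_1^4\xi_R^4n^2\,\E[w^2]$. By \cref{lem:importance-weight-bounds}, this is at most $O(\xi_r^4n^3 2^{\iota n}\overline w_j^2)$, using $\xi_R/\xi_r=\sqrt{R/r}=O(1)$. The variance of the entry of $\widehat{\cN}_{a,j}(\theta)$ is therefore $O(\xi_r^4n^3 2^{\iota n}\overline w_j^2/M)$. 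With $M=\lceil n^6 2^{(\iota+2r)n}\rceil$ this is $O(\xi_r^4 n^{-3}2^{-2rn}\overline w_j^2)$.

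Since $\mu_{r,d}^2=\Theta(\xi_r^4\cdot\xi_r^4\lambda^4 2^{-2rn})=\Theta(\xi_r^4n^2 2^{-2rn})$, the variance is $O(\overline w_j^2\mu_{r,d}^2/n^5)$. Chebyshev (\cref{lem:Cheb}) then shows that each entry deviates from its mean by more than $\overline w_j\mu_{r,d}/n^2$ with probability at most $O(1/n)<1/4$.

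\textbf{Median and union bound.} For fixed $(\theta,p,q)$ the $n$ families are independent. The median entry fails only if at least half of the $n$ copies fail, which by \cref{lem:majority} has probability at most $2^n(C/n)^{n/2}=2^{-\Omega(n\log n)}$. A union bound over the $2^\ell n^2\le 2^n n^2$ triples $(\theta,p,q)$ keeps the total failure probability $2^{-\Omega(n\log n)}$. On the complementary event, every entry of $\widehat{\cN}_j(\theta)-\E\widehat{\cN}_{1,j}(\theta)$ is at most $\overline w_j\mu_{r,d}/n^2$. The bound $\opnorm{M}\le n\max|M_{ab}|$ converts this into operator-norm error $\overline w_j\mu_{r,d}/n$, and adding the bias bound proves the statement.

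The \textbf{main obstacle} is the variance step: checking that the second moment of the weights, controlled only through \cref{lem:importance-weight-bounds}, is small enough that $M$ with exponent $\iota+2r$ yields relative error $1/n^2$ per entry. This requires tracking the polynomial factors carefully, with $n^6$ in $M$ against $n^3$ from the norm bound, $n$ from the weight bound, and $n^2$ from $\mu_{r,d}$.
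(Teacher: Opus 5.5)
Your proposal is correct and follows the paper's proof essentially step for step. Both arguments condition on \cref{lem:importance-weight-bounds}, bound each entry by Chebyshev with per-copy failure probability $O(1/n)$, amplify via the median and \cref{lem:majority} with a union bound over $(\theta,p,q)$, and control the truncation bias through $\zeta_{r,j}\ge 1/2$ and \cref{lem:dgs-tail}. One minor slip, immaterial since only the operator norm of the truncated term is used: the expectation of $\widehat{\cN}_{a,j}(\theta)$ is $\overline w_j\cG_{r,d,j}(\theta)$ \emph{plus} that term, not minus.
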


\begin{proof}

Suppose that the conclusions of
\cref{lem:importance-weight-bounds} hold, and fix the corresponding
pair $P,j$. For every $a$, the expectation of
$\widehat{\cN}_{a,j}(\theta)$ is independent of $a$. From
\cref{eqn:importance-estimator,eqn: change_of_weight}, we have
\[
  \E[
    \widehat{\cN}_{a,j}(\theta)
  ]
  =
  -4\pi^2\overline w_j
  \E_{X\sim D_{\Lambda_j,\xi_r}}
  \left[
    XX^T(-1)^{\theta\cdot V_P(X)}
    \ind_{\{\norm{X}\le C_1\xi_R\sqrt n\}}
  \right].
\]
Consequently,
\begin{align*}
  \widehat{\cN}_j(\theta)
  -
  \overline w_j\cG_{r,d,j}(\theta)
  &=
  \left(
    \widehat{\cN}_j(\theta)
    -
    \E[
      \widehat{\cN}_{a,j}(\theta)
    ]
  \right)+
  4\pi^2\overline w_j
  \E_{X\sim D_{\Lambda_j,\xi_r}}
  \left[
    XX^T(-1)^{\theta\cdot V_P(X)}
    \ind_{\{\norm{X}>C_1\xi_R\sqrt n\}}
  \right].
\end{align*}We bound the two terms separately.

We first bound the first term by applying Chebyshev's inequality
to each entry. Fix $a,\theta$ and the entry indices $p,q$. Write $(\widehat{\cN}_{a,j}(\theta))_{p,q}
=M^{-1}\sum_{i=1}^M Y_i$ by defining
\[
  Y_i
  :=
  -4\pi^2w(X_{a,i})
  (X_{a,i})_p(X_{a,i})_q
  (-1)^{\theta\cdot V_P(X_{a,i})}
  \ind_{\{\norm{X_{a,i}}\le C_1\xi_R\sqrt n\}}.
\]
This gives the following upper bound
\[
  \E\left[
    \left(
      \widehat{\cN}_{a,j}(\theta)
      -
      \E[
        \widehat{\cN}_{a,j}(\theta)
      ]
    \right)_{p,q}^2
  \right]
  =
  \frac1{M^2}
  \sum_{i=1}^M
  \E\left[
    (Y_i-\E[Y_i])^2
  \right]
  \le
  \frac1M\E[Y_1^2].
\]
where the cross terms between
$Y_i-\E[Y_i]$ and $Y_k-\E[Y_k]$ vanish for $i\ne k$ since the $Y_i$ are independent.
Consequently,
\[
  \E\left[
    \left(
      \widehat{\cN}_{a,j}(\theta)
      -
      \E[
        \widehat{\cN}_{a,j}(\theta)
      ]
    \right)_{p,q}^2
  \right]
  \le
  \frac1M
  \E_{X\sim D_{\Lambda_j,\xi_R}}
  \left[
    w(X)^2(4\pi^2X_pX_q)^2
    \ind_{\{\norm{X}\le C_1\xi_R\sqrt n\}}
  \right].
\]
When the indicator is nonzero,
$(4\pi^2X_pX_q)^2=O(\xi_R^4 n^2)=O(\xi_r^8\lambda^4)$. Hence
\[
  \E\left[
    \left(
      \widehat{\cN}_{a,j}(\theta)
      -
      \E[
        \widehat{\cN}_{a,j}(\theta)
      ]
    \right)_{p,q}^2
  \right]
  =
  O\left(\frac{\xi_r^8\lambda^4}M
  \E_{X\sim D_{\Lambda_j,\xi_R}}[w(X)^2]\right)=O\left(
    \frac{
      n\overline w_j^2\xi_r^8\lambda^4
      2^{\iota(r,R)n}
    }M
  \right) = O\left(
    \frac{\overline w_j^2\mu_{r,d}^2}{n^5}
  \right)
\]
where we use $\E[w(X)^2]\le
2n\overline w_j^2 2^{\iota(r,R)n}$ from \cref{lem:importance-weight-bounds} and the definition of $\mu_{r,d}$ and $M$ from \cref{eqn:importance-loss}.

Chebyshev's inequality now gives
\[
  \Prob\left[
    \left|
      \left(
        \widehat{\cN}_{a,j}(\theta)
        -
        \E[
          \widehat{\cN}_{a,j}(\theta)
        ]
      \right)_{p,q}
    \right|
    >
    \frac{\overline w_j\mu_{r,d}}{n^2}
  \right]
  =
  O(n^{-1}).
\]
Thus each of the $n$ independent estimates of the $(p,q)$-entry
has the required accuracy with probability $1-O(n^{-1})$. If more
than half of them have this accuracy, then their median has the same
accuracy. By \cref{lem:majority}, the probability that the median
does not have this accuracy is $2^{-\Omega(n\log n)}$.
A union bound over $\theta\in\F_2^\ell$ and $p,q\in[n]$, together
with $\opnorm{A}\le n\max_{p,q}|A_{p,q}|$, therefore gives
\[
  \max_{\theta\in\F_2^\ell}
  \opnorm{
    \widehat{\cN}_j(\theta)
    -
    \E[
      \widehat{\cN}_{a,j}(\theta)
    ]
  }
  =
  O\left(
    \frac{\overline w_j\mu_{r,d}}n
  \right)
\]
except with probability $2^{-\Omega(n\log n)}$.

It remains to bound the second term.
Since $\zeta_{r,j}\ge1/2$ and its definition $\zeta_{r,j}=2^h \rho_{\xi_r}(\Lambda_j)/\rho_{\xi_r}(\cL^*)$, we have
$\rho_{\xi_r}(\Lambda_j)
\ge2^{-h-1}\rho_{\xi_r}(\cL^*)$. Since $\xi_R>\xi_r$,
\cref{lem:dgs-tail} shows that the operator norm of the second term
in the decomposition is at most
$4\pi^2\overline w_j2^{h+1}\xi_r^2n2^{-n}$.
Dividing this bound by $\overline w_j\mu_{r,d}/n$ gives an upper bound
$O(n2^{h-(1-r)n})=2^{-\Omega(n)}$, because $h\le n/2$ and
$r<1/4$. Therefore, the second term is much smaller:
\[
  4\pi^2\overline w_j
  \opnorm{
    \E_{X\sim D_{\Lambda_j,\xi_r}}
    \left[
      XX^T(-1)^{\theta\cdot V_P(X)}
      \ind_{\{\norm{X}>C_1\xi_R\sqrt n\}}
    \right]
  }
  =
  o(\overline w_j\mu_{r,d}/n).
\]

Combining the two terms proves
the lemma. The total failure probability is the summand of $O(n^{-1})$ from
\cref{lem:importance-weight-bounds} and the
additional failure probability
$2^{-\Omega(n\log n)}$.
\end{proof}

\subsection{Recovering a shortest vector}
\label{sec:importance-recovery}
The following lemma is almost identical to \cref{lem:affine-coset-recovery}, except that we have a normalizing factor $\overline{w}_j$ and we use \cref{lem:parity-rank-one} instead of \cref{lem:shortest-class-hessian}.

\begin{lemma}\label{lem:importance-coset-recovery}
Let $\lambda=\lambda_1(\cL)\le d\le(1+1/n)\lambda$.
Let $v\in\cL$ satisfy $\norm{v}=\lambda$, and let
$u_*\in\F_2^n$ satisfy $v\in Bu_*+2\cL$.
Choose independent uniform $P\in\GL_n(\F_2)$ and
$j\in\F_2^h$, and write $Pu_*=(\alpha_*,\theta_*)$.
Assume that $r\ge9/50$, $\chi<1/2+g(r)$,
  $\chi-1-\min\{g(r),2g(r)\}+2r<0$ and that the conclusion of
\cref{lem:importance-coset-estimation} holds.
  The following holds 
with probability $1-O(n^{-1})$ over $P$ and $j$: 
Let $q_+$ and $q_-$ be unit eigenvectors corresponding to the
largest and smallest eigenvalues of
$\widehat{\cN}_j(\theta_*)$, respectively.
Then there are $s\in\{+,-\}$ and $\tau\in\{-1,1\}$ such that
\[
  \norm{q_s-\tau v/\lambda}
  =
  O(n^{-1/2}).
\]
Consequently, the $n^{-1/3}$-$\BDD$ query at $dq_s$ returns
$\tau v$ for all sufficiently large $n$.
\end{lemma}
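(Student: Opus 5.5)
We will follow the proof of \cref{lem:affine-coset-recovery} almost verbatim, replacing the width $t$ by the target width $r$, dividing everything by the normalizing factor $\overline w_j$, and replacing \cref{lem:shortest-class-hessian} by \cref{lem:parity-rank-one} (which is what allows $r<t_0$, as long as $r\ge 9/50$). Write $\widehat v=v/\lambda$ and $\sigma=(-1)^{\alpha_*\cdot j}$. Assuming the conclusions of \cref{lem:importance-weight-bounds} hold for $(P,j)$ (failure probability $O(n^{-1})$), we have $\zeta_{r,j}=1+O(2^{-\Omega(n)})$. Indeed, the proof of that lemma gives this via \cref{lem:target-coset-flatness} using $\chi<1/2+g(r)$.

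\textbf{Step 1 (decomposition).} By \cref{lem:affine-coset-hessian} applied at width $r$,
\[
\sigma\cG_{r,d,j}(\theta_*)=\frac{1}{\zeta_{r,j}}\Bigl(\cA_{r,d}(u_*)+\sum_{\alpha\ne\alpha_*}(-1)^{j\cdot(\alpha-\alpha_*)}\cA_{r,d}(P^{-1}(\alpha,\theta_*))\Bigr)+a\Id
\]
for some scalar $a$. By \cref{lem:parity-rank-one}, $\cA_{r,d}(u_*)=\mu_{r,d}\widehat v\widehat v^T+\mu_{r,d}2^{-\Omega(n)}$ in Frobenius (hence operator) norm.

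\textbf{Step 2 (off-class terms).} We repeat the second-moment argument. Averaging over uniform $j$ kills the cross terms, so the expected squared Frobenius norm of the off-class sum equals $\sum_{\alpha\ne\alpha_*}\Frob{\cA_{r,d}(P^{-1}(\alpha,\theta_*))}^2$. Averaging over $P$, each $u\ne u_*$ appears with probability $(2^h-1)/(2^n-1)$. Then \cref{lem:hessian-square-mass} (valid since $r\ge 9/50>t_0/2$) bounds the expectation by $\mu_{r,d}^2 2^{(\chi-1+2r-\min\{g(r),2g(r)\}+o(1))n}$. This exponent is negative by the hypothesis, so Markov's inequality gives an operator-norm bound $\mu_{r,d}2^{-\Omega(n)}$ except with probability $2^{-\Omega(n)}$. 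This is the one step where the new hypothesis is used. Note that $g(r)$ may now be negative, which is why the $\min$ appears; I expect checking that \cref{lem:hessian-square-mass} still applies for $r<t_0$ to be the only genuinely delicate point.

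\textbf{Step 3 (perturbation).} Multiplying by $\sigma\overline w_j$ and using the conclusion of \cref{lem:importance-coset-estimation}, we write
\[
\sigma\widehat\cN_j(\theta_*)=\frac{\overline w_j\mu_{r,d}}{\zeta_{r,j}}\widehat v\widehat v^T+a'\Id+E,\qquad \opnorm{E}=O(\overline w_j\mu_{r,d}/n).
\]
Let $q$ be a top unit eigenvector of $\sigma\widehat\cN_j(\theta_*)$; it is $q_+$ or $q_-$ according to $\sigma$. Comparing the Rayleigh quotients at $q$ and at $\widehat v$ gives $\frac{\overline w_j\mu_{r,d}}{\zeta_{r,j}}(1-\ip{q}{\widehat v}^2)\le 2\opnorm{E}$, hence $1-\ip{q}{\widehat v}^2=O(n^{-1})$. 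Choosing the sign $\tau$ then gives $\norm{q-\tau\widehat v}=O(n^{-1/2})$. Finally $\norm{dq-\tau v}=O(\lambda n^{-1/2})<\lambda n^{-1/3}$, and \cref{thm:bdd-preprocessing} finishes the argument.

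Collecting failure probabilities gives $O(n^{-1})+2^{-\Omega(n)}=O(n^{-1})$.
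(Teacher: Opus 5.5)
Your proposal is correct and follows the paper's proof essentially step for step. It decomposes via \cref{lem:affine-coset-hessian} and \cref{lem:parity-rank-one}, bounds the off-class sum by the second-moment/Markov argument using \cref{lem:hessian-square-mass} and the hypothesis $\chi-1-\min\{g(r),2g(r)\}+2r<0$, and finishes with the Rayleigh-quotient perturbation of \cref{lem:affine-coset-recovery} rescaled by $\overline w_j$. The only cosmetic difference is that the paper gets $\zeta_{r,j}=1+O(2^{-\Omega(n)})$ directly from \cref{lem:target-coset-flatness} rather than through \cref{lem:importance-weight-bounds}, which rests on the same fact.
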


\begin{proof}
Put $\widehat v=v/\lambda$ and
$\sigma=(-1)^{\alpha_*\cdot j}$. By
\cref{lem:parity-rank-one},
$\cA_{r,d}(u_*)=\mu_{r,d}\widehat v\widehat v^T+\cR$, where
$\Frob{\cR}\le\mu_{r,d}2^{-\Omega(n)}$.
Let
$S_{P,j}:=\sum_{\alpha\ne\alpha_*}
(-1)^{j\cdot(\alpha-\alpha_*)}
\cA_{r,d}(P^{-1}(\alpha,\theta_*))$.
The calculation in the proof of
\cref{lem:affine-coset-recovery}, with $t$ replaced by $r$,
together with \cref{lem:hessian-square-mass}, gives
\[
  \E_{P,j}\Frob{S_{P,j}}^2
  \le
  \mu_{r,d}^2
  2^{(\chi-1-\min\{g(r),2g(r)\}+2r+o(1))n}.
\]
The exponent is negative by assumption. Hence Markov's inequality
gives $\Frob{S_{P,j}}\le\mu_{r,d}2^{-\Omega(n)}$, except with
probability $2^{-\Omega(n)}$ over $P,j$.

By \cref{lem:target-coset-flatness},
$\zeta_{r,j}=1+O(2^{-\Omega(n)})$, except with probability
$2^{-\Omega(n)}$ over $P$. Applying
\cref{lem:affine-coset-hessian} and substituting the expressions for
$\cA_{r,d}(u_*)$ and $S_{P,j}$, and then applying
\cref{lem:importance-coset-estimation}, gives, for some $a\in\R$,
\[
  \sigma\widehat{\cN}_j(\theta_*)
  =
  a\Id+
  \frac{\overline w_j\mu_{r,d}}{\zeta_{r,j}}
  \widehat v\widehat v^T+E,
  \qquad
  \opnorm{E}
  =
  O\left(\frac{\overline w_j\mu_{r,d}}n\right).
\]

If $\sigma=1$, let $q=q_+$, and otherwise let $q=q_-$. The remainder proof is identical to \cref{lem:affine-coset-recovery} except that we use $\overline{w}_j \mu_{r,d}/\zeta_{r,j}$ instead of $\mu_{t,d}/\zeta_{t,j}$.
\end{proof}

\subsection{The importance-sampling algorithm}
\label{sec:importance-algorithm}
This section presents the algorithm using importance sampling.
The time complexity is determined by the
$2^{(\iota(r,R)+2r)n+o(n)}$ samples used in
\cref{lem:importance-coset-estimation} and the
$2^{\ell+o(n)}=2^{(1-\chi)n+o(n)}$ matrices, e.g., $\widehat{\cN}_j(\theta)$, indexed by
$\theta\in\F_2^\ell$. The space complexity is determined by the
samples and these matrices.

We choose the following parameters
\[
  r=0.2222355,\qquad
  R=0.400613,\qquad
  \chi=0.3961331,
\]
which gives $s=\frac{rR}{2R-r}=0.1537683785\ldots$ and $\iota(r,R)
  =
  \frac12\log_2\frac{R^2}{r(2R-r)}
  =
  0.1593947534\ldots$.
A direct numerical calculation shows that all the conditions of 
\cref{lem:source-coset-sampling,lem:importance-coset-estimation,lem:parity-rank-one,lem:importance-coset-recovery}, and
\begin{align}\label{eqn:importance-exponent}
  \max\left\{
    \frac12,
    \iota(r,R)+2r,
    1-\chi
  \right\}
  <
  0.603867.
\end{align}

\begin{remark}\label{rem: nonopt}
Without the refined bound in
\cref{lem:parity-source-mass}, the source-coset sampling argument
requires $\chi<g(R)$. Choose
$r=0.222275$, $R=0.400765$, and
$\chi=0.395965$. Then
\[
  g(R)=0.395965\ldots>\chi,
  \qquad
  \iota(r,R)+2r=0.604035\ldots,
  \qquad
  1-\chi=0.604035.
\]
These values satisfy the other assumptions of the algorithm.
Therefore, using $g(R)$ instead of $g_2(R)$ gives time and space
$2^{0.60404n+o(n)}$.
\end{remark}

\begin{algorithmblock}{Importance-sampling affine-coset Hessian
$\SVP$}
\label{alg:importance-full-space}
  \item Construct the preprocessing $n^{-1/3}$-$\BDD$ data and the
        scale grid from \cref{alg:direct-exhaustive}.

  \item At every scale $d$, choose independent uniform
        $P\in\GL_n(\F_2)$ and $j\in\F_2^h$. Put
        $M=\lceil n^6 2^{(\iota(r,R)+2r)n}\rceil$ and apply the
        sampling algorithm from \cref{lem:source-coset-sampling}.
        Abort the scale unless $nM$ samples from
        $D_{\Lambda_j,\xi_R}$ are obtained, and divide them into
        $n$ families
        $\{X_{a,1},\ldots,X_{a,M}\}$, $a\in[n]$.

  \item For every $a\in[n]$, construct the matrix-valued array
        \[
          A_a(y)
          :=
          -\frac{4\pi^2}{M}
          \sum_{\substack{1\le i\le M\\V_P(X_{a,i})=y}}
          w(X_{a,i})X_{a,i}X_{a,i}^T
          \ind_{\{\norm{X_{a,i}}\le C_1\xi_R\sqrt n\}},
          \qquad
          y\in\F_2^\ell.
        \]

  \item Apply the matrix-valued Walsh-Hadamard transform to each
        $A_a$. For every $\theta\in\F_2^\ell$, the output satisfies
        \begin{align}\label{eqn:importance-wht}
          \sum_{y\in\F_2^\ell}
          (-1)^{\theta\cdot y}A_a(y)
          =
          \widehat{\cN}_{a,j}(\theta).
        \end{align}
        Define $\widehat{\cN}_j(\theta)$ by taking the median of the
        corresponding entries of
        $\widehat{\cN}_{1,j}(\theta),\ldots,
        \widehat{\cN}_{n,j}(\theta)$.

  \item For every $\theta\in\F_2^\ell$, compute unit eigenvectors
        $q_+$ and $q_-$ corresponding to the largest and smallest
        eigenvalues of $\widehat{\cN}_j(\theta)$ and query $\BDD$ at
        $dq_\pm$. Store a nonzero output in $\cL$. Keep only the
        shortest vector found and discard the other vectors and
        temporary data.

  \item Return the shortest vector over all scales.
\end{algorithmblock}

\begin{theorem}\label{thm:importance-full-space}
\cref{alg:importance-full-space} solves Search-$\SVP$ with constant
success probability in time
$2^{0.60387n+o(n)}$
and space
$2^{0.60387n+o(n)}$.
\end{theorem}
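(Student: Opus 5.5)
The proof will follow the template of \cref{thm:direct-row}: correctness at the good scale, then a cost count. At the scale $d$ with $\lambda\le d\le(1+1/n)\lambda$, fix a shortest vector $v$ and its parity class $u_*$, and write $Pu_*=(\alpha_*,\theta_*)$. First I check numerically that the chosen $(r,R,\chi)$ satisfy every hypothesis used downstream:
\begin{itemize}[nosep]
\item $R>t_0$ and $\chi<g_2(R)$, for \cref{lem:source-coset-sampling};
\item $s=rR/(2R-r)>t_0/2$ and $\chi<1/2+g(r)$, for \cref{lem:importance-weight-bounds,lem:importance-coset-estimation};
\item $r\ge 9/50$, for \cref{lem:parity-rank-one};
\item $\chi-1-\min\{g(r),2g(r)\}+2r<0$, for \cref{lem:importance-coset-recovery}.
\end{itemize}
Note that $r<t_0$ here, so $g(r)<0$ and the last condition reads $\chi-1-2g(r)+2r<0$. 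With $r=0.2222355$ this gives $g(r)\approx-0.0295$, and the exponent is about $-0.104<0$. Also $s\approx0.1538>t_0/2\approx0.1157$.

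Given these hypotheses, the chain of lemmas runs as follows.
\begin{enumerate}[nosep]
\item \cref{lem:source-coset-sampling} with $nM=2^{O(n)}$ samples succeeds except with probability $2^{-\Omega(n)}$ over $P$ and the sampler. Its output is $\exp(-\Omega(n^2))$-close to $D_{\Lambda_j,\xi_R}^{nM}$.
\item Step 4 computes exactly the estimators $\widehat{\cN}_{a,j}(\theta)$ of \cref{eqn:importance-estimator}. This follows from \cref{lem:matrix-wht} and the identity \cref{eqn:importance-wht}, obtained by grouping the samples by $V_P(X_{a,i})=y$.
\item \cref{lem:importance-coset-estimation} holds with probability $1-O(n^{-1})$.
\item \cref{lem:importance-coset-recovery} then shows that one of $q_\pm$ for $\theta=\theta_*$ yields $\pm v$ through the BDD oracle of \cref{thm:bdd-preprocessing}.
\end{enumerate}
Every output is verified for lattice membership before it is kept, so the returned vector is a nonzero lattice vector of norm at most $\lambda$, i.e.\ a shortest vector. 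The failure probabilities sum to $O(n^{-1})$ plus the constant failure probability of BDD preprocessing. Hence the success probability is a constant, which can be boosted by repetition.

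For the complexity, I would count per scale, over polynomially many scales:
\begin{itemize}[nosep]
\item Sampling takes $(nM+2^{n/2})2^{o(n)}=2^{\max\{1/2,\iota+2r\}n+o(n)}$ time.
\item Building the $n$ arrays $A_a$ takes $nM\poly(n)$ time plus $2^\ell\poly(n)$ for initialization.
\item The Walsh–Hadamard transforms take $n2^{\ell}\poly(n)$ time.
\item The medians, eigenvector computations and $2^{\ell+1}$ BDD queries take $2^{\ell+o(n)}$ time.
\end{itemize}
With $\ell=(1-\chi)n+O(1)$, the total time is $2^{\max\{1/2,\iota(r,R)+2r,1-\chi\}n+o(n)}$, which is below $2^{0.603867n}$ by \cref{eqn:importance-exponent}. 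Space is dominated by storing the $nM$ samples and the $n$ arrays of $2^\ell$ matrices, plus the $2^{n/2+o(n)}$ space of sampling and BDD preprocessing. This gives the same exponent.

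The main obstacle is not conceptual but bookkeeping. First, the hypotheses of the various lemmas must hold simultaneously for the chosen constants; the delicate ones are $\chi<g_2(R)$, which needs the refined bound of \cref{lem:parity-source-mass}, and the recovery exponent with $g(r)<0$. Second, the probability statements must compose: \cref{lem:importance-coset-recovery} is conditioned on \cref{lem:importance-coset-estimation}, and the sampling guarantee only holds up to statistical distance. I would handle this with a single union bound over the bad events for $P$, $j$ and the samples. The statistical distance is absorbed because the algorithm's success is a deterministic function of the samples.
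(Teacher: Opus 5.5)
Your proposal is correct and follows the paper's proof essentially step for step. Correctness at the good scale comes from \cref{lem:source-coset-sampling}, the Walsh--Hadamard identity \cref{eqn:importance-wht}, \cref{lem:importance-coset-estimation} and \cref{lem:importance-coset-recovery}, with lattice-membership verification and the statistical-distance bound, and the same time and space count gives $2^{\max\{1/2,\iota(r,R)+2r,1-\chi\}n+o(n)}$. Your explicit check of the lemma hypotheses is a useful addition, since the paper only asserts it numerically. The one slip is cosmetic and does not affect the argument: the recovery exponent $\chi-1-2g(r)+2r$ is about $-0.1007$, not $-0.104$.
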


\begin{proof}
Suppose
$\lambda_1(\cL)\le d\le(1+1/n)\lambda_1(\cL)$ and let $v$ be a
shortest vector. We can ignore the other $d$'s as before. Choose
$u_*\in\F_2^n$ such that $v\in Bu_*+2\cL$, and write
$Pu_*=(\alpha_*,\theta_*)$.

By \cref{lem:source-coset-sampling}, the algorithm obtains samples
whose joint distribution is
$\exp(-\Omega(n^2))$-close to independent samples from
$D_{\Lambda_j,\xi_R}$. By \cref{eqn:importance-wht}, the
Walsh-Hadamard transforms compute
$\widehat{\cN}_{a,j}(\theta)$ for every
$a\in[n]$ and $\theta\in\F_2^\ell$. Their entrywise medians are
therefore the matrices $\widehat{\cN}_j(\theta)$ defined in
\cref{sec:importance-estimation}.

The conclusion of \cref{lem:importance-coset-estimation} holds
simultaneously for every $\theta\in\F_2^\ell$, and
\cref{lem:importance-coset-recovery} shows that one of the two
$\BDD$ queries corresponding to $\theta_*$ returns one of
$\pm v$. The statistical distance in
\cref{lem:source-coset-sampling} changes the success probability by
at most $\exp(-\Omega(n^2))$. Since every stored vector is verified
to be a nonzero vector in $\cL$, the shortest stored vector has norm
$\lambda_1(\cL)$.

We analyze the complexity. Generating and processing the samples
takes
$2^{(\max\{1/2,\iota(r,R)+2r\}+o(1))n}$ time by
\cref{lem:source-coset-sampling}. Constructing the arrays takes
$nM\poly(n)$ time. Their Walsh-Hadamard transforms
(\cref{lem:matrix-wht}), the entrywise medians, the eigenvector
computations, and the $\BDD$ queries take
$2^{\ell+o(n)}=2^{(1-\chi)n+o(n)}$ time. Hence the time exponent is $0.60387$ as in \cref{eqn:importance-exponent}.

The Gaussian samples use
$2^{(\iota(r,R)+2r)n+o(n)}$ space, the arrays and their transforms
use $2^{(1-\chi)n+o(n)}$ space, and the sampling algorithm uses
$2^{n/2+o(n)}$ space. Thus the space exponent is bounded by the
same maximum in \cref{eqn:importance-exponent}. This proves the
claimed time and space bounds.
\end{proof}

\subsection{Reducing the space by sparsification}
\label{sec:importance-small-space}
We use the same parameters as in the previous section. Choose independent uniform
$P\in\GL_n(\F_2)$ and $j\in\F_2^h$. The algorithm in
\cref{sec:importance-algorithm} stores $n$ arrays indexed by
$\F_2^\ell$ of size $2^\ell = 2^{0.60387n+o(n)}$. 
To reduce the space, we randomly and sparsely select some of the
samples before applying the Walsh-Hadamard transform, and they are only used in the later steps.

More precisely, we do the following.
For $X\in\Lambda_j$, define
\begin{align}\label{eqn:selection-probability}
  \pi(X)
  :=
  \min\left\{
    1,
    \frac{w(X)}
    {2^{\iota(r,R)n}(r/R)^{n/2}}
  \right\}.
\end{align}
Note that all terms are efficiently computable.
Conditional on each sampled $X$, independently choose
$Z\in\{0,1\}$ such that
$\Pr[Z=1\mid X]=\pi(X)$ and
$\Pr[Z=0\mid X]=1-\pi(X)$.
We use the following:
For every function $f$ for which the expectation exists,
\begin{align}
    \label{eqn: Zpiremove}
  \E_{X,Z}\left[
    \frac{Zf(X)}{\pi(X)}
  \right]
  =
  \E_X\left[
    \E\left[
      \left.
      \frac{Zf(X)}{\pi(X)}
      \right|X
    \right]
  \right]
  =
  \E_X[f(X)].
\end{align}

Let $M=\lceil n^6 2^{(\iota(r,R)+2r)n}\rceil$, and take $n$
independent families
$\{X_{a,1},\ldots,X_{a,M}\}$, $a\in[n]$, of samples from
$D_{\Lambda_j,\xi_R}$. Conditional on these samples, let the
$Z_{a,i}$ be independent random variables such that
$Z_{a,i}=1$ with probability $\pi(X_{a,i})$ and $Z_{a,i}=0$
otherwise. Define the sparsified estimator
\begin{align}\label{eqn:sparse-importance-estimator}
  \widehat{\cN}^{\rm sp}_{a,j}(\theta)
  :=
  -\frac{4\pi^2}{M}
  \sum_{i=1}^M
  \frac{Z_{a,i}w(X_{a,i})}{\pi(X_{a,i})}
  X_{a,i}X_{a,i}^T
  (-1)^{\theta\cdot V_P(X_{a,i})}
  \ind_{\{\norm{X_{a,i}}\le C_1\xi_R\sqrt n\}}.
\end{align}
Because of \cref{eqn: Zpiremove}, the expectation of this sparsification is the same as the expectation of the original \cref{eqn:importance-estimator}.
For each matrix entry, define
$\widehat{\cN}^{\rm sp}_j(\theta)$ by taking the
median among the corresponding
entries of
$\widehat{\cN}^{\rm sp}_{1,j}(\theta),\ldots,
\widehat{\cN}^{\rm sp}_{n,j}(\theta)$.

The factor $1/\pi(X_{a,i})$ compensates for selecting a sample with
probability $\pi(X_{a,i})$: conditional on $X_{a,i}$, the
expectation of $Z_{a,i}w(X_{a,i})/\pi(X_{a,i})$ is $w(X_{a,i})$.

\begin{lemma}\label{lem:sparse-importance-estimation}
Let $\lambda=\lambda_1(\cL)$ and
$\lambda\le d\le(1+1/n)\lambda$. Then, the following holds for all $\theta\in \F_2^\ell$:
\[
  \max_{\theta\in\F_2^\ell}
  \opnorm{
    \widehat{\cN}^{\rm sp}_j(\theta)
    -
    \overline w_j\cG_{r,d,j}(\theta)
  }
  =
  O\left(
    \frac{\overline w_j\mu_{r,d}}n
  \right)
\]
with probability $1-O(n^{-1})$ over $P,j$, the samples $X_{a,i}$, and the
variables $Z_{a,i}$.
\end{lemma}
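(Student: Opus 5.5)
I will follow the proof of \cref{lem:importance-coset-estimation} essentially line by line; the only change is in the entrywise second-moment computation. Fix $P,j$ satisfying the conclusions of \cref{lem:importance-weight-bounds}, which fails only with probability $O(n^{-1})$. By \cref{eqn: Zpiremove} applied conditionally on $X_{a,i}$, we have $\E[\widehat{\cN}^{\rm sp}_{a,j}(\theta)]=\E[\widehat{\cN}_{a,j}(\theta)]$. Hence $\widehat{\cN}^{\rm sp}_j(\theta)-\overline w_j\cG_{r,d,j}(\theta)$ splits into the same two terms as before: a fluctuation term and the truncation bias. The bias term is literally the same quantity already bounded there by $o(\overline w_j\mu_{r,d}/n)$ via \cref{lem:dgs-tail}, so nothing new is needed for it.

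For the fluctuation term, write each entry as $M^{-1}\sum_i Y_i^{\rm sp}$ with
\[
Y_i^{\rm sp}=\frac{Z_{a,i}}{\pi(X_{a,i})}Y_i ,
\]
where $Y_i$ is as in the earlier proof. The pairs $(X_{a,i},Z_{a,i})$ are independent over $i$, so the variance is at most $M^{-1}\E[(Y_1^{\rm sp})^2]$. Conditioning on $X$ gives $\E[Z/\pi(X)^2\mid X]=1/\pi(X)$, so
\[
\E[(Y^{\rm sp})^2]=\E\!\left[\frac{w(X)^2}{\pi(X)}(4\pi^2X_pX_q)^2\ind\right].
\]
Set $c:=2^{\iota(r,R)n}(r/R)^{n/2}$. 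The key pointwise inequality is $w(X)^2/\pi(X)\le w(X)^2+c\,w(X)$. This holds because if $\pi(X)=1$ the left side equals $w(X)^2$, and otherwise $\pi(X)=w(X)/c$ and the left side equals $c\,w(X)$. Taking expectations over $D_{\Lambda_j,\xi_R}$, the first part is at most $2n\overline w_j^2 2^{\iota n}$ by \cref{lem:importance-weight-bounds}. The second part is $c\,\overline w_j=\overline w_j^2 2^{\iota n}(1+2^{-\Omega(n)})$, using $\overline w_j=(r/R)^{n/2}(1+O(2^{-\Omega(n)}))$. So the second moment is $O(n\overline w_j^2 2^{\iota n})$ times the bounded factor $O(\xi_r^8\lambda^4)$ coming from the truncation indicator, which is exactly the same order as in the non-sparse proof.

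From here the rest is verbatim. Chebyshev gives per-estimate entrywise accuracy $\overline w_j\mu_{r,d}/n^2$ with probability $1-O(n^{-1})$, using $M=\lceil n^62^{(\iota+2r)n}\rceil$ and $\mu_{r,d}=\Theta(\xi_r^4\lambda^22^{-rn})$. The median over the $n$ independent families together with \cref{lem:majority} boosts the failure probability to $2^{-\Omega(n\log n)}$. A union bound over $\theta\in\F_2^\ell$ and $p,q$, combined with $\opnorm{A}\le n\max|A_{p,q}|$, finishes the argument; the total failure probability is $O(n^{-1})$ and comes from the choice of $P,j$.

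The main obstacle is the bound on $\E[w^2/\pi]$. Its threshold constant $c$ must be matched to $\overline w_j 2^{\iota n}$, and that matching relies on the estimate $\overline w_j\approx (r/R)^{n/2}$ from \cref{lem:importance-weight-bounds} holding for the chosen $P,j$. Everything else is a routine transfer of the earlier argument.
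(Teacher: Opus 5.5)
Your proposal is correct and follows the paper's proof essentially step for step. It uses the same decomposition into a fluctuation term and the unchanged truncation bias, and the same pointwise bound $w(X)^2/\pi(X)\le w(X)^2+2^{\iota(r,R)n}(r/R)^{n/2}w(X)$. Combined with $\overline w_j=(r/R)^{n/2}(1+O(2^{-\Omega(n)}))$, this recovers the second-moment bound $O(n\overline w_j^2\xi_r^8\lambda^4 2^{\iota(r,R)n})$, after which you apply the identical Chebyshev, median, and union-bound argument.
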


\begin{proof}
Fix $P,j$ for which the conclusion of
\cref{lem:importance-weight-bounds} holds, which happens with
probability $1-O(n^{-1})$.
For fixed $a$, let $\E_{X_a}$ denote expectation over the
independent samples
$X_{a,1},\ldots,X_{a,M}\sim D_{\Lambda_j,\xi_R}$.
Let $\E_{X_a,Z_a}$ additionally include the conditionally
independent choices
$Z_{a,i}\mid X_{a,i}\sim\operatorname{Bernoulli}
(\pi(X_{a,i}))$.

By \cref{eqn: Zpiremove},
$\E_{X_a,Z_a}[
\widehat{\cN}^{\rm sp}_{a,j}(\theta)]
=
\E_{X_a}[
\widehat{\cN}_{a,j}(\theta)]$.
Consequently,
\[
  \widehat{\cN}^{\rm sp}_j(\theta)
  -\overline w_j\cG_{r,d,j}(\theta)
  =
  \widehat{\cN}^{\rm sp}_j(\theta)
  -\E_{X_a,Z_a}[
    \widehat{\cN}^{\rm sp}_{a,j}(\theta)]
  +\E_{X_a}[
    \widehat{\cN}_{a,j}(\theta)]
  -\overline w_j\cG_{r,d,j}(\theta).
\]
The proof of \cref{lem:importance-coset-estimation} shows that the
operator norm of
$\E_{X_a}[\widehat{\cN}_{a,j}(\theta)]
-\overline w_j\cG_{r,d,j}(\theta)$ is
$o(\overline w_j\mu_{r,d}/n)$. We bound the first difference using
the same concentration argument.

By the definition of $\pi(X)$ and $1/\pi(X)=\max(1,2^{\iota(r,R)n}(r/R)^{n/2}/w(X)) \le 1+2^{\iota(r,R)n}(r/R)^{n/2}/w(X)$,
\[
  \E_{Z\mid X}\left[
    \left(
      \frac{Zw(X)}{\pi(X)}
    \right)^2
  \right]
  =
  \frac{w(X)^2}{\pi(X)}
  \le
  w(X)^2
  +
  2^{\iota(r,R)n}(r/R)^{n/2}w(X),
\]
where $Z\mid X\sim\operatorname{Bernoulli}(\pi(X))$.
In the proof of 
\cref{lem:importance-coset-estimation}, we proved that $\Frob{4\pi^2XX^T}=O(\xi_r^4\lambda^2)$ if $\{\norm{X}\le C_1\xi_R\sqrt n\}$. Hence
\cref{lem:importance-weight-bounds}, together with
$  \E_{X\sim D_{\Lambda_j,\xi_R}}[w(X)]
  =
  \overline w_j=(r/R)^{n/2}(1+O(2^{-\Omega(n)}))$, gives
\[
  \E_{\substack{
    X\sim D_{\Lambda_j,\xi_R}\\
    Z\mid X\sim\operatorname{Bernoulli}(\pi(X))
  }}
  \left[
    \left(
      \frac{Zw(X)}{\pi(X)}
    \right)^2
    \Frob{4\pi^2XX^T}^2
    \ind_{\{\norm{X}\le C_1\xi_R\sqrt n\}}
  \right]
  =
  O\left(
    n\overline w_j^2\xi_r^8\lambda^4
    2^{\iota(r,R)n}
  \right).
\]

For fixed $a,\theta,p,q$, the estimator entry is the average of
$M$ independent summands. Therefore,
\[
  \E_{X_a,Z_a}\left[
    \left(
      \widehat{\cN}^{\rm sp}_{a,j}(\theta)
      -
      \E_{X_a,Z_a}[
        \widehat{\cN}^{\rm sp}_{a,j}(\theta)
      ]
    \right)_{p,q}^2
  \right]
  =
  O\left(
    \frac{\overline w_j^2\mu_{r,d}^2}{n^5}
  \right)
\]
by the definitions of $M$ and $\mu_{r,d}$ in
\cref{eqn:importance-loss}. This is the same bound used in the
proof of \cref{lem:importance-coset-estimation}. Applying its
Chebyshev, median, and union-bound arguments proves the required
operator-norm bound simultaneously for every
$\theta\in\F_2^\ell$. The failure probability in
\cref{lem:importance-weight-bounds} is $O(n^{-1})$.
\end{proof}

The next lemma bounds the number of samples for which $Z_{a,i}=1$.

\begin{lemma}\label{lem:selected-sample-count}
Under the hypotheses of
\cref{lem:sparse-importance-estimation}, with probability
$1-O(n^{-1})$ over $P,j$, the samples $X_{a,i}$, and the variables
$Z_{a,i}$, the number of pairs $(a,i)$ satisfying $Z_{a,i}=1$ and
$\norm{X_{a,i}}\le C_1\xi_R\sqrt n$ is at most
$n^8 2^{2rn}$.
\end{lemma}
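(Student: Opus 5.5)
We bound the expected count and then apply Markov's inequality. Fix $P,j$ for which the conclusions of \cref{lem:importance-weight-bounds} hold; this happens with probability $1-O(n^{-1})$. The number of selected pairs is $\sum_{a,i} Z_{a,i}\ind_{\{\norm{X_{a,i}}\le C_1\xi_R\sqrt n\}}$. Conditional on $X_{a,i}$, the summand has expectation $\pi(X_{a,i})\ind_{\{\cdot\}}\le\pi(X_{a,i})$. Using $\min\{1,y\}\le y$ in \cref{eqn:selection-probability}, we get
\[
  \E[\pi(X)]\le \frac{\E_{X\sim D_{\Lambda_j,\xi_R}}[w(X)]}{2^{\iota(r,R)n}(r/R)^{n/2}}
  = \frac{\overline w_j}{2^{\iota(r,R)n}(r/R)^{n/2}}
  = 2^{-\iota(r,R)n}\left(1+O(2^{-\Omega(n)})\right),
\]
where the last step uses the estimate $\overline w_j=(r/R)^{n/2}(1+O(2^{-\Omega(n)}))$ from \cref{lem:importance-weight-bounds}.

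Summing over the $nM$ pairs, with $M=\lceil n^6 2^{(\iota(r,R)+2r)n}\rceil$, the expected count is at most
\[
  nM\cdot 2^{-\iota(r,R)n}(1+o(1)) = O(n^7 2^{2rn}).
\]
Markov's inequality (\cref{lem:Markov}) then shows the count exceeds $n^8 2^{2rn}$ with probability $O(n^{-1})$. Adding the $O(n^{-1})$ failure probability for $P,j$ gives the claim. One could also use Chernoff for a sharper tail, but Markov already suffices for the stated bound.

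The only delicate point is that the expectation of $w(X)$ must be taken under the exact coset distribution $D_{\Lambda_j,\xi_R}$, because that is where the identity $\E[w]=\overline w_j$ holds. The samples come from \cref{lem:source-coset-sampling} and are only $\exp(-\Omega(n^2))$-close to this distribution. This changes the probability of the event by a negligible amount, which is absorbed into the $O(n^{-1})$ term. The truncation indicator never increases the count, so it needs no separate treatment.
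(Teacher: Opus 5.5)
Your proposal is correct and follows the paper's proof: bound $\E[\pi(X)]$ by $\overline w_j/(2^{\iota(r,R)n}(r/R)^{n/2})=2^{-\iota(r,R)n}(1+O(2^{-\Omega(n)}))$ using \cref{lem:importance-weight-bounds}, so the expected number of selected pairs is $O(nM2^{-\iota(r,R)n})=O(n^7 2^{2rn})$, and Markov's inequality gives failure probability $O(n^{-1})$. Your remark that the $\exp(-\Omega(n^2))$ statistical distance of the sampler changes this probability only negligibly is a harmless addition; the paper accounts for it at the theorem level instead.
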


\begin{proof}
Since
$\pi(X)\le
w(X)/(2^{\iota(r,R)n}(r/R)^{n/2})$,
\cref{lem:importance-weight-bounds} gives
\[
  \E_{X\sim D_{\Lambda_j,\xi_R}}[\pi(X)]
  \le
  \frac{\overline w_j}
  {2^{\iota(r,R)n}(r/R)^{n/2}}
  =
  2^{-\iota(r,R)n}
  \left(1+O(2^{-\Omega(n)})\right).
\]
The expected number of pairs satisfying $Z_{a,i}=1$ is therefore
$O(nM2^{-\iota(r,R)n})=O(n^7 2^{2rn})$. The norm condition can
only decrease this number. Markov's inequality proves the lemma.
\end{proof}

\subsection{Space-efficient implementation}
\label{sec:algorithm-small-space}
Now we introduce an algorithm with the reduced space complexity. For every importance-weighted
sample, we choose the random variable $Z$ using
\cref{eqn:selection-probability} and store the sample only if $Z=1$.
The division by $\pi(X)$ preserves the expectation of its
contribution. By \cref{lem:selected-sample-count}, the number of
stored samples is $2^{2rn+o(n)}$. 

We choose the same parameters $r,R,\chi$ as before, satisfying the conditions of all the lemmas and \cref{eqn:importance-exponent}.
The time complexity remains
$2^{0.60387n+o(n)}$. Generating and selecting the samples takes
$2^{(\iota(r,R)+2r)n+o(n)}$ time, and all the Walsh-Hadamard transforms and
the examination of all matrices take
$2^{\ell+o(n)}=2^{(1-\chi)n+o(n)}$ time.
The space complexity becomes $2^{n/2+o(n)}$, because the stored
samples use $2^{2rn+o(n)}$ space for $2r<1/2$, and only
$2^{\lfloor n/2\rfloor}$ matrices are processed at a time.

\begin{algorithmblock}{Space-efficient importance-sampling $\SVP$}
\label{alg:best-svp}
  \item Follow the first two steps of
        \cref{alg:importance-full-space}, but process each sample
        $X_{a,i}$ immediately and then discard it. If
        $\norm{X_{a,i}}\le C_1\xi_R\sqrt n$, choose $Z_{a,i}$ as in
        \cref{eqn:selection-probability}. When $Z_{a,i}=1$, write
        \[
          V_P(X_{a,i})=(k'_{a,i},k''_{a,i})
          \in\F_2^b\times\F_2^{\ell-b},
          \qquad b=\lfloor n/2\rfloor,
        \]
        and store $(a,k'_{a,i},k''_{a,i},W_{a,i})$, where
$          W_{a,i}
          :=
          -\frac{4\pi^2w(X_{a,i})}{M\pi(X_{a,i})}
          X_{a,i}X_{a,i}^T.$
        Abort the scale if more than $n^8 2^{2rn}$ tuples are stored.

  \item For every $\theta''\in\F_2^{\ell-b}$ and $a\in[n]$, form
        \[
          A_{a,\theta''}(x)
          :=
          \sum_{\substack{
            (a,k',k'',W)\text{ stored}\\k'=x
          }}
          (-1)^{\theta''\cdot k''}W,
          \qquad x\in\F_2^b.
        \]
        Apply the matrix-valued Walsh-Hadamard transform and take
        the entrywise median of the $n$ outputs for each
        $\theta'\in\F_2^b$. Perform the eigenvector and $\BDD$ steps
        of \cref{alg:importance-full-space}, and discard the arrays
        before proceeding to the next $\theta''$.

  \item Keep the shortest verified vector over all scales. Repeat a
        sufficiently large constant number of times and return the
        shortest vector found.
\end{algorithmblock}

\begin{theorem}\label{thm:importance-small-space}
\cref{alg:best-svp} solves Search-$\SVP$
with success probability at least $2/3$ in expected time
$2^{0.60387n+o(n)}$ and space
$2^{n/2+o(n)}$.
\end{theorem}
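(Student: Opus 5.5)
The proof follows the template of \cref{thm:importance-full-space}, with \cref{lem:sparse-importance-estimation} and \cref{lem:selected-sample-count} replacing \cref{lem:importance-coset-estimation}. Fix the good scale $\lambda_1(\cL)\le d\le(1+1/n)\lambda_1(\cL)$, a shortest vector $v$, and $u_*$ with $v\in Bu_*+2\cL$. Write $Pu_*=(\alpha_*,\theta_*)$ and $\theta_*=(\theta_*',\theta_*'')$.

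\textbf{Correctness of one run.} By \cref{lem:source-coset-sampling}, the streamed samples are $\exp(-\Omega(n^2))$-close to independent samples from $D_{\Lambda_j,\xi_R}$, and the scale is not aborted except with probability $2^{-\Omega(n)}$. \cref{lem:selected-sample-count} shows that the storage cap $n^8 2^{2rn}$ is exceeded only with probability $O(n^{-1})$.

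Discarding samples with $\norm{X_{a,i}}>C_1\xi_R\sqrt n$, or with $Z_{a,i}=0$, removes exactly the terms that contribute zero to \cref{eqn:sparse-importance-estimator}. Therefore, for each $a$, the array $A_{a,\theta''}$ followed by the Walsh-Hadamard transform outputs $\widehat{\cN}^{\rm sp}_{a,j}(\theta',\theta'')$, by the same computation as \cref{eqn:sequential-coset-wht}. The entrywise median then yields $\widehat{\cN}^{\rm sp}_j(\theta)$.

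\cref{lem:sparse-importance-estimation} gives the same operator-norm guarantee as \cref{lem:importance-coset-estimation}. So \cref{lem:importance-coset-recovery} applies verbatim, with the sparse estimator in place of $\widehat{\cN}_j$, since its proof only uses that guarantee. Together with the numerical conditions on $(r,R,\chi)$ checked for \cref{eqn:importance-exponent}, this shows that one of the two $\BDD$ queries at iteration $\theta_*''$ returns $\pm v$. Collecting the failure probabilities, a single run succeeds with probability at least $p_0-O(n^{-1})$, where $p_0$ is the constant success probability of the $\BDD$ preprocessing in \cref{thm:bdd-preprocessing}. Since output vectors are verified, repeating a large constant number of times with fresh randomness and taking the shortest verified vector boosts the success probability to $2/3$.

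\textbf{Complexity.} Sampling $nM$ vectors takes $2^{(\max\{1/2,\iota+2r\})n+o(n)}$ time. Sampling processes one call of \cref{thm:dgs-sampling} at a time, so its memory stays $2^{n/2+o(n)}$.

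There are $2^{\ell-b}$ outer iterations. Each costs about $n\cdot(n^82^{2rn}+2^b)\poly(n)$ to build $n$ arrays and transform them, plus $2^{b+o(n)}$ for the eigenvector and $\BDD$ steps. This totals $2^{\ell+o(n)}=2^{(1-\chi)n+o(n)}$, using $2r<1/2$.

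Stored data consists of at most $n^82^{2rn}\le 2^{n/2}$ tuples, $n$ arrays of size $2^b$, and the $2^{o(n)}$ $\BDD$ advice, which together give space $2^{n/2+o(n)}$.

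The time is only expected time because the $\DGS$ preprocessing and sampling of \cref{thm:dgs-sampling} and \cref{thm:bdd-preprocessing} run in expected time. Time truncation, with abort, converts this into the stated bound at the cost of a constant loss in success probability, which Markov's inequality controls.

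\textbf{Main obstacle.} The delicate point is that \cref{lem:importance-coset-recovery} was proved for $\widehat{\cN}_j$, so I must argue it transfers to the sparse estimator. I would do this by observing that its proof invokes the estimator only through the bound of \cref{lem:importance-coset-estimation}. A second point is that the abort on too many stored tuples must be accounted for as part of the $O(n^{-1})$ failure probability rather than conditioned on, so that the distribution of the estimators is unaffected.
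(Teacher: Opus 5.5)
Your proposal is correct and follows essentially the same route as the paper. Correctness is inherited from \cref{thm:importance-full-space} by replacing \cref{lem:importance-coset-estimation} with \cref{lem:sparse-importance-estimation}, and the storage-cap abort is charged to the $O(n^{-1})$ failure probability of \cref{lem:selected-sample-count}. The time and space accounting over the $2^{\ell-b}$ outer iterations, using $2r<1/2$, matches the paper. Your extra observations (that discarded samples contribute only zero terms, and that \cref{lem:importance-coset-recovery} uses the estimator only through the operator-norm guarantee) just make explicit what the paper leaves implicit.
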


\begin{proof}
The correctness follows from the proof of
\cref{thm:importance-full-space}, replacing
\cref{lem:importance-coset-estimation} by
\cref{lem:sparse-importance-estimation}.
The additional abort probability is $O(n^{-1})$ by
\cref{lem:selected-sample-count}.

The samples are discarded after the selection step, so only
$2^{2rn+o(n)}$ tuples are stored. For each fixed $\theta''$, the
algorithm takes
$\left(2^{2rn}+2^b\right)2^{o(n)}$ time and
$2^{b+o(n)}$ space. Therefore, since $2r<1/2$ and
$b=\lfloor n/2\rfloor$, the total time is
$  2^{\ell-b}\left(2^{2rn}+2^b\right)2^{o(n)}
  =
  2^{(1-\chi)n+o(n)}.$
Together with the sampling time, this is
$2^{0.60387n+o(n)}$. The selected tuples, the arrays for one fixed
$\theta''$, the discrete Gaussian sampler, and the preprocessing
$\BDD$ data use $2^{n/2+o(n)}$ space.
A constant number of repetitions gives success probability at least
$2/3$.
\end{proof}

\subsection{Quantum algorithm}
We describe the quantum version of \cref{alg:best-svp} based on the quantum minimum finding algorithm \cite{DH96}. The application is rather straightforward: We search for $\theta'' \in \F_2^{\ell-b}$ that minimizes the shortest vector found by the preprocessing BDD algorithm. This reduces the complexity from $2^{\ell - b+o(n)} (2^{2rn}+2^b)$ to $2^{(\ell - b)/2+o(n)} (2^{2rn}+2^b)$.

In the quantum version, the collected tuples (of size $2^{2rn+o(n)}$) and the preprocessing BDD advice (of size $2^{o(n)}$) are stored in QRAM, which will be of size $2^{2rn+o(n)}$ at total.
The overall steps for the Hessian estimation, eigenvector evaluations, and the BDD queries can be done reversibly.
This gives the coherent oracle required by quantum minimum finding with only a $2^{o(n)}$ multiplicative overhead.

To maximize the speedup, we also choose $b=\lfloor 2rn\rfloor$ and other parameters by
\[
  r=0.180182,\qquad
  R=0.340429,\qquad
  \chi=0.278262.
\]
giving $2^{0.54106n+o(n)}$ time and $2^{0.5n+o(n)}$ space.
\begin{theorem}\label{thm:quantum-svp}
There is a quantum algorithm that solves
Search-$\SVP$ with success probability at least $2/3$ in expected
time
$  2^{0.54106n+o(n)}$
and space
$  2^{n/2+o(n)}.
$ The QRAM size is $2^{0.36036n+o(n)}$ and
$2^{0.36036n+o(n)}$ qubits.
\end{theorem}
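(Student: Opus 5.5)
The plan is to reuse the classical \cref{alg:best-svp} essentially unchanged up to the point where the selected tuples are stored. Quantum search then replaces the outer classical loop over $\theta''\in\F_2^{\ell-b}$. The parameters are $r=0.180182$, $R=0.340429$, $\chi=0.278262$ and $b=\lfloor 2rn\rfloor$.

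\textbf{Preconditions and correctness.} First I would check numerically that these parameters satisfy every hypothesis used so far:
\begin{itemize}
\item $r\ge 9/50$, needed for \cref{lem:parity-rank-one};
\item $R>t_0$ and $\chi<g_2(R)$, needed for \cref{lem:source-coset-sampling};
\item $s=rR/(2R-r)>t_0/2$ and $\chi<1/2+g(r)$, needed for \cref{lem:importance-weight-bounds,lem:sparse-importance-estimation};
\item $\chi-1-\min\{g(r),2g(r)\}+2r<0$, needed for \cref{lem:importance-coset-recovery}.
\end{itemize}
Here $r<t_0$, so $g(r)<0$ and the $2g(r)$ branch is the relevant one. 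Given these, the classical phase works as before. We sample $nM=2^{(\iota+2r)n+o(n)}$ vectors from $D_{\Lambda_j,\xi_R}$ sequentially and sparsify them. By \cref{lem:selected-sample-count}, at most $n^82^{2rn}=2^{0.36036n+o(n)}$ tuples remain. These tuples, together with the $2^{o(n)}$ BDD advice from \cref{thm:bdd-preprocessing}, are loaded into QRAM. By \cref{lem:sparse-importance-estimation,lem:importance-coset-recovery}, with probability $1-O(n^{-1})$ there exist $\theta_*=(\theta_*',\theta_*'')$ and a sign $\pm$ for which the BDD query at $dq_\pm$ returns $\pm v$.

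\textbf{Coherent oracle.} For each $\theta''$, I define $f(\theta'')$ as the minimum norm of a verified nonzero lattice vector among the $2^{b+1}$ BDD outputs obtained for $\theta'\in\F_2^b$, with $f(\theta'')=\infty$ if there is none. I would argue that $f$ can be computed reversibly in time $(2^{2rn}+2^b)2^{o(n)}$ and with $2^{b+o(n)}$ qubits. The computation runs as follows:
\begin{enumerate}
\item Build the arrays $A_{a,\theta''}$ by a QRAM scan over the stored tuples, applying the phase $(-1)^{\theta''\cdot k''}$.
\item Apply the Walsh--Hadamard transform of \cref{lem:matrix-wht} as a classical reversible circuit.
\item Take entrywise medians, compute eigenvectors to polynomial precision, and run the time-truncated BDD.
\item Keep a running minimum, then uncompute all workspace.
\end{enumerate}
Every step is a deterministic classical computation, so Bennett-style reversibilization costs only $2^{o(n)}$ overhead. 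Since $b\approx 2rn$, both the time and the qubit count are $2^{0.36036n+o(n)}$, well below $2^{n/2}$. Finite-precision eigenvector computation only perturbs $q$ by $2^{-\poly(n)}$, which is absorbed in the $O(n^{-1/2})$ slack of the BDD radius.

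\textbf{Search and time bound.} Dürr--Høyer minimum finding \cite{DH96} over the $2^{\ell-b}$ values of $\theta''$ uses $O(2^{(\ell-b)/2})$ oracle calls and finds the minimum with constant probability. The total time is therefore
\[
2^{(\iota(r,R)+2r)n+o(n)}+2^{(\ell-b)/2}(2^{2rn}+2^b)2^{o(n)} .
\]
With $\ell-b=(1-\chi-2r)n$, the exponents are $\iota+2r$ and $(1-\chi+2r)/2$, and I would check that both are at most $0.54106$ together with $1/2$. Space is $2^{n/2+o(n)}$, dominated by the DGS sampler and the preprocessing. Success probability is boosted to $2/3$ by a constant number of independent repetitions over all scales $d$, keeping the shortest verified vector.

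\textbf{Main obstacle.} I expect the hardest part to be the numerical certification that the chosen triple satisfies all the inequalities simultaneously, in particular $\chi<g_2(R)$. That condition involves the supremum in \cref{eqn:refined-source-exponent}, which requires careful evaluation of $B_{\rm KL}$. A second point is bookkeeping the constant error probabilities: the $1-O(n^{-1})$ events fix $P,j$ and the samples before the quantum search, so the minimum-finding error composes with them independently.
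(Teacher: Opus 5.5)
Your proposal is correct and is essentially the paper's argument: you keep the classical sampling and sparsification of \cref{alg:best-svp}, load the $2^{2rn+o(n)}$ selected tuples and the BDD advice into QRAM, implement the per-$\theta''$ Walsh--Hadamard, eigenvector and BDD computation with $b=\lfloor 2rn\rfloor$ as a coherent oracle, and run quantum minimum finding \cite{DH96} over $\theta''\in\F_2^{\ell-b}$, balancing $\iota(r,R)+2r$ against $(1-\chi+2r)/2\approx 0.54105$. One minor remark on your ``main obstacle'': by my own numerical evaluation, these parameters already give $\chi<g(R)$ (about $0.2782637$ versus $\chi=0.278262$), so the refined $g_2$ is not needed, and the delicate checks are instead that this inequality and $\chi-1-2g(r)+2r<0$ each hold only with a margin of order $10^{-6}$.
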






  \bibliographystyle{plain}
  \bibliography{abbrev3,crypto,ref}

\end{document}